\documentclass[manuscript]{acmart}
\acmJournal{TQC}
\usepackage[T1]{fontenc}
\usepackage{graphicx}

\usepackage{amsmath, hyperref}
\usepackage{amsthm}
\usepackage{enumitem}
\usepackage{dsfont}
\usepackage{pdfpages}
\usepackage{esint}
\usepackage{xfrac}
\usepackage{comment}
\usepackage{complexity}
\usepackage[dvipsnames]{xcolor}
\usepackage{xspace}
\usepackage{physics}
\usepackage{braket}
\usepackage{algorithm}
\usepackage[noend]{algpseudocode}
\usepackage{cleveref}
\usepackage{tikz}
\usepackage{mleftright}

\usepackage{caption}
\usepackage{subcaption}

\usepackage{parskip}

\usepackage[disable]{todonotes}

\usepackage{thmtools}
\usepackage{thm-restate}

\usepackage{tikz}
\usetikzlibrary{quantikz2}

\newtheorem{note}{Note}

\DeclareMathOperator*{\argmin}{arg\,min}

\usetikzlibrary{automata}

\newcommand\defmath[2]{\newcommand#1{\ensuremath{#2}\xspace}}
\newcommand\concept[1]{\textit{#1}}

\undef\co
\newcommand{\co}{\mathds{C}}

\defmath{\bool}{\mathbb{B}}
\defmath{\complex}{\mathbb{C}}
\defmath{\integers}{\mathbb{Z}}

\definecolor{rowcolor}{RGB}{223, 213, 242}
\definecolor{colcolor}{RGB}{196, 181, 240}

\definecolor{origgreen}{RGB}{196, 181, 240}
\definecolor{origblue}{RGB}{196, 181, 240}
\definecolor{origpurple}{RGB}{196, 181, 240}
\definecolor{origred}{RGB}{196, 181, 240}

\tikzstyle{leaf}=[draw, rectangle,minimum size=4.mm, inner sep=3pt]
\tikzstyle{var}=[circle,draw=black!70,solid,thick,minimum size=6mm]
\tikzstyle{bdd}=[regular polygon, regular polygon sides=3, draw=black!70,solid,thick,inner sep=0.5mm]
\tikzstyle{n}=[->,loosely dashed,thick]
\tikzstyle{p}=[->,solid,thick]
\tikzstyle{b}=[->,densely dashdotted,ultra thick]
\tikzstyle{e0}[0]=[dotted,thick,bend right=#1]
\tikzstyle{e1}[0]=[solid, bend left =#1]

\tikzstyle{lbl}=[draw,fill=white,inner sep=2pt, minimum size=0cm,line width=.5pt]

\renewcommand\index{\textsf{idx}\xspace}

\newcommand\rootlabel{\textsf{RootLabel}\xspace}
\newcommand\unique{\textsc{Unique}\xspace}
\newcommand\cache{\textsc{Cache}\xspace}

\newcommand\Edge{\textsc{Edge}\xspace}
\newcommand\Node{\textsc{Node}\xspace}

\newcommand\Pauli{\textsc{Pauli}\xspace}
\newcommand\pauli{\Pauli}
\newcommand\makeedge{\textsc{MakeEdge}\xspace}

\newcommand\follow[2]{\ensuremath{\textsc{follow}_{#1}(#2)}}
\newcommand\getautomorphisms{\textsc{GetStabilizerGenSet}}

\newcommand\low[1]{\ensuremath{\textsf{low}(#1)}}
\newcommand\high[1]{\ensuremath{\textsf{high}(#1)}}

\newcommand{\limdd}{\textsf{LIMDD}}
\newcommand\rootlim{B_{\textnormal{root}}}

\newcommand\highlim{B_{\textnormal{high}}}

\newcommand{\id}{\mathds{I}}

\newcommand{\beforeq}{\preccurlyeq}

\tikzstyle{leaf}=[draw, rectangle,minimum size=4.mm, inner sep=3pt]

\newlength{\pgfcalcparm}
\newlength{\pgfcalcparmm}

\DeclareRobustCommand{\leafedge}[2][]{%
  \pgftext{\settowidth{\global\pgfcalcparm}{\scriptsize $\,\,\,#2\,\,\,$}}%
  \raisebox{-.8mm}{%
  \tikz{%
    \node[inner sep=0pt] (x){};%
    \node[right=\pgfcalcparm of x,leaf](v){\scriptsize 1};%
    \draw (x) to node[above,pos=.5]{\scriptsize $\,#2\,\,$} (v);%
  }%
  }%
}

\DeclareRobustCommand{\ledge}[3][]{%
  \pgftext{\settowidth{\global\pgfcalcparm}{\scriptsize $\,\,\,#2\,\,\,$}}%
  \raisebox{-.8mm}{%
  \tikz{%
    \node[inner sep=0pt] (x){$#1\,\,$};%
    \node[state,inner sep=0pt,minimum size=10pt,right=\pgfcalcparm of x](v){\scriptsize $#3$};%
    \draw (x) to node[above,pos=.5]{\scriptsize $\,#2\,\,$} (v);%
  }%
  }%
}

\DeclareRobustCommand{\lnode}[5][]{%
    \pgftext{\settowidth{\global\pgfcalcparm}{\scriptsize $\,\,\,#2\,\,\,$}}%
    \pgftext{\settowidth{\global\pgfcalcparmm}{\scriptsize $\,\,\,#4\,\,\,$}}%
  ~\raisebox{-.mm}{%
  \tikz{%
  \vspace{-1mm}%
    \node[state,inner sep=0pt,minimum size=10pt] (v){\scriptsize $#1$};%
    \node[state,inner sep=0pt,minimum size=10pt,left=\pgfcalcparm of v](v0){\scriptsize $#3$};%
    \draw[dotted] (v) to node[above,pos=.45]{\scriptsize $#2$} (v0);%
    \node[state,inner sep=0pt,minimum size=10pt,right=\pgfcalcparmm of v](v1){\scriptsize $#5$};%
    \draw (v) to node[above,pos=.45]{\scriptsize $#4$} (v1);%
  }%
  }%
}

\def\qisq{\mathfrak{F}}

\newcommand{\redacted}[1]{**REDACTED**}

\def\cktstates{\mathcal{S}}
\def\algebraic{{\texttt{algebraic}}}
\def\float{{\texttt{float}}}

\tikzset{every picture/.style={->,thick}}

\title{
Exact quantum decision diagrams with scaling guarantees for Clifford+$T$ circuits and beyond
}

\author{Arend-Jan Quist}
\orcid{0000-0002-6501-2112}
\affiliation{%
\institution{Leiden University}
\city{Leiden}
\country{The Netherlands}}
\author{Tim Coopmans}
\orcid{0000-0002-9780-0949}
\affiliation{%
\institution{Delft University of Technology}
\city{Delft}
\country{The Netherlands}}
\author{Alfons Laarman}
\orcid{0000-0002-2433-4174}
\affiliation{%
\institution{Leiden University}
\city{Leiden}
\country{The Netherlands}}
\email{{a.quist,a.w.laarman}@liacs.leidenuniv.nl, t.j.coopmans@tudelft.nl}

\begin{document}

\begin{abstract}
A decision diagram (DD) is a graph-like data structure 
for homomorphic compression of Boolean and pseudo-Boolean functions.
Over the past decades, decision diagrams have been successfully applied to verification, linear algebra,
stochastic reasoning, and quantum circuit analysis.
Floating-point errors have, however, significantly slowed down practical implementations of real- and complex-valued decision diagrams.
In the context of quantum computing, attempts to mitigate this numerical instability have thus far lacked theoretical scaling guarantees and have had only limited success in practice.
Here, we focus on the analysis of quantum circuits consisting of Clifford gates and $T$ gates (a common universal gate set). 
We first hand-craft an algebraic representation for complex numbers, which replace the floating point coefficients in a decision diagram.
Then, we prove that the sizes of these algebraic representations are linearly bounded in the number of $T$ gates and qubits, and constant in the number of Clifford gates.
Furthermore, we prove that both the runtime and the number of nodes of decision diagrams are upper bounded as $2^t \cdot \poly(g, n)$, where $t$ ($g$) is the number of $t$ gates (Clifford gates) and $n$ the number of qubits.
Our proofs are based on a $T$-count dependent characterization of the density matrix entries of quantum states produced by circuits with Clifford+$T$ gates, and uncover a connection between a quantum state's stabilizer nullity and its decision diagram width.
With an open source implementation, we demonstrate that our exact method resolves the inaccuracies occurring in floating-point-based counterparts and can outperform them due to lower node counts.
Our contributions are, to the best of our knowledge, the first scaling guarantees on the runtime of (exact) quantum decision diagram simulation for a universal gate set.
\end{abstract}

\maketitle

\section{Introduction}

A decision diagram~\cite{bryant86} (DD) is a data structure designed originally for the succinct representation and fast manipulation of Boolean functions $f \colon \{0, 1\}^n \to \{0,1\}$, or, equivalently, of binary vectors $\vec{v}_f$ of length $2^n$.
By representing the function as a directed acyclic graph (DAG) and merging isomorphic sub-graphs, DDs avoid storing equal sub-functions that arise from different partial assignments to the Boolean variables.
For example, if the function $f$ satisfies equality between the two subfunctions $f(1, \dots, 1,  x_k, \dots, x_n) = f(0, \dots, 0,  x_k, \dots, x_n)$, then we may save $2^{n-k+1}$ bits for storing $f$ by only storing this subfunction once. A DD implements this by merging all nodes representing the same subfunction.%

By putting weights on the decision diagram leaves~\cite{fujita1997multi} (Multi-Terminal Binary Decision Diagram; MTBDD) or on its edges~\cite{lai1994evbdd,tafertshofer1994factored,tafertshofer1997factored} (Edge-Valued Binary Decision Diagram; EVDD), decision diagram can represent pseudo-Boolean functions $f: \{0, 1\}^n \mapsto V$ where, for instance, $V = \mathbb{R}$ or $\mathbb{C}$, or, equivalently, real- and complex-valued vectors.
Edge-valued decision diagrams merge sub-graphs that represent \emph{equivalent} instead of fully equal sub-functions, e.g. when there exists a constant $\lambda \in \mathbb{C}$ such that $f(1, \dots, 1,  x_k, \dots, x_n) = \lambda \cdot f(0, \dots, 0,  x_k, \dots, x_n)$ for all $x_k, x_{k+1}\, \dots, x_n \in \{0, 1\}$.
This can yield exponentially better compression compared to a diagram that only stores values in the leaves and not on the edges~\cite{fargier2014knowledge,fargier2013semiring}.
Going beyond vectors, decision diagrams are also able to represent matrices~\cite{fujita1997multi,mcmillan,bryant1995verification} or higher-dimensional tensors~\cite{hong2022tensor}, enabling matrix-vector multiplication and other linear-algebra tasks.

Since exponentially-sized, yet highly structured complex vectors and matrices are at the core of quantum information, decision diagrams are a natural tool for reasoning over quantum systems.
Indeed, tailored algorithms for decision diagrams have been developed in the past for reasoning over quantum circuits and have proven successful for various tasks, for example, in classical simulation of quantum circuits~\cite{miller2006qmdd,vinkhuijzen2023efficient}, equivalence-checking of quantum circuits~\cite{burgholzer2020improved}, and synthesis~\cite{niemann2020advanced}.
Such applications represent the quantum state vector as a decision diagram: for example, for (strong) quantum-circuit simulation in particular, a decision diagram is first created for the input state $\ket{\phi_0}=\ket{0}^{\otimes n}$, after which by iteration over the gates in the circuit, each time an algorithm is called to update the DD representation of the quantum state $\ket{\phi_i}$ and replace it by $\ket{\phi_{i+1}}$. See \Cref{fig:circuit_example} for an example. A separate algorithm simulates measurement, extracts information, and updates the DD.
Other successful applications of continuous decision diagrams include probabilistic model checking and stochastic reasoning~\cite{sanner2005AffineADDs}, as well as linear algebra with sparse and/or highly structured matrices in other contexts~\cite{fujita1997multi}.

\begin{figure}[tbh]
    \centering
    \input{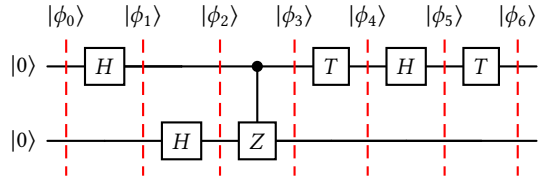}
    \caption{An example of a quantum circuit with 2 qubits and 6 gates. Simulating this circuit consists of first creating the input state $\ket{\phi_0}=\ket{0}^{\otimes 2}$. Then, iteratively, the states $\ket{\phi_i}$ are updated according to the next gate and replaced by the state $\ket{\phi_{i+1}}$. \\
    \Cref{fig:leadingExample} shows the simulation of this circuit with decision diagrams, where the states $\ket{\phi_i}$ are represented by an EVDD and LIMDD.}
    \label{fig:circuit_example}
\end{figure}

In this work, we focus on the application of two types of decision diagrams to quantum circuits, namely the well-developed Edge-Valued Decision Diagram~\cite{lai1994evbdd,tafertshofer1994factored,tafertshofer1997factored} (EVDD; see also the Quantum Multi-Valued Decision Diagram~\cite{miller2006qmdd}) and its successor Local Invertible Map Decision Diagram~\cite{vinkhuijzen2023limdd} (LIMDD).
LIMDD was proposed to remedy a limitation of EVDD: It cannot efficiently simulate Clifford circuits, despite the fact that this problem is classically tractable using the so-called \concept{stabilizer formalism}~\cite{aaronson2004improved}.
LIMDD effectively combines the stabilizer and decision diagram formalisms by merging nodes if they are equal up to a constant factor $\lambda$ (as EVDD), \emph{and} up to local Pauli operators (i.e., $X, Y, Z$) that capture rotational symmetries in the stabilizer formalism.
As a consequence, LIMDD can polynomially simulate both Clifford circuits and EVDD, and is in the best case exponentially more succinct than their union~\cite{vinkhuijzen2024a}. %
EVDDs have had some success in quantum-circuit analysis~\cite{zulehner2018advanced}, whereas the theoretically proven exponential improvement of LIMDDs over EVDDs has not yet materialized in implementation.

In contrast to the binary case, software implementations of decision diagrams for pseudo-Boolean functions, such as EVDD and LIMDD, suffer from finite floating-point precision.
The reason for this is that merging two DD nodes typically requires the function (or vector) that they represent parts to be \emph{exactly} equivalent.
In implementations, however, due to small errors arising from finite-precision arithmetic on real and complex numbers, vector parts are no longer exactly equivalent, forcing the decision diagram to either duplicate nodes that represent equal subfunctions or to consider epsilon-equivalence, risking unjustified merges due to rapidly propagating errors~\cite{brand2025numerical}.
The accumulation of finite-accuracy errors after many decision-diagram operations has indeed proven disastrous in practice, undoing some of the exponential improvements that DDs achieve with respect to, e.g., standard matrix-vector operations or between the various DD variants~\cite{sanner2005AffineADDs,brand2025numerical}{do we actually have a reference for this claim?}.\footnote{Scott Sanner: ``Since 2005 this problem [blow ups due to numerical errors] vexes me. It's ruined a year of a PhD's life without a solution." (\url{https://www.youtube.com/watch?v=Xi7yOXi--Kc&list=WL})} 
See \autoref{sec:motivation} for a motivating example and existing work.

Moreover, the accumulation of finite-accuracy errors during simulation may also lead to a final quantum state that is substantially different from the actual quantum state, rendering the simulation useless~\cite{niemann2020overcoming,brand2025numerical}. Since it is not clear from the simulation result itself whether it was computed correctly, the reliability of quantum circuit simulation with floating-point-based DD implementations is limited. 

Recent progress has increased the understanding of numerical accuracy in decision diagrams. For MTBDDs, analysis demonstrates that the bound on the worst-case error is reasonable~\cite{brand2025numerical}. For non-canonical data structures, admitting only real nonnegative weights, a provably efficient approximation of complex numbers exists bounding the error to arbitrary precision~\cite{bryant2025numerical}. As quantum computing requires complex weights, or at least negative weights~\cite{mei2024eq,Mei_2024}, and MTBDD has been outperformed by EVDD and LIMDD in both theory and practice, these previous works are unsatisfactory for quantum circuit analysis.

In this paper, we therefore adopt an exact algebraic representation of the weights occurring in quantum circuits. While it is known from Kitaev's theorem that there are finite universal gate sets for quantum computing, and hence that all the necessary weights can be captured with an extended ring~\cite{giles2013exact}, little is known about the complexity of these representations in terms of the original quantum circuit.
This leads to our first challenge.

\textit{
\textbf{Challenge 1:} What is the complexity of solving finite-accuracy errors in decision diagrams with an algebraic representation?
}

A slightly related problem ---in the sense that it also limits the applicability of decision diagrams--- is the lack of scaling guarantees for the size of decision diagrams for quantum circuit simulation. In practice we see that the number of nodes in decision diagrams during simulation can vary hugely between different (types of) quantum circuits. In the literature, there is only a limited understanding which circuits require large or small decision diagrams to simulate.

The field of knowledge compilation~\cite{darwiche2002knowledge} gives some bounds by studying the effect on the size of a DD when applying operations (such as gate updates and measurements). %
Here, DD operations are studied in isolation~\cite{vinkhuijzen2024a,fargier2014knowledge}, leading to statements of the form: `Gate $G$ can be applied to a $EV/LIM$DD in time $poly(n)$, increasing the size of the DD by a factor at most $poly(n)$.' However, when simulating a circuit consisting of multiple gates, this gives exponential bounds on the DD representing the final quantum state, even if every gate only doubles the DD size. 
We intend to improve on this by pursuing the following challenge.

\textit{
\textbf{Challenge 2:} Can we improve existing bounds on DD size for quantum circuit simulation by extending single gate analysis to an entire quantum circuit?
}

Since the problem of simulating quantum circuits is not expected to be tractable (let alone their analysis), the best we can hope for is fixed-parameter tractability (fpt). Indeed, it has been shown before that simulating the Clifford + $T$ gate set is fpt in the number of $T$ gates~\cite{bravyi2019simulation}.
In this paper, 
we also consider simulation of $n$-qubit circuits with an unbounded number of Clifford gates but a fixed number $t$ of $T$ gates, which we refer to as the ``$T$ count.'' Resolving both challenge 1 and 2 yields a novel fpt-method for quantum circuit simulation that is also implementable with exactness guarantees.

\subsection{Contributions}
Our contributions address the challenges described above and demonstrate the practicality of the proposed approach:

\begin{enumerate}
\item In \Cref{sec:coeff-bounds}, we resolve the numerical inaccuracy issue by providing an algebraic representation of the complex numbers in the edge values of EVDDs and LIMDDs. The size of the algebraic representation scales linearly in $t$ and $n$ when simulating an $n$-qubit Clifford+$t\times T$ circuits, i.e., independently of the number of Clifford gates.
This enables \emph{exact} simulation of such circuits with algebraic expressions instead of floating point numbers.
This goes beyond the work of~\cite{giles2013exact}, which characterizes the \textit{entries} of a \textit{unitary} representing a Clifford+$t \times T$ circuit only for the limit case $t\to\infty$, while our work characterizes \textit{DD edge values} of a \textit{quantum state} produced by a Clifford+$t \times T$ circuit for any (finite) value of $t$ asymptotically.

Our proof consists of two steps.
First, we prove that the entries of the density-matrix representing the state computed by an $n$-qubit Clifford+$t\times T$ circuit multiplied by $2^n$ lie in a set $Q_{n,t} \subseteq \mathbb{C}$, where each element of $Q_{n,t}$ can be written using only linearly many bits in $t$ and $n$.
Our next ingredient is the insight that a characterization of density-matrix entries translates to a characterization of decision-diagram edge labels when the state vector is represented as either EVDD or LIMDD, while preserving the space usage linear in $t$ and $n$.

Finally, we extend this result
to quantum circuit \textit{simulation} by analyzing DD operations.

\item In \Cref{sec:numnodes_bound}, our second contribution is a bound on the number of decision diagram nodes 
when performing quantum-circuit simulation of Clifford+$t\times T$ circuits, for both EVDD and LIMDD, assuming that simulation is done algebraically (which we have proven can be done efficiently in our first contribution).
Specifically, we prove that the output state of a Clifford+$t\times T$ circuit is represented by a LIMDD of $O(2^t)$, while proving a weaker statement for EVDD.

Our proof is based on the observation that the decision diagram width is bounded as function of the (local) stabilizer nullity~\cite{beverland2020lower} of the state it represents which is unchanged by Clifford gates while each $T$-gate can only increment the nullity by one.
We also extend our results to the universal Clifford + Toffoli gate set.

\item 
An open-source implementation, extending the decision diagram package Q-Sylvan~\cite{brand2025q} with algebraic representation from \autoref{sec:coeff-bounds}, shows that the new exact method outperforms the floating-point EVDD implementation and can solve cases where the latter is impractical (\autoref{sec:experiments}).
Our results thus show the potential of DD-based quantum-circuit analysis, and provide new proof techniques for overcoming size blow-up for other continuous-valued decision diagram applications.
\end{enumerate}

Combining the two first results, we prove that \emph{exact} simulation with LIMDDs is possible with space polynomial in the number of qubits and Clifford gates, and exponential in the number of $T$ gates, i.e., that LIMDD simulation is \emph{fixed-parameter tractable}~\cite{cygan2015parameterized} in the $T$-count. As decision diagram algorithms are inherently recursive, producing no nodes that are not part of the final output~\cite{wegener2000branching}, node number bounds automatically translate into runtime bounds.
Therefore, we can rephrase and summarize our theoretical result in the following corollary:
\begin{corollary}
Exact simulation of a Clifford+$T$ circuit with LIMDD requires at most exponential time in the number of $T$ gates and polynomial time in the number of Clifford gates and the number of qubits.\\
For EVDD, it is exponential in the minimum of $H$ gates and the number of $CZ$ and $T$ gates, and polynomial in the number of other single qubit Clifford gates and the number of qubits.
\end{corollary}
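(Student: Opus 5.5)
The corollary combines two ingredients that the introduction announces for later sections. The first is the coefficient-size bound of \Cref{sec:coeff-bounds}. The second is the node-count bound of \Cref{sec:numnodes_bound}. The plan is to show that the total runtime of a simulation is at most the product of three quantities: the number of gate applications, the cost of each recursive DD operation measured in nodes touched, and the cost of each arithmetic operation on an edge weight. Each factor is then bounded in terms of $t$, $g$ and $n$.

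\textbf{Step 1: bit size of the weights.} A Clifford+$t\times T$ circuit on $n$ qubits produces intermediate states $\ket{\phi_i}$. Consider $2^n$ times an entry of the density matrix of $\ket{\phi_i}$. I would first show that these lie in the ring $\mathbb{Z}[\tfrac{1}{\sqrt2}, e^{i\pi/4}]$ with denominator exponent and numerator size $O(n+t)$. The argument tracks the state as a stabilizer state acted on by $t$ non-Clifford gates. Equivalently, one writes $\ket{\phi_i}\bra{\phi_i}$ via the Pauli expansion $2^{-n}\sum_P c_P P$, where the coefficients $c_P$ change only under $T$ gates. Each $T$ gate maps a Pauli to a combination of Paulis with coefficients $\tfrac{1}{\sqrt2}$, which adds $O(1)$ bits. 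Clifford gates merely permute Paulis and add signs.

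The edge labels are normalized ratios of amplitudes. Taking the product of an amplitude with a conjugate amplitude turns them into ratios of density-matrix entries. Hence each label is expressible as a pair of elements of $Q_{n,t}$, which gives $O(n+t)$ bits per label. Ring operations and the normalization needed inside \makeedge therefore cost $\poly(n,t)$ time, and this cost does not depend on $g$.

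\textbf{Step 2: node counts.} For LIMDD, I would bound the width at each level by $2^{\nu}$, where $\nu$ is the local stabilizer nullity of the state. The local stabilizer nullity is invariant under Cliffords and increases by at most $1$ per $T$ gate, so $\nu\le t$. This gives $O(n2^t)$ nodes for every intermediate state.

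For EVDD, Pauli equivalence is not available, so only diagonal and permutation symmetries can be absorbed. Gates such as $S$, $Z$, $X$ and $T$ only reweight or permute edges and never increase the node count. A $CZ$ only reweights. Branching comes only from superposition-creating gates, namely $H$. Two counts then bound the width. The first is $2^{\#H}$, since $H$ is the only gate that creates new subfunctions. The second is $2^{\#CZ+\#T}$. To obtain it I would argue that a state produced with few $CZ$ and $T$ gates is close to a product of single-qubit states, with phase-polynomial structure, so that each such gate at most doubles the width. Taking the minimum of the two bounds gives the claim.

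\textbf{Step 3: runtime.} The recursive apply and gate-update algorithms are memoized, so the time per gate is polynomial in the sizes of the input and output diagrams. Multiplying by $g+t$ gates and by the $\poly(n,t)$ arithmetic cost from Step 1 yields $2^t\cdot\poly(g,n)$ for LIMDD. The same composition gives the EVDD statement. One subtlety is that the output of a single gate could in principle be larger than the final state. Step 2 avoids this because it bounds every intermediate state, not just the last one.

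\textbf{Main obstacle.} I expect the hardest part to be the nullity-to-width step for LIMDD. One has to show that the subfunctions at level $k$, taken modulo local Paulis and scalars, span a set of size at most about $2^{\nu}$. My first attempt would use the following argument. The stabilizer group of the state restricted to the top $k$ qubits, together with its complement, partitions the subfunctions into cosets. Each coset is a single LIM-equivalence class. The number of cosets is bounded by $2^{\nu}$ because the state lies in a stabilizer code of dimension $2^{\nu}$.
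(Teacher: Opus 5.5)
Your overall architecture matches the paper's: short coefficients, node bounds that hold for every prefix of the circuit, and memoized recursion turning node counts into runtime. The node-count step, however, has genuine gaps.

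\textbf{EVDD.} The premises that ``a $CZ$ only reweights'' and that ``$H$ is the only gate that creates new subfunctions'' are false. $\ket{+}\ket{+}$ has EVDD width $1$, since both root edges point to the node $\ket{0}+\ket{1}$. After $CZ_{1,2}$, the two level-2 subfunctions are $\ket{0}+\ket{1}$ and $\ket{0}-\ket{1}$, which are not scalar multiples, so the width doubles. ``Each branching gate at most doubles the width'' is also not a per-gate fact you can multiply out. A single $H$ on qubit $k$ replaces every deeper subfunction by $\phi_{x0y}\pm\phi_{x1y}$, so the new count is governed by the number of distinct pairs $(\phi_{x0y},\phi_{x1y})$, not by twice the old width. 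Your second bound $2^{\#CZ+\#T}$ is only gestured at, and it is stronger than what the paper proves ($2^{2\#CZ+\#T}$).

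The missing idea is a global invariant rather than per-gate doubling: count independent single-qubit Pauli stabilizers $\pm P_k$.
\begin{itemize}
\item The $\pm Z_k$ stabilizers of $\ket{0}^{\otimes n}$ are preserved by every gate except $H$, which turns one of them into $\pm X_k$. So at least $n-\#H$ survive.
\item Local stabilizers are destroyed only by $T$ (at most one each) and $CZ$ (at most two each). So at least $n-(2\#CZ+\#T)$ survive.
\item A local stabilizer $\pm P_k$ forces every level-$k$ node to have its two children equal up to a scalar, or one child zero. Hence the width can double only at the remaining levels (\Cref{lemma:h-cz-t-count_EVDD}, \Cref{theorem:EVDD_width_from_number_of_stabilizers}).
\end{itemize}

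\textbf{LIMDD nullity.} You use ``local stabilizer nullity'' here, but that quantity is not Clifford-invariant: the same $CZ$ on $\ket{+}\ket{+}$ raises it from $0$ to $2$ with no $T$ gate. You need the full nullity $\nu=n-\log_2|S|$. Cliffords preserve it, and a $T$ gate raises it by at most one, after row-reducing the tableau so that only one generator has $X$ or $Y$ on the target qubit.

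\textbf{LIMDD width.} Part of your coset idea is right: if $\lambda X^aZ^b\otimes Q\in S$ with $X^aZ^b$ on the top $k$ qubits, then $\phi_x$ and $\phi_{x\oplus a}$ are LIM-equivalent. But the number $2^{k-\dim V}$ of cosets of $V=\{a\}$ is not bounded by $2^{\nu}$. For $\ket{0}^{\otimes n}$ you get $V=\{0\}$, hence $2^k$ cosets, while $\nu=0$. To finish the argument:
\begin{itemize}
\item Use the stabilizers $\pm Z^b\otimes I$, with $b$ in a subspace $B_0$, to show $\phi_x=0$ outside an affine subspace of codimension $\dim B_0$.
\item Note that $V\subseteq B_0^{\perp}$ by commutativity.
\item Observe that the bottom parts $Q$ of stabilizers $\lambda Z^b\otimes Q$ pairwise commute, so the subgroup with diagonal top part has dimension at most $\dim B_0+(n-k)$.
\end{itemize}
Together these give at most $2^{k-\dim V-\dim B_0}\le 2^{\nu}$ nonzero classes at depth $k$.

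The paper avoids this counting with a short induction on the top node (\Cref{lemma:splitCase}, \Cref{thm:m_stabs_LIMDD_width}). If some stabilizer has $X$ or $Y$ on the top qubit, the two children are Pauli-equivalent and the child keeps at least $m-1$ independent stabilizers. Otherwise both children keep all $m$. So the width is at most $2^{n-m}$.

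\textbf{Step 1.} Your ``therefore'' skips what \Cref{thm:edgelabelbound-dynamic} has to verify. The numbers computed inside \textsc{Add}, \makeedge and \textsc{GetLabels} are not all edge labels of the input or output diagram. Also, the root label is an amplitude rather than a ratio, which is why the paper stores $|r|^2$ instead.
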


\section{Preliminaries \label{sec:prelims}}

\subsection{Quantum computing}
\label{sec:qc}

The state of a set of $n$ quantum bits, the units of quantum computing~\cite{nielsen2010quantum}, is described by a normalized complex vector of $2^n$ numbers, typically denoted $\ket{v}$ in Dirac notation:
\begin{equation*}
    \ket{v} = \begin{pmatrix}
        v_{0}&
        v_{1}&
        v_{2}&
        \dots&
        v_{2^n-1}
    \end{pmatrix}^{\top}, \qquad \forall j: v_{j}\in\co, \qquad \text{such that }\sum_{j=0}^{2^n-1}|v_j|^2=1.
\end{equation*}
Alternatively, we can write $\vec{v} = \sum_{j \in \{0, 1\}^n} v_j \ket{j}$ where we interpret the subscript $j$ in $v_j$ as integer in binary notation. Here, the states $\ket{00\dots0}, \ket{00\dots 01}, \dots, \ket{11\dots1}$ are the $n$-qubit computational-basis states, defined as $\ket{j}_k = 1$ if $k=j$ and $0$ otherwise, for $k \in \{0, 1\}^n$.

Quantum gates transform one $n$-qubit quantum state to another $n$-qubit quantum state.
An $n$-qubit gate is described by an $2^n \times 2^n$ unitary matrix $U$ and the update is performed by matrix-vector multiplication: $\ket{v} \mapsto U\cdot \ket{v}$.
Common single-qubit gate are the Pauli gates:
\begin{equation*}
    I=\begin{pmatrix}
        1&0\\
        0&1
    \end{pmatrix},
    \quad
    X=\begin{pmatrix}
        0&1\\
        1&0
    \end{pmatrix},
    \quad
    Y=\begin{pmatrix}
        0&-i\\
        i&0
    \end{pmatrix},
    \quad
    Z=\begin{pmatrix}
        1&0\\
        0&-1
    \end{pmatrix}.
\end{equation*}

The state of the disjoint union of two registers of $n_A$ resp. $n_B$ quantum bits, given by quantum states $\ket{\phi_A}, \ket{\phi_B}$, is given by the $n_A + n_B$-qubit state $\ket{\phi_A} \otimes \ket{\phi_B}$ where $\otimes$ denotes Kronecker product.
A Kronecker product of two matrices $A$ and $B$ is defined by 
\begin{equation*}
    A\otimes B = 
    \begin{pmatrix}
        a_{11}& a_{12}&\dots& a_{1l}\\
        a_{21}& a_{22}&\dots& a_{2l}\\
        \vdots&\vdots &&\vdots\\
        a_{k1}& a_{k2}&\dots& a_{kl}
    \end{pmatrix}\otimes B
    = \begin{pmatrix}
        a_{11}B& a_{12}B&\dots& a_{1l}B\\
        a_{21}B& a_{22}B&\dots& a_{2l}B\\
        \vdots&\vdots &&\vdots\\
        a_{k1}B& a_{k2}B&\dots& a_{kl}B
    \end{pmatrix}
\end{equation*}
where $a_{ij}B$ indicates the submatrix $B$ multiplied by the scalar $a_{ij}$.

Similarly, applying $n_A$-qubit unitary $U_A$ in parallel with $n_B$-qubit unitary $U_B$ is the same as the application of the unitary $U_A \otimes U_B$.
The identity gate $I$ is used for the application of a gate to a subregister only, e.g., $I\otimes H\otimes I\otimes\dots\otimes I$ is the application of $H$ to the second qubit and $I\otimes I\otimes T\otimes I\otimes\dots\otimes I$ is the application of $T$ to the third qubit.
We will denote $U_k$ for applying the gate $U$ to qubit $k$.
Sequential composition of unitaries $U_A, U_B$ corresponds to the matrix product $U_A \cdot U_B$.

The closure of the gate set $\{H, S=T^2, CZ\}$
under parallel and sequential composition is called the `Clifford group'.
Together with the $T$ gate, the set becomes a universal gate set~\cite{dawson2005solovaykitaev}: any unitary can be approximated arbitrarily close by circuits using only gates from this set. In this work, we mainly consider Clifford+$t\times T$ circuits, i.e., circuits with $t$ $T$ gates and any number of Clifford gates.
\begin{equation*}
    T=\begin{pmatrix}
        1&0\\
        0&\frac{1 + i}{\sqrt{2}}
    \end{pmatrix} \quad
    H=\frac{1}{\sqrt{2}}\begin{pmatrix}
        1&1\\
        1&-1
    \end{pmatrix} \quad 
    CZ=\left(\begin{smallmatrix}
    1&0&0&0\\
    0&1&0&0\\
    0&0&1&0\\
    0&0&0&-1
    \end{smallmatrix}\right)
    \quad
    CNOT=\left(\begin{smallmatrix}
    1&0&0&0\\
    0&1&0&0\\
    0&0&0&1\\
    0&0&1&0
    \end{smallmatrix}\right)
\end{equation*}
When applied to computational-basis state $\ket{x}$ with $x\in \{0, 1\}^n$, the $CZ_{a,b}$ ($CNOT_{a,b}$) gate applies an $Z_b$ gate ($X_b$ gate) to $\ket{x}$ if $x_a = 1$ and does nothing otherwise.
Finally, the SWAP gate $SWAP_{a,b} = CNOT_{a, b} CNOT_{b,a} CNOT_{a,b}$ swaps two qubits.

Extracting information from a register of $n$ quantum bits can be done by measurement. First-qubit computational-basis measurement is a random process that returns $j \in \{0, 1\}$ with probability $\sum_{l\in\{0,\dots,2^{n-1}-1\}}|v_{l+2^{n-1}j}|^2$. By (strong) simulation of a quantum circuit, we mean: given an $n$-qubit quantum circuit $U$ and some $j\in \{0, 1\}$, find the probability of obtaining outcome $j$ when measuring $U\ket{0}^{\otimes n}$~\cite{chen2023qseth}.

Instead of as a vector $\vec{v}$, a quantum state can also be represented as a density matrix $\ket{v} \cdot \bra{v}$ where $\bra{v}$ is the complex transpose of the vector $\ket{v}$.
For example, the state $\ket{v}=\frac{1}{\sqrt{2}}\left(\begin{smallmatrix}
    1\\i
\end{smallmatrix}\right)$ has density matrix $\frac{1}{\sqrt{2}}\left(\begin{smallmatrix}
    1\\i
\end{smallmatrix}\right)
\cdot
\frac{1}{\sqrt{2}}(1,-i) = \frac{1}{2}\left(\begin{smallmatrix}
    1&-i\\
    i&1
\end{smallmatrix}\right).$
Applying a quantum gate $U$ to a density matrix $\rho$ becomes conjugation: $\rho \mapsto U\cdot \rho \cdot U^{\dagger}$ where $U^\dagger$ is the conjugate transpose of $U$.

By the set $\mathcal{P}_n$ of $n$-qubit Pauli strings, we mean a length-$n$ Kronecker-product string of single-qubit Pauli gates, e.g. $X \otimes Y \otimes I$ is a $3$-qubit Pauli string.
The $n$-qubit Pauli group $\mathfrak{P}_n$ is the set $\{ \alpha P \mid \alpha \in \{\pm 1, \pm i\}, P \in \mathcal{P}_n\}$.
A (Pauli) stabilizer of a state $\ket{\phi}$ is an element $P \in \mathfrak{P}_n$ such that $P \ket{\phi} = \ket{\phi}$. We will write the set of Pauli stabilizers of $\ket{\phi}$ as $S(\ket{\phi})$ or shortly $S$ if $\ket{\phi}$ is clear. The set of Pauli stabilizers is always generated by $\log_2(|S|)$ generating Pauli string stabilizer~\cite{aaronson2004improved}.

\subsection{Decision diagrams for quantum computing}
\label{sec:qdd}

A decision diagram is a classical data structure for representing exponentially-sized vectors (i.e., pseudo-Boolean functions). Equivalent parts of a vector are stored only once, thereby compressing the representation of a vector exponentially in the best case. There exist many versions of decision diagrams, see for example for an overview~\cite{thanos2024automated}.
Quantum decision diagrams can represent quantum states by representing complex vectors.
In this paper, we will focus on three types, namely Multi Terminal Boolean Decision Diagrams (MTBDD)~\cite{fujita1997multi,bahar1993algebraic,viamontes2003improving}, Edge Valued Boolean Decision Diagrams (EVDD)~\cite{lai1994evbdd,tafertshofer1994factored,tafertshofer1997factored,wilson2005decision,sanner2005AffineADDs,miller2006qmdd,wang2008xqdd,fargier2013semiring,hong2022tensor}, and Local Invertible Map Decision Diagrams (LIMDD)~\cite{vinkhuijzen2023limdd,vinkhuijzen2023efficient,vinkhuijzen2024a}.
In general, we will refer to these three as \concept{multiplicative edge valued decision diagrams}.

Our explanation of decision diagrams is summarized in \Cref{fig:DDvariants}. On the left, we have a vector that we want to represent as a decision diagram. 

First, we transform this vector to a decision tree. A decision tree is a binary tree where each edge is a zero-edge (dotted) or a one-edge (straight). A path in the decision tree points to the value of the vector entry with binary index corresponding to the zero and one-edges of the path. Note that in \Cref{fig:DDvariants} the vector entries with value 0 are omitted for brevity.

The red and green node in the decision tree represent exactly the same $2$-vector $(1, 0)^{\top}$. Therefore, we can merge them as is shown in \Cref{fig:DDvariants}. This leads to an MTBDD, where all nodes representing the same vector are merged.

A EVDD is an edge-valued decision diagram. Here, numerical values are put on the edges instead of on the leaves. Therefore, nodes that are equal up to a multiplicative constant can be merged. In \Cref{fig:DDvariants} we see that the purple, red-green, and blue node are the same up to a factor $2$, $1$ and $-2$. Hence, these nodes are merged and the factors are set on the edges.

\begin{figure}
    \centering
    \tikzset{every picture/.style={->,thick}}

\begin{tikzpicture}[
    scale=0.3,
    every path/.style={>=latex},
    every node/.style={},
    inner sep=1pt,
    minimum size=0.3cm,
    line width=1pt,
    node distance=.8cm,
    thick,
    font=\scriptsize
    ]

\colorlet{origgreen}{green}
\colorlet{origblue}{blue}
\colorlet{origpurple}{purple}
\colorlet{origred}{red}

\colorlet{green}{ForestGreen}
\colorlet{blue}{RoyalBlue}
\colorlet{purple}{Orchid}
\colorlet{red}{Red}

    \node[draw,circle] (a1) {$x_1$};
    \node[draw,circle, below = .4cm of a1, xshift=-.65cm] (a2) {$x_2$};
    \node[draw,circle, below = .4cm of a1, xshift= .65cm] (a3) {$x_2$};
    
    \node[draw,circle, below = .4cm of a2, xshift=-.3cm,fill={red}] (a41) {$x_3$};
    \draw[e0=0] (a2) edge  node[] {} (a41);
    \node[draw,circle, below = .4cm of a2, xshift= .3cm,fill=purple] (a42) {$x_3$};
    \node[draw,circle, below = .4cm of a3, xshift=-.3cm,fill=green] (a43) {$x_3$};
    \node[draw,circle, below = .4cm of a3, xshift= .3cm,fill=blue] (a44) {$x_3$};

    \node[leaf, below=.35cm of a41, xshift=-0cm      ] (w1) {$1$};
    \node[leaf, below=.35cm of a42,inner sep=0pt] (w3) {$2$};
    \node[leaf, below=.35cm of a43,inner sep=0pt] (w5) {$1$};
    \node[leaf, below=.35cm of a44,inner sep=0pt] (w7) {$-2$};

    \draw[<-] (a1) --++(90:2cm) node[right,pos=.7] {$f$};
    \draw[e0 = 0] (a1) edge  node[] {} (a2);
    \draw[e1 = 0] (a1) edge  node[] {} (a3);

    \draw[e1=  0] (a2) edge  node[] {} (a42);
    \draw[e0=  0] (a3) edge  node[] {} (a43);
    \draw[e1=  0] (a3) edge  node[] {} (a44);

    \draw[e0=  0] (a41) edge  node[] {} (w1);
    \draw[e0=  0] (a42) edge  node[] {} (w3);
    \draw[e0=  0] (a43) edge  node[] {} (w5);
    \draw[e0=  0] (a44) edge  node[] {} (w7);

    \node[below= 2.8cm of a1]   (dt)  {Decision tree};

\node[left = 1.4cm of a1.south,yshift=-1cm] (vec) {
    \begin{minipage}{1.5cm}\scriptsize
    $\def\arraystretch{1.3}
    \begin{matrix}
	{000:}\\
        {001:}\\
        {010:}\\
        {011:}\\
        {100:}\\
        {101:}\\ 
        {110:}\\
        {111:}\\
    \end{matrix}
    \begin{bmatrix}
	{\color{red} 1}\\
        {\color{red} 0}\\
        {\color{purple} 2}\\
        {\color{purple} 0}\\
        {\color{green} 1}\\
        {\color{green} 0}\\ 
        {\color{blue}-2}\\
        {\color{blue} 0}\\
    \end{bmatrix}$
    \end{minipage}
};

    \node[draw,circle, right=2.5cm  of a1] (a1) {$x_1$};
    \node[draw,circle, below = .4cm of a1, xshift=-.45cm] (a2) {$x_2$};
    \node[draw,circle, below = .4cm of a1, xshift= .45cm] (a3) {$x_2$};
    
    \node[draw,circle, below = .4cm of a2, xshift= -.2cm,fill=purple] (a42) {$x_3$};
    \node[draw,circle, below = .4cm of a3, xshift=-.45cm, left color=red, right color=green] (a43) {$x_3$};
    \node[draw,circle, below = .4cm of a3, xshift= .2cm,fill=blue] (a44) {$x_3$};

    \node[leaf, below=.35cm of a42,inner sep=0pt] (w3) {$2$};
    \node[leaf, below=.35cm of a43,inner sep=0pt] (w5) {$1$};
    \node[leaf, below=.35cm of a44,inner sep=0pt] (w7) {$-2$};

    \draw[<-] (a1) --++(90:2cm) node[right,pos=.7] {$f$};
    \draw[e0 = 0] (a1) edge  node[] {} (a2);
    \draw[e1 = 0] (a1) edge  node[] {} (a3);

    \draw[e0=  0] (a2) edge  node[] {} (a43);
    \draw[e1=  0] (a2) edge  node[] {} (a42);
    \draw[e0=  0] (a3) edge  node[] {} (a43);
    \draw[e1=  0] (a3) edge  node[] {} (a44);
    \draw[e0=  0] (a42) edge  node[] {} (w3);
    \draw[e0=  0] (a43) edge  node[] {} (w5);
    \draw[e0=  0] (a44) edge  node[] {} (w7);

    \node[below= 2.8cm of a1]   (dt)  {MTBDD}; %

    \node[draw,circle, right=2.5cm  of a1] (a1) {$x_1$};
    \node[draw,circle, below = .4cm of a1, xshift=-.85cm] (a2) {$x_2$};
    \node[draw,circle, below = .4cm of a1, xshift= .85cm] (a3) {$x_2$};
    
    \node[draw,circle, below = .4cm of a2, xshift=.85cm,shade, shading=axis, left color=purple,  middle color=purple, right color=blue, shading angle=90] (a43) {$x_3$};

    \node[leaf, below=.35cm of a43      ] (w5) {$1$};

    \draw[<-] (a1) --++(90:2cm) node[right,pos=.7] {$f$};
    \draw[e0 = 0] (a1) edge  node[] {} (a2);
    \draw[e1 = 0] (a1) edge  node[] {} (a3);

    \draw[e0=15] (a2) edge  node[below left] {$1$} (a43);
    \draw[e1=15] (a2) edge  node[above right] {$2$} (a43);
    \draw[e0=20] (a3) edge  node[above left] {$1$} (a43);
    \draw[e1=20] (a3) edge  node[right] {$-2$} (a43);
    \draw[e0=  0] (a43) edge  node[] {} (w5);

    \node[below= 2.8cm of a1]   (dt)  {EVDD}; %

    \node[draw,circle, right=1.6cm  of a1] (a1) {$x_1$};
    \node[draw,circle, below = .4cm of a1 ] (a2) {$x_2$};
    \node[draw,circle, below = .4cm of a2 ] (a43) {$x_3$};
    \node[leaf, below=.35cm of a43      ] (w5) {$1$};

    \draw[<-] (a1) --++(90:2cm) node[right,pos=.7] {$f$};

    \draw[e0=20] (a1) edge  node[left] {} (a2);
    \draw[e1=20] (a1) edge  node[right,pos=.3] {~$Z\otimes I$} (a2);
    \draw[e0=20] (a2) edge  node[left] {} (a43);
    \draw[e1=20] (a2) edge  node[right,pos=.3] {~$2\cdot I$} (a43);
    \draw[e0=  0] (a43) edge  node[] {} (w5);

    \node[below= 2.8cm of a1]   (dt)  {LIMDD};

\end{tikzpicture}

\colorlet{green}{origgreen}
\colorlet{blue}{origblue}
\colorlet{purple}{origpurple}
\colorlet{red}{origred}
    \caption{Examples of decision diagram representations for a vector (left). For convenience, the edges pointing to 0 are omitted. Edges in EVDD and LIMDD without label have label 1 resp. $1\cdot I^{\otimes m}$. Figure reproduced from~\cite{thanos2024automated}.}
    \label{fig:DDvariants}
\end{figure}
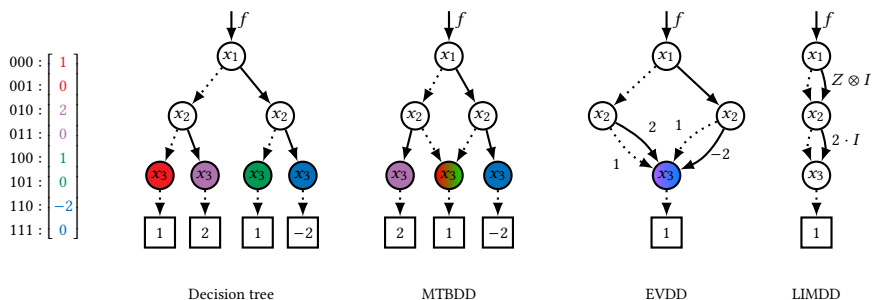

A LIMDD, an even more compressed decision diagram, is also an edge-valued decision diagram, but has `Pauli LIMs' on the edge: Pauli strings multiplied with a complex number. 
If two nodes are equal up to multiplication with a Pauli LIM, these nodes can be merged, and the corresponding Pauli LIM is put on the edge. In \Cref{fig:DDvariants} we have that the nodes labeled in EVDD with $x_2$ are equal up to application of a Pauli LIM, because $\left(\begin{smallmatrix}1&0&2&0\end{smallmatrix}\right)^T = Z\otimes I \cdot \left(\begin{smallmatrix}1&0&-2&0\end{smallmatrix}\right)^T$. Hence, these nodes are merged in the LIMDD representation and $Z\otimes I$ is set on the edge.

For all three decision diagram variants above, simulation algorithms have been proposed~\cite{viamontes2003improving,zulehner2018advanced,vinkhuijzen2023limdd}, i.e., algorithms that describe how to update a decision diagram representation of a quantum state vector after an arbitrary gate, or how to extract a random measurement outcome from the representation.

Finally, the three decision diagrams above are all canonical: two quantum-state vectors are represented by the same diagram (root) node if and only if the vectors are \emph{equivalent}.
Here, equivalence is defined differently for each diagram type, as should be evident from \autoref{fig:DDvariants}. Vectors are considered equivalent when:
\vspace{-.5em}
\begin{description}[noitemsep,itemsep=0pt,topsep=0pt,partopsep=0pt,parsep=0pt,]
\item[MTBDD:] equal.
\item[EVDD:] equal up to a constant factor $\gamma\in\complex$.
\item[LIMDD:] equal up to an operator $\gamma P$, where $\gamma\in\complex$ and Pauli string $P\in \mathcal{P}_n$.
\end{description}
\vspace{-.5em}
This explains the name multiplicative edge valued decision diagram.
Note finally that the equivalence modulus ($\gamma$ or $\gamma P$) is stored as edge label of the root node. 

To achieve canonicity, each node must satisfy several \emph{canonicity rules}, also called normalization rules.
We consider two types of canonicity rules for an $n$-qubit node of an EVDD and LIMDD:
\begin{itemize}
    \item `low': the dotted edge is labelled $1$ or $0$ for EVDD and $I^{\otimes n}$ for LIMDD;
    \item `L2': the dotted edge label $\alpha \in \mathbb{R}_{\geq 0}$ ( for LIMDD: $\alpha I^{\otimes n}$ and $\alpha \in \mathbb{R}_{>0}$) and solid edge label $\beta \in \mathbb{C}$ ($\beta P$ with $P \in \mathcal{P}_n$ for LIMDD) satisfy $|\alpha|^2 + |\beta|^2 = 1$ (that is, the vector $(\alpha, \beta)$ is normalized in the L2-norm).
\end{itemize}
We will denote the DD variants as EVDD-low, EVDD-L2, LIMDD-low and LIMDD-L2. The `L2' canonicity rules turn out to be more stable when edge values are represented by floating point numbers~\cite{brand2025q}, and are therefore often used in implementations of decision diagrams. We will use the `low' canonicity rule for our scaling guarantees for edge labels in \autoref{sec:coeff-bounds}, because they are easier to study analytically.

For a thorough understanding of the reasoning we often apply here, it is important that low canonicity is obtained by dividing out the edge value on the low edge as the following figure illustrates. This ensures that the left edge has always value 1, except in the case when $A=0$ which is already canonical. For LIMDD, the canonical left edge always has value $I^{\otimes n}$, since the left edge and right edge can be swapped by an $X$ LIM on the root edge.

\begin{center}
\tikzset{every picture/.style={->,thick}}

\begin{tikzpicture}[
    scale=0.3,
    every path/.style={>=latex},
    every node/.style={},
    inner sep=1pt,
    minimum size=0.3cm,
    line width=1pt,
    node distance=.8cm,
    thick,
    font=\scriptsize
    ]

    \node[draw,circle] (a1) {$x_1$};
    \node[draw,circle, below = .4cm of a1, xshift=-.65cm] (a2) {$x_2$};
    \node[draw,circle, below = .4cm of a1, xshift= .65cm] (a3) {$x_2$};
    
    \draw[<-] (a1) --++(90:2cm) node[right,pos=.7] {$C$};
    \draw[e0 = 0] (a1) edge  node[above left] {$A$} (a2);
    \draw[e1 = 0] (a1) edge  node[above right] {$B$} (a3);

    \node[draw,circle,right= 5cm of a1] (a1a) {$x_1$};
    \node[draw,circle, below = .4cm of a1a, xshift=-.65cm] (a2) {$x_2$};
    \node[draw,circle, below = .4cm of a1a, xshift= .65cm] (a3) {$x_2$};
    
    \draw[<-] (a1a) --++(90:2cm) node[right,pos=.7] {$CA$};
    \draw[e0 = 0] (a1a) edge  node[above left] {$1\ (I^{\otimes n})$} (a2);
    \draw[e1 = 0] (a1a) edge  node[above right] {$A^{-1}B$} (a3);

    \draw[->,dashed, shorten >=1cm,  shorten <=1cm] (a1) edge node[above] {low canonicity rule}  (a1a);
    
\end{tikzpicture}
\end{center}

\section{Problem statement and related work\label{sec:motivation}}
We start by providing an example how floating point inaccuracy can lead to a blow-up of decision diagrams.
In \autoref{fig:motivating_example}, a quantum circuit is simulated using an EVDD-low state representation.
Using an exact method, which represents complex numbers on the edges symbolically, the output state's EVDD has only three nodes.
In the final step, the EVDD canonization procedure computes an edge value $a := \frac{1-\omega^2\cdot(-i)}{1+\omega^2\cdot(-i)}$, which should equal $0$ since $\omega^2 = i$.
In contrast, in the figure, we see that when insufficient precision is used, $a$ cannot be computed exactly. 
As a consequence, there now exist two EVDD nodes at the one-qubit level which are not EVDD-equivalent and hence cannot be merged given the used precision, yielding a diagram of four nodes.
Our experiments in \autoref{sec:experiments} show that the size blow-up due to limited precision also occurs in practice when simulating real-world quantum circuits.
In fact, the size of the blow-up is a limiting factor for simulating large quantum circuits.

\begin{figure}[t]
    \centering
    \tikzset{every picture/.style={->,thick}}

\begin{tikzpicture}[
    scale=0.3,
    every path/.style={>=latex},
    every node/.style={},
    inner sep=1pt,
    minimum size=0.3cm,
    line width=1pt,
    node distance=.8cm,
    thick,
    font=\scriptsize
    ]

    \node[draw,circle, right=1.6cm  of a1] (a1) {$x_1$};
    \node[draw,circle, below = .4cm of a1 ] (a2) {$x_2$};
    \node[leaf, below=.35cm of a2      ] (w5) {$1$};

    \node[below = .2 of w5] (startedge) {};
    \node[right = .8cm of startedge] (finaledge) {};
    \draw[->,dashed] (startedge) -- (finaledge) node[below,pos=.5]{$H_1$};
    
    \draw[<-] (a1) --++(90:2cm) node[right,pos=.7] {1};

    \draw[e0=20] (a1) edge  node[left,pos=.3] {1} (a2);
    \draw[e1=20] (a1) edge  node[right,pos=.3] {0} (a2);
    \draw[e0=20] (a2) edge  node[left,pos=.3] {1} (w5);
    \draw[e1=20] (a2) edge  node[right,pos=.3] {0} (w5);

    \node[draw,circle, right=.6cm  of a1] (a1) {$x_1$};
    \node[draw,circle, below = .4cm of a1 ] (a2) {$x_2$};
    \node[leaf, below=.35cm of a2      ] (w5) {$1$};

    \draw[<-] (a1) --++(90:2cm) node[right,pos=.7] {$\frac{1}{\sqrt{2}}$};

    \draw[e0=20] (a1) edge  node[left,pos=.3] {1} (a2);
    \draw[e1=20] (a1) edge  node[right,pos=.3] {1} (a2);
    \draw[e0=20] (a2) edge  node[left,pos=.3] {1} (w5);
    \draw[e1=20] (a2) edge  node[right,pos=.3] {0} (w5);
    
    \node[below = .2 of w5] (startedge) {};
    \node[right = .8cm of startedge] (finaledge) {};
    \draw[->,dashed] (startedge) -- (finaledge) node[below,pos=.5]{$T_1$};

    \node[draw,circle, right=.6cm  of a1] (a1) {$x_1$};
    \node[draw,circle, below = .4cm of a1 ] (a2) {$x_2$};
    \node[leaf, below=.35cm of a2      ] (w5) {$1$};

    \draw[<-] (a1) --++(90:2cm) node[right,pos=.7] {$\frac{1}{\sqrt{2}}$};

    \draw[e0=20] (a1) edge  node[left,pos=.3] {1} (a2);
    \draw[e1=20] (a1) edge  node[right,pos=.3] {$\omega$} (a2);
    \draw[e0=20] (a2) edge  node[left,pos=.3] {1} (w5);
    \draw[e1=20] (a2) edge  node[right,pos=.3] {0} (w5);

    \node[below = .2 of w5] (startedge) {};
    \node[right = .8cm of startedge] (finaledge) {};
    \draw[->,dashed] (startedge) -- (finaledge) node[below,pos=.5]{$T_1$};

    \node[draw,circle, right=.6cm  of a1] (a1) {$x_1$};
    \node[draw,circle, below = .4cm of a1 ] (a2) {$x_2$};
    \node[leaf, below=.35cm of a2      ] (w5) {$1$};

    \draw[<-] (a1) --++(90:2cm) node[right,pos=.7] {$\frac{1}{\sqrt{2}}$};

    \draw[e0=20] (a1) edge  node[left,pos=.3] {1} (a2);
    \draw[e1=20] (a1) edge  node[right,pos=.3] {$\omega^2$} (a2);
    \draw[e0=20] (a2) edge  node[left,pos=.3] {1} (w5);
    \draw[e1=20] (a2) edge  node[right,pos=.3] {0} (w5);

    \node[below = .2 of w5] (startedge) {};
    \node[right = 1.cm of startedge] (finaledge) {};
    \draw[->,dashed] (startedge) -- (finaledge) node[below,pos=.5]{$S^\dagger_1$};

    \node[draw,circle, right=.8cm  of a1] (a1) {$x_1$};
    \node[draw,circle, below = .4cm of a1 ] (a2) {$x_2$};
    \node[leaf, below=.35cm of a2      ] (w5) {$1$};

    \draw[<-] (a1) --++(90:2cm) node[right,pos=.7] {$\frac{1}{\sqrt{2}}$};

    \draw[e0=20] (a1) edge  node[left,pos=.3] {1} (a2);
    \draw[e1=20] (a1) edge  node[right,pos=.3] {\tiny$\omega^2 \cdot(-i)$} (a2);
    \draw[e0=20] (a2) edge  node[left,pos=.3] {1} (w5);
    \draw[e1=20] (a2) edge  node[right,pos=.3] {0} (w5);

    \node[below = .2 of w5] (startedge) {};
    \node[right = 1.5cm of startedge] (finaledge) {};
    \draw[->,dashed] (startedge) -- (finaledge) node[below,pos=.5]{$H_1$};

    \node[draw,circle, right=1.3cm  of a1] (a1) {$x_1$};
    \node[draw,circle, below = .4cm of a1 ] (a2) {$x_2$};
    \node[leaf, below=.35cm of a2      ] (w5) {$1$};

    \draw[<-] (a1) --++(90:2cm) node[above] {$\frac{(1+\omega^2\cdot(-i))}{\sqrt{2}\cdot\sqrt{2}}$};

    \draw[e0=20] (a1) edge  node[left,pos=.3] {1} (a2);
    \draw[e1=20] (a1) edge  node[right,pos=.3] {$\frac{1-\omega^2\cdot(-i)}{1+\omega^2\cdot(-i)}=:a$} (a2);
    \draw[e0=20] (a2) edge  node[left,pos=.3] {1} (w5);
    \draw[e1=20] (a2) edge  node[right,pos=.3] {0} (w5);

    \node[ right=2.8cm  of a1] (tmp) {};
    
    \node[draw,circle, above=1.5cm  of tmp] (a1) {$x_1$};
    \node[draw,circle, below = .4cm of a1 ] (a2) {$x_2$};
    \node[leaf, below=.35cm of a2      ] (w5) {$1$};

    \node[left = 2.5cm of tmp] (connector1left) {};
    \node[left = .1cm of a2] (connector1right) {};
    \draw[->,dashed] (connector1left) -- (connector1right) node[text width = 1.5 cm,align=center,above left,pos=.9] {$CNOT_{1,2}$\\$a=0$};

    \draw[<-] (a1) --++(90:2cm) node[right] {$\frac{(1+\omega^2\cdot(-i))}{\sqrt{2}\cdot\sqrt{2}}$};

    \draw[e0=20] (a1) edge  node[left,pos=.3] {1} (a2);
    \draw[e1=20] (a1) edge  node[right,pos=.3] {0} (a2);
    \draw[e0=20] (a2) edge  node[left,pos=.3] {1} (w5);
    \draw[e1=20] (a2) edge  node[right,pos=.3] {0} (w5);

    \node[draw,circle, below=1.5cm  of tmp] (a1) {$x_1$};
    \node[ below = .4cm of a1 ] (a2) {};
    \node[draw,circle, left=.25 of a2] (a2left) {$x_2$};
    \node[draw,circle, right=.25 of a2] (a2right) {$x_2$};
    \node[leaf, below=.55cm of a2      ] (w5) {$1$};

    \node[below = 1. of connector1left] (connector2left) {};
    \node[left = .1cm of a1] (connector2right) {};
    \draw[->,dashed] (connector2left) -- (connector2right) node[text width = 1.5 cm,above right,align=center,pos=.3] {$CNOT_{1,2}$\\$a\not=0$};

    \draw[<-] (a1) --++(90:2cm) node[right] {$\frac{(1+\omega^2\cdot(-i))}{\sqrt{2}\cdot\sqrt{2}}$};

    \draw[e0=20] (a1) edge  node[left,pos=.3] {1} (a2left);
    \draw[e1=20] (a1) edge  node[above right,pos=.6] {$\frac{1-\omega^2\cdot(-i)}{1+\omega^2\cdot(-i)}$} (a2right);
    \draw[e0=20] (a2left) edge  node[left,pos=.3] {1} (w5);
    \draw[e1=20] (a2left) edge  node[right,pos=.3] {0} (w5);
    \draw[e0=20] (a2right) edge  node[left,pos=.3] {0} (w5);
    \draw[e1=20] (a2right) edge  node[right,pos=.3] {1} (w5);

\end{tikzpicture}
    \caption{
    Motivating example: numerical errors can lead to a larger decision diagram. The figure shows the intermediate EVDDs of the $2$-qubit circuit $H_1 T_1 T_1 S^\dagger_1 H_1 CNOT_{1,2}$ when applied to the input state $\ket{00}$, both with an exact representation of the complex numbers on the edges (above) and with a floating point arithmetic error so that the value
    $a := \frac{1-\omega^2\cdot(-i)}{1+\omega^2\cdot(-i)}$ is not exactly equal to $0$ (below).
    Here, $\omega = \frac{1}{\sqrt{2}}(1+i)\approx 0.707+0.707i$.
    }
    \label{fig:motivating_example}
\end{figure}
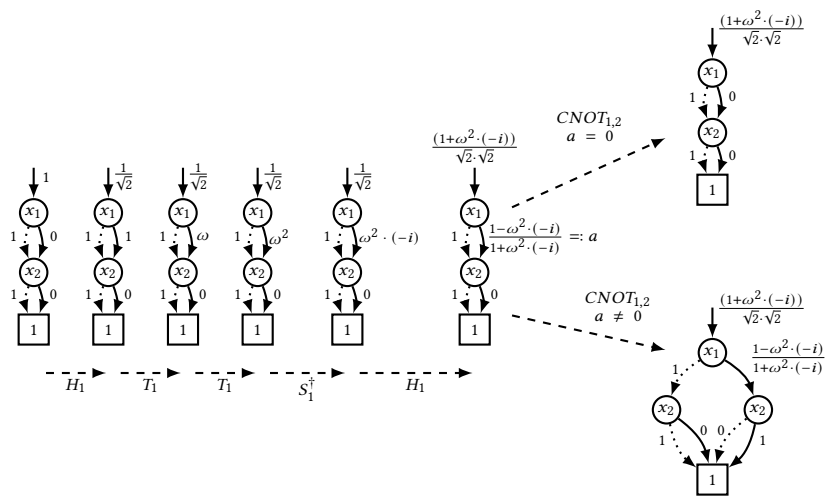

\subsection{Related work: numerical accuracy}

The existing approaches to tackling the accuracy issue can roughly be divided along two lines.

First, to introduce a \textbf{tolerance} $\epsilon$ and consider two decision diagram nodes equivalent if they are no more than $\epsilon$ apart.
The distance measure depends on the context.
This line of approaches leads to fewer nodes at the cost of introducing an error~\cite{zulehner2019efficiently,niemann2020overcoming}.
The approach has also been studied empirically from the perspective of making the decision diagram approximate~\cite{hillmich2022approximating}. Or, with stochastic reasoning instead of quantum computing as application in mind, approximation algorithms for MTBDDs~\cite{sanner2010approximate,st2000apricodd}, which are exponentially less succinct than EVDDs and LIMDDs, have been found, as well as for the powerful Affine Algebraic Decision Diagrams~\cite{sanner2010approximate}.
What is lacking in all these works is a \emph{guarantee} on the scaling of performance, e.g. bounds on the number of nodes as function of the allowed tolerance $\epsilon$.\footnote{Exception is the error analysis of~\cite{brand2025numerical}. However, they only provide a guarantee for applying only one gate with MTBDD.} Though there exist some guarantees on the total numerical error of DD related tasks such as model counting~\cite{bryant2025numerical}, this is still not exact and may therefore still lead to a blow-up of the decision diagram.
We expect that the error-propagation analysis for relevant distance metrics (such as fidelity between the desired and real outputted quantum-state vector) will be generally hard for EVDD, LIMDD (and AADD) and might only be practically feasible on a case-by-case (quantum-circuit dependent) basis. 

The other approach is to make use of an \textbf{algebraic representation} of the complex numbers in the decision diagram.
This algebraic approach is also used in model counting~\cite{Eiter_Kiesel_2021,kimmig2017algebraic}, with applications in quantum computing~\cite{DECAMPOS2020104627,Mei_2024}.
Existing work for decision diagrams has focused on the universal~\cite{dawson2005solovaykitaev} Clifford+$T$ gate set: if a quantum circuit with only Clifford+$T$ gates is applied to the all-zero state $\ket{0}^{\otimes n}$ for some $n \in \{1, 2, \dots \}$ as input, then the entries of the quantum state vector at any point during the computation can be written as elements of the ring $\mathbb{Z}[i, \frac{1}{\sqrt{2}}] = \{\frac{1}{2^n} \left(a + bi + c \sqrt{2} + di \sqrt{2} \right) \mid a, b, c, d \in \mathbb{Z}, n \in \mathbb{N}\}$~\cite{giles2013exact}.
Parameterizing this ring differently, as
\begin{equation*}
    \label{eq:dyadic-fractions}
	\left\{\frac{1}{\sqrt{2^k}} \left(a + b\omega + c \omega^2 + d \omega^3\right) \mid a, b, c, d \in \mathbb{Z}, k \in \mathbb{N}\right\}\text{ where $\omega = \frac{1 + i}{\sqrt{2}}$}
,
\end{equation*}
an EVDD implementation was able to empirically characterize the cost or gain in terms of number of nodes for various tolerance values~\cite{zulehner2019efficiently,niemann2020overcoming}.
Indeed, for some scenarios, the algebraic approach was slower than the $\epsilon$-tolerance approach.
The same parameterization has also been used in other works on quantum-circuit simulation using Binary Decision Diagrams~\cite{tsai2021bit,wei2022accurate}
and others~\cite{chen2023theory,chen2023autoq,chen2023automata} because a discrete complex-number parameterization was needed in these works.
However, all these works are again empirical, lacking any scaling guarantees.

\subsection{Related work: gate-count dependent decision diagram node number bounds}
\label{sec:related_word--DD-size}

Existing size bounds on the number of nodes in decision diagrams for quantum computing tackle fixed families of states, see e.g. ~\cite{vinkhuijzen2024a} and references therein, and only give bound on the maximal increase of nodes when a single gate is applied~\cite{vinkhuijzen2024a}.
To the best of our knowledge, we are the first to provide parameterized complexity bounds on the number of nodes of a LIMDD and EVDD, showing that simulation is fixed-parameter tractable in the types of gates occuring in the quantum circuit.
The LIMDD upper bound $O(2^t)$ in particular is fixed-parameter tractable in $t$, the $T$-count.

For comparison, the extended stabilizer formalism (also called stabilizer rank simulation method), which is a current best known simulation method for Clifford+$T$ simulation, simulates Clifford+$T$ circuits in $\Theta(2^{\alpha\cdot t})$ with $\alpha\leq0.3963$~\cite{bravyi2019simulation,qassim2021improved}, which is an exponential lower bound.
However, our LIMDD size bound is an upper bound that is not tight:  it is known that the Clifford+$T$ circuit which outputs an $n$-qubit W-state has $\Omega(n)$ $T$ gates, whereas the LIMDD has width $n$ rather than $2^n$.
This example shows an exponential separation with the extended stabilizer formalism for Clifford+$T$ in favor of LIMDD~\cite{vinkhuijzen2023limdd}.

\section{Bounds on the size of coefficients in decision diagrams}
\label{sec:coeff-bounds}

In this section, we consider the edge coefficients of `low'-canonical decision diagrams for handling Clifford+$T$ circuits (see also the preliminaries on decision diagrams in \Cref{sec:qdd}).
We first focus on the coefficients in the decision diagram of the output state of such a circuit.
Specifically, in \Cref{sec:accuracy-bound-static}, we prove that the number of bits needed to express these coefficients scales only in the number of $T$ gates and the number of qubits, but is independent of the number of Clifford gates.
Next, in \Cref{subsec:measurement_probabilities_are_in_qisq}, we extend the proof to coefficients encountered in applying gates and measurements while simulating the circuit.
The section contains the main reasoning and proof sketches, referring to proof details in \Cref{appendix:coeffs-bound}.

\subsection{Decision-diagram representation needs only succinct numbers \label{sec:accuracy-bound-static}}
Here, we prove \Cref{thm:edgelabelbound}, which bounds the complexity of the decision-diagram coefficients for output states of Clifford+$t\times T$ circuits.
We denote the states of interest, i.e., the set of all $n$-qubit quantum states $\ket{\phi} = U \ket{0}^{\otimes n}$ where $U$ is a Clifford+$t\times T$ unitary, by $\cktstates^{n}_t$.
The proof is divided into two steps.
First, \Cref{lemma:vector-entry-lemma} proves that the density matrix entries of a quantum state in $\cktstates^n_t$ take a simple form.
Then, \Cref{thm:edgelabelbound} follows by the observation that edge coefficients of a decision diagram are ratios of density matrix entries.

\begin{lemma}[Density matrix entry lemma]%
\label{lemma:vector-entry-lemma}
For integer $n>0, t\geq0$ we define the set of complex numbers $Q_{n, t}$ as
\begin{multline*}
Q_{n,t}=\biggl\{\frac{\ell}{\sqrt{2^t}}+\frac{m}{\sqrt{2^{t-1}}} + i \left(\frac{\ell'}{\sqrt{2^t}}+\frac{m'}{\sqrt{2^{t-1}}}\right)
\biggm|
\ell, \ell'\in[-2^{n+t},2^{n+t}] 
\textnormal{ and } m, m'\in[-2^{n+t-1},2^{n+t-1}]  \\ \textnormal{ and }(t=0\implies m=m'=0)\biggr\}
\end{multline*}
where $[a,b]$ denotes set $\{a,a+1,\dots,b-1,b\}$ for any $a,b\in\mathbb{Z}$.\\
If $\ket{\phi} \in \cktstates^n_t$ and $\phi=\dyad{\phi}$ is the density matrix of $\ket{\phi}$, then $2^n \cdot \langle y\mid\phi\mid x\rangle \in Q_{n,t}$ for all $x,y\in\{0,\dots,2^n-1\}$. That is, every density matrix entry multiplied by $2^n$ is a member of the set $Q_{n,t}$.
\end{lemma}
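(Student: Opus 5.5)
Our plan is to expand the density matrix in the Pauli basis and track how the Pauli coefficients evolve. Write
\[
\phi=\frac{1}{2^n}\sum_{P\in\mathcal{P}_n} c_P P ,\qquad c_P=\mathrm{tr}(P\phi)\in\mathbb{R}.
\]
Every Pauli string has exactly one nonzero entry in each row, and that entry lies in $\{\pm1,\pm i\}$. Hence
\[
2^n\langle y|\phi|x\rangle=\sum_P c_P\,\langle y|P|x\rangle .
\]
Exactly $2^n$ Pauli strings contribute to a given entry: the $X$/$Y$ pattern is fixed by $x\oplus y$, and the $I$/$Z$ choices are free. Each contributes $\pm c_P$ or $\pm i c_P$. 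It therefore suffices to show the following claim for every $P$: the coefficient $c_P$ has the form $a/\sqrt{2^t}+b/\sqrt{2^{t-1}}$, with integers satisfying $|a|\le 2^t$ and $|b|\le 2^{t-1}$ (and $b=0$ if $t=0$). Summing $2^n$ such terms, with some landing in the real part and some in the imaginary part, keeps $\ell,\ell'$ within $[-2^{n+t},2^{n+t}]$ and $m,m'$ within $[-2^{n+t-1},2^{n+t-1}]$.

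We prove the claim by induction on the circuit, gate by gate.

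\textbf{Base case.} For $\ket{0}^{\otimes n}$ we have $\phi=2^{-n}\prod_j (I+Z_j)$, so every $c_P\in\{0,1\}$. This matches $t=0$ with $a\in\{0,1\}$ and $b=0$.

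\textbf{Clifford gates.} A Clifford $C$ maps each Pauli string to $\pm$ a Pauli string under conjugation. The Hermitian Pauli strings are sent to Hermitian ones, so the sign is real. Thus the multiset of coefficients is only permuted up to sign, and the form and the bounds are unchanged. This is exactly why the bound does not depend on the number of Clifford gates.

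\textbf{$T$ gates.} Conjugating by $T$ on qubit $k$ fixes $I_k$ and $Z_k$. On the remaining Paulis it acts as
\[
X_k\mapsto \tfrac{1}{\sqrt2}(X_k+Y_k),\qquad Y_k\mapsto \tfrac{1}{\sqrt2}(Y_k-X_k).
\]
So the coefficients pair up as $(c_{X\cdots},c_{Y\cdots})$ and transform as
\[
c'_{X}=\frac{c_X-c_Y}{\sqrt2},\qquad c'_{Y}=\frac{c_X+c_Y}{\sqrt2}.
\]
Unpaired coefficients are unchanged and trivially fit the new form. For paired ones, write $c=a/\sqrt{2^t}+b/\sqrt{2^{t-1}}$. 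Dividing by $\sqrt2$ gives $a/\sqrt{2^{t+1}}+b/\sqrt{2^{t}}$. In the new parametrisation, the $b$-term becomes the ``$m$''-type term (denominator $\sqrt{2^{(t+1)-1}}$), and the old $a$-term becomes the ``$\ell$''-type term with denominator $\sqrt{2^{t+1}}$.

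Adding or subtracting two such numbers roughly doubles the integer bounds. This gives $|a'|\le 2^{t+1}$ from $|a_1|+|a_2|\le 2^{t+1}$, and $|b'|\le 2^{t}$ from $|b_1|+|b_2|\le 2^t$, as required. At $t=0$ we must check that the new $m$ may be nonzero only when $t+1\ge1$, which holds.

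\textbf{Main obstacle.} The key subtlety is the bookkeeping in this last step. The claimed form must be stable under the map $c\mapsto c/\sqrt2$, and the two denominators must swap roles correctly. One might worry that $\sqrt{2^{t-1}}$ and $\sqrt{2^t}$ terms get mixed through rationalisation. Since $1/\sqrt{2^{t-1}}=\sqrt2/\sqrt{2^t}$, the set is simply $\{(a+b\sqrt2)/\sqrt{2^t}\}$, and division by $\sqrt2$ sends this to $\{(a+b\sqrt2)/\sqrt{2^{t+1}}\}$ with the roles intact.

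If the crude bounds should fail, we have a fallback. We would replace them with the sharper facts $|c_P|\le1$ and $\sum_P c_P^2=2^n$ for pure states. However, the triangle-inequality doubling above already matches the stated ranges exactly, so we expect it to suffice.
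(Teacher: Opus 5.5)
Your proposal is correct and follows essentially the same route as the paper: expand in the Pauli basis, note that Cliffords only permute the coefficients up to sign, track the $(c_X\mp c_Y)/\sqrt2$ update under each $T$ gate by induction, and finally sum over the $2^n$ Pauli strings with nonzero $\langle y|P|x\rangle$. The step you call trivial (unchanged coefficients fitting the new form) is exactly the paper's inclusion $C_t\subseteq C_{t+1}$, which rests on the rewriting $a/\sqrt{2^t}+b/\sqrt{2^{t-1}} = 2b/\sqrt{2^{t+1}}+a/\sqrt{2^{t}}$; the two integers swap roles, and this holds within the stated bounds.
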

\begin{proof}[Proof of \Cref{lemma:vector-entry-lemma}]

Our initial step is to characterize the Pauli basis coefficients $c_P$ of $\phi$, i.e., we decompose $\phi$ in the Pauli basis as $\phi = \dyad{\phi} =\frac{1}{2^n}\sum_{P\in \mathcal{P}_n}c_P P$, where $c_P=\text{Tr}(P^{\dagger}\cdot\phi)$~\cite{jones2024decomposing}.
	To characterize the values $c_P$ given that $\phi$ was produced by a Clifford$+t\times T$ circuit (i.e., $\phi \in \cktstates_n^t$), we first consider the circuit's input state $\ket{0}^{\otimes n}$: as this is a stabilizer state, we know that $c_P\in\{0,1,-1\}$ for all $P$~\cite{aaronson2004improved}.
	Next, consider the action $\rho \mapsto U\rho U^{\dagger}$ of each of the individual gates $U$ in the circuit that produced $\phi$.
To start, Clifford gates permute the Pauli strings but otherwise do not change the individual $c_P$ except by potentially a factor $\pm 1$.
	Next, the $T$ gate has the property that $TI T^{\dagger} = I$ and $TZT^{\dagger} = Z$, i.e., leaves $I$ and $Z$ invariant under conjugation.
	In contrast, the $T$ gate maps $X$ to $\frac{X+Y}{\sqrt{2}}$ and $Y$ to $\frac{Y-X}{\sqrt{2}}$. Thus, applying a $T$-gate leaves Pauli-basis coefficients of the form $c_{I\otimes P'}$ and $c_{Z\otimes P'}$ invariant, but it changes coefficients $c_{X\otimes P'}$ and $c_{Y\otimes P'}$, where $P'$ is a Pauli string of length $n-1$.
Now, treating each of the terms $c_P$ separately, a straightforward induction shows that every $c_P$ can be written as $c_P= 0 \text{ or }\pm1$ for $t=0$, and $c_P=\frac{\ell}{\sqrt{2^t}}+\frac{m}{\sqrt{2^{t-1}}}$ with $\ell\in[-2^t,2^t], m\in[-2^{t-1},2^{t-1}]$ for $t>0$, see \Cref{lem:C_k-representation} for details.

	Next, we note that for all $x, y \in [0, 2^n - 1]$, the entry of $\phi$ at row $x$ and column $y$ can be written as
\begin{eqnarray*}
2^n \cdot \langle y\mid\phi\mid x\rangle 
= \sum_{P\in \mathcal{P}_n}c_P \langle y\mid P\mid x\rangle
=\underbrace{\sum_{P\in \mathcal{P}_n: \langle y \mid P \mid x\rangle \in \{0, \pm 1\}}c_P \langle y\mid P\mid x\rangle}_{\textnormal{sum of $\leq 2^n$ values $c_P \cdot \pm 1$ }}
+
\underbrace{\sum_{P\in \mathcal{P}_n: \langle y \mid P\mid x\rangle \in \{0, \pm i\}}c_P \langle y\mid P\mid x\rangle}_{\textnormal{sum of $\leq 2^n$ values $c_P \cdot \pm i$ }}
\\
\end{eqnarray*}
which is of the form
$\frac{\ell}{\sqrt{2^t}}+\frac{m}{\sqrt{2^{t-1}}} + i(\frac{\ell'}{\sqrt{2^t}}+\frac{m'}{\sqrt{2^{t-1}}})$
 with $\ell,\ell'\in[-2^{n+t},2^{n+t}], m,m'\in[-2^{n+t-1},2^{n+t-1}]$
and hence $2^n\cdot\langle y \mid \phi\mid x \rangle \in Q_{n,t}$.
Here, we used the fact that of the $4^n$ elements $P$ of $\mathcal{P}_n$, only $2^n$ satisfy $\langle y\mid P\mid x\rangle \neq 0$; these $2^n$ satisfy $\langle y\mid P\mid x\rangle \in\{\pm1,\pm i\}$, bounding the number of such $P$ in $\mathcal{P}_n$ to at most $2^n$ for both the $\langle y\mid P\mid x\rangle \in\{\pm1\}$ as the $\langle y\mid P\mid x\rangle \in\{\pm i\}$ case. 
\end{proof}

On our way to proving \Cref{lemma:vector-entry-lemma}, we define the following subsets of the ring $\mathbb{Q}[i, \sqrt{2}]$.
\footnote{As $k$ approaches infinity, the set $R_k$ approaches the ring $\mathbb{Q}[i,\sqrt{2}]$: we have $\mathbb{Q}=\bigcup_{k\geq0}\qisq_k$, and hence $\mathbb{Q}[i,\sqrt{2}]=\bigcup_{k\geq0}R_k$.
Note, however, that for each $k$, $R_k$ is a set, but not a ring, as it is not closed under addition.}

\begin{definition}%
\label{def:qisqnumbers}
We define the finite sets $\left(R_k\right)_{k \in \mathbb{N}_+}$ of complex numbers with $R_k \subseteq R_{k+1}$, as
\begin{eqnarray*}
    R_k :=& \{ a + b \sqrt{2} + c i + d i \sqrt{2} \mid a, b, c, d \in \qisq_k\}\\
    \text{where}\quad
	\qisq_k:=& \left\{\frac{p}{q} \mid p,q\in [-2^k, 2^k],q\not=0\right\}\subset\mathbb{Q}%
	\label{eq:qisq}
\end{eqnarray*}
\end{definition}

We note that, as $p$ and $q$ can each be represented in $k + 2$ bits, each element of $R_k$ can be represented in $4 \cdot 2(k+2) = 8k + 16 = O(k)$ bits.
Also note that every $\frac{p}{q}\in\qisq_k$ can be uniquely (i.e., canonically) represented by applying Euclid's algorithm.

We are now ready to state the theorem, where we denote the unique edge to the root node as the \textit{root edge}, and refer to all other edges as \textit{bulk edges}.

\begin{theorem}[DD-representation needs only short numbers.]%
\label{thm:edgelabelbound}
Let $U$ be an $n$-qubit Clifford+$t\times T$ unitary.
The bulk edge values, as well as the magnitude of the root edge label, of both the EVDD-low or LIMDD-low representation of $\ket{\phi}=U\ket{0}^{\otimes n}$ are in the set $R_{2n+2t+1}$ and can thus be described in $O(n+t)$ bits.
\end{theorem}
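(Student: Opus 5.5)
The plan is to express every low-canonical edge value as a ratio of two density-matrix entries of $\phi=\dyad{\phi}$, or as the square root of one. Then \Cref{lemma:vector-entry-lemma} applies, and the remaining work is to rationalize such ratios into the form $a+b\sqrt2+ci+di\sqrt2$ with $a,b,c,d\in\qisq_{2n+2t+1}$.

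\textbf{Step 1: edge values in terms of amplitudes.} Unfold the low-canonical EVDD. A node at level $k$, reached by the path $x_1\dots x_{k-1}$, represents the subvector $\ket{\phi_{x_1\dots x_{k-1}}}$ up to a scalar. By low-canonicity, the high-edge label of that node is $\phi_{u}/\phi_{v}$, where $v$ is the index obtained by taking the first nonzero low-path continuation and $u$ is the corresponding index with $x_k=1$. (If the low branch is zero, the label is $1$ after swapping, or $0$.) Multiplying numerator and denominator by $\overline{\phi_v}$ gives
\[
\frac{\phi_u}{\phi_v}=\frac{\phi_u\overline{\phi_v}}{|\phi_v|^2}=\frac{2^n\langle u|\phi|v\rangle}{2^n\langle v|\phi|v\rangle}.
\]
This is a quotient of an element of $Q_{n,t}$ by a \emph{real} element of $Q_{n,t}$.

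For LIMDD-low, the bulk labels are $\lambda P$. The Pauli part is exact. The scalar $\lambda$ is, by the same reasoning, a ratio of two amplitudes of the represented vector, possibly multiplied by a phase in $\{\pm1,\pm i\}$ coming from $P$. So the same quotient form holds, and phases of $\pm1,\pm i$ stay inside $R_k$.

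For the root, the label $C$ is some single amplitude $\phi_v$, the first nonzero one along the low path. Its magnitude is $|C|=\sqrt{\langle v|\phi|v\rangle}$.

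\textbf{Step 2: rationalizing.} Write the numerator as $(\ell+m\sqrt2)/\sqrt{2^t}+i(\ell'+m'\sqrt2)/\sqrt{2^t}$ and the real denominator as $(L+M\sqrt2)/\sqrt{2^t}$, where $L,M$ are integers in the ranges of \Cref{lemma:vector-entry-lemma}. (Rescale $m/\sqrt{2^{t-1}}=m\sqrt2/\sqrt{2^t}$.) The factors $\sqrt{2^t}$ cancel. Then multiply numerator and denominator by $L-M\sqrt2$. The denominator becomes the integer $L^2-2M^2$, which is nonzero because $\sqrt2$ is irrational and the original denominator is nonzero. Each of the four numerator coefficients becomes an integer such as $\ell L-2mM$.

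The bit budget should match $\qisq_{2n+2t+1}$:
\begin{itemize}
\item $|\ell L|\le 2^{2n+2t}$ and $|2mM|\le 2\cdot 2^{2n+2t-2}$, so every coefficient has absolute value at most $2^{2n+2t+1}$;
\item similarly $|L^2-2M^2|\le 2^{2n+2t+1}$.
\end{itemize}
Hence each of $a,b,c,d$ is a quotient $p/q$ with $p,q\in[-2^{2n+2t+1},2^{2n+2t+1}]$, which places the edge value in $R_{2n+2t+1}$.

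\textbf{Step 3: the root magnitude (the main obstacle).} The square root $\sqrt{\langle v|\phi|v\rangle}$ is not obviously in $\mathbb{Q}[i,\sqrt2]$. My approach is to use the structure from \Cref{lem:C_k-representation}: the diagonal entry $(L+M\sqrt2)/(2^n\sqrt{2^t})$ is a Hermitian-positive quantity. I will try to show that the root label can be chosen as a normalization factor whose magnitude is a square root of such a number, and that this magnitude is still in $R_{2n+2t+1}$.

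If this fails in general, the fallback is the following. The root magnitude equals $|\phi_v|$, and $\phi_v$ itself lies in $\mathbb{Z}[\omega,1/\sqrt2]$ by the Giles–Selinger characterization. So $|\phi_v|^2=\langle v|\phi|v\rangle$ together with the representation of $\phi_v$ should let me bound its coefficients by the same $2^{2n+2t+1}$ argument.

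I expect this root-magnitude step, and the careful handling of the $t=0$ and $t$-parity edge cases in the $\sqrt{2^t}$ cancellation, to be the delicate part. The bulk-edge argument should go through routinely.
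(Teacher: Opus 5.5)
Your Steps 1 and 2 follow the paper's argument for the bulk edges (\Cref{lem:DDlabel=fraction-of-C_k} followed by \Cref{lemma:ratio_of_state-vector_entries_is_linearly_representable}). Your explicit coefficient bounds check the exponent $2n+2t+1$ more carefully than the paper does, and the parity worry does not arise there because the common factor $\sqrt{2^t}$ cancels exactly.

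The gap is Step 3. You set out to show that the magnitude $|r|$ of the root label lies in $R_{2n+2t+1}$, and that is false in general. So neither your main plan nor the Giles--Selinger fallback can succeed: no coefficient bound helps for a number outside $\mathbb{Q}(\sqrt2,i)$. Take $n=t=1$ and $\ket{\phi}=HTH\ket{0}=\tfrac12\big((1+\omega)\ket{0}+(1-\omega)\ket{1}\big)$. The EVDD-low root label is $r=\tfrac12(1+\omega)$, so $|r|^2=\tfrac{2+\sqrt2}{4}$ and $|r|=\tfrac12\sqrt{2+\sqrt2}=\cos(\pi/8)$. Every real element of $\mathbb{Q}(\sqrt2,i)$ lies in $\mathbb{Q}(\sqrt2)$. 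But $\sqrt{2+\sqrt2}=a+b\sqrt2$ with $a,b\in\mathbb{Q}$ is impossible: applying $\sqrt2\mapsto-\sqrt2$ gives $(a-b\sqrt2)^2=2-\sqrt2$, and multiplying the two equations yields $(a^2-2b^2)^2=2$.

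The missing idea is that the root claim must concern the squared magnitude, and in the paper it does. The paper's proof shows $|r|^2\in R_{2(n+t)+1}$. The text right after the theorem says that $|r|^2$, not $r$, is what gets stored, and the figure caption likewise calls $|\cdot|^2$ the magnitude. This version needs no square root and no Giles--Selinger. Let $v$ be the index of an all-ones path; the numerical part of $r$ equals $\langle v|\phi\rangle$, up to $\pm1,\pm i$ for LIMDD. Hence $|r|^2=\langle v|\phi|v\rangle=2^{-n}q$ with $q$ a real element of $Q_{n,t}$. That is, $|r|^2=(\ell+m\sqrt2)/(2^n\sqrt{2^t})$ with $|\ell|\le 2^{n+t}$ and $|m|\le 2^{n+t-1}$. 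Writing $\sqrt{2^t}$ as $2^{t/2}$ or $2^{(t-1)/2}\sqrt2$ according to parity places this in $R_{n+t}\subseteq R_{2n+2t+1}$, directly from \Cref{lemma:vector-entry-lemma}.
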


\begin{proof}[Proof of \Cref{thm:edgelabelbound}]
First note that every edge label in the bulk of a low-EVDD or low-LIMDD is a quotient of two state vector entries $\langle x\mid\phi\rangle$ and $\langle y\mid\phi\rangle$ (for LIMDD: up to multiplication with $-1$, $i$ or $-i$ due to the Pauli LIMs), where the denominator of this quotient is always nonzero since the edge labels are finite complex numbers.
This is due to the fact that every edge label $a$ distinguishes between two paths with only `$1$' as edge labels (proof idea in \autoref{fig:quotient-of-statevectorentries}, full proof in \Cref{lem:DDlabel=fraction-of-C_k}). 

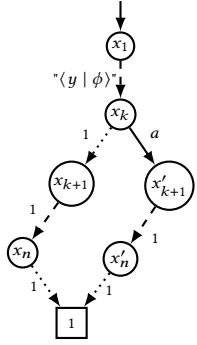
\begin{figure}[h]
    \centering
    \vspace{-1em}
    
        \tikzset{every picture/.style={->,thick}}
        \begin{tikzpicture}[
            scale=0.3,
            every path/.style={>=latex},
            every node/.style={},
            inner sep=1pt,
            minimum size=0.3cm,
            line width=1pt,
            node distance=.8cm,
            thick,
            font=\scriptsize
            ]
        
            \node[draw,circle] (root) {$x_1$};
            \node[draw,circle, below = .5cm of root] (a1) {$x_k$};
            \node[draw,circle, below = .4cm of a1, xshift=-.65cm] (a2) {$x_{k+1}$};
            \node[draw,circle, below = .4cm of a2, xshift=-.65cm] (a22) {$x_{n}$};
        
            \node[draw,circle, below = .4cm of a1, xshift= .65cm] (a3) {$x'_{k+1}$};
            \node[draw,circle, below = .4cm of a3, xshift=-.65cm] (a33) {$x'_{n}$};
            \node[draw,leaf, below = .4cm of a33, xshift=-.65cm] (a333) {1};
            
            \draw[<-] (root) --++(90:2cm) node[right,pos=.7] {};
            \draw[e0 = 0,->,dashed] (root) edge  node[left] {"$\langle y\mid\phi\rangle$"} (a1);
            \draw[e0 = 0] (a1) edge  node[above left] {1} (a2);
            \draw[e0 = 0,->,dashed] (a2) edge  node[above left] {1} (a22);
            \draw[e0 = 0] (a22) edge  node[left] {1} (a333);
            \draw[e1 = 0] (a1) edge  node[above right] {$a$} (a3);
            \draw[e0 = 0,->,dashed] (a3) edge  node[below right] {1} (a33);
            \draw[e0 = 0] (a33) edge  node[right] {1} (a333);
        
        \end{tikzpicture}
    
    \caption{Visualization of the proof that each EVDD-edge label is a quotient of state-vector entries. Let $a$ be an bulk edge label in the DD, and let $x_k$ be the parent node of this edge and $x'_{k+1}$ its child node. By the "low"-canonicity rule, every node has at least one outgoing edge with edge label `1'. Therefore, there exist paths from both $x_k$ to the terminal node and from $x'_{k+1}$ to the terminal node with only `1's as edge labels. Concatenating those paths with a path from the root node (i.e., $x_1$) to $x_k$, we have two paths from root to terminal representing the values $\langle x\mid\phi\rangle$ and $\langle y\mid\phi\rangle$ for some $x,y\in\{0,1\}^n$. As these paths differ only by the edge label $a$, we have $a=\frac{\langle x\mid\phi\rangle}{\langle y\mid\phi\rangle}$.}
    \label{fig:quotient-of-statevectorentries}
    \vspace{-1em}
\end{figure}

As $\frac{\langle x\mid\phi\rangle}{\langle y\mid\phi\rangle}=\frac{\langle x\mid\phi\rangle\langle \phi\mid y\rangle}{\langle y\mid\phi\rangle\langle \phi\mid y\rangle} = \frac{2^n\cdot\langle x\mid \rho\mid y\rangle}{2^n\cdot\langle y\mid \rho\mid y\rangle}$, we conclude by \Cref{lemma:vector-entry-lemma} that every edge label in the bulk of an EVDD or LIMDD is a quotient of two numbers in $Q_{n,t}$, with the denominator being real. In \Cref{lemma:ratio_of_state-vector_entries_is_linearly_representable}, we show by direct calculation that the quotient of two such elements in $Q_{n,t}$ is in $R_{2(n+t)+1}$. For LIMDD, the multiplication with $-1$, $i$ or $-i$ due to the Pauli LIMs gives the same result, because $a\in R_{m}$ implies $-a,ia,-ia\in R_{m}$ for all $a$ and for all $m\in\mathbb{N}$. This shows the theorem for the bulk edges.

The root edge value equals $\langle x\mid \phi\rangle$ for some $x\in\{0,1\}^n$, as there is a path with only values `1' in the bulk from the root node to the terminal node. The magnitude square of the root edge label can be efficiently stored because $|\langle x\mid \phi\rangle|^2=\langle x\mid\rho\mid x\rangle\in Q_{2n,t}\subseteq R_{2(n+t)+1}$.%
\end{proof}

\begin{example}
\autoref{fig:leadingExample} shows the EVDD resp. LIMDD representation of states that are reached during the simulation of the Clifford$+T$ circuit $H_1H_2CZ_{1,2}T_1H_1,T_2$. We see that the edge values are indeed in $R_{2n+2t+1}$, supporting the statement of \autoref{thm:edgelabelbound}.
\end{example}

\begin{figure}
\centering
\begin{subfigure}{\textwidth}
    \centering
    \tikzset{every picture/.style={->,thick}}

\begin{tikzpicture}[
    scale=0.3,
    every path/.style={>=latex},
    every node/.style={},
    inner sep=1pt,
    minimum size=0.3cm,
    line width=1pt,
    node distance=.8cm,
    thick,
    font=\scriptsize
    ]

    \node[draw,circle, right=1.6cm  of a1] (a1) {$x_1$};
    \node[draw,circle, below = .4cm of a1 ] (a2) {$x_2$};
    \node[leaf, below=.35cm of a2      ] (w5) {$1$};

    \node[below = .2 of w5] (startedge) {};
    \node[right = .8cm of startedge] (finaledge) {};
    \draw[->,dashed] (startedge) -- (finaledge) node[below,pos=.5]{$H_1$};
    
    \draw[<-] (a1) --++(90:2cm) node[right,pos=.7] {1};

    \draw[e0=20] (a1) edge  node[left,pos=.3] {1} (a2);
    \draw[e1=20] (a1) edge  node[right,pos=.3] {0} (a2);
    \draw[e0=20] (a2) edge  node[left,pos=.3] {1} (w5);
    \draw[e1=20] (a2) edge  node[right,pos=.3] {0} (w5);

    \node[draw,circle, right=.6cm  of a1] (a1) {$x_1$};
    \node[draw,circle, below = .4cm of a1 ] (a2) {$x_2$};
    \node[leaf, below=.35cm of a2      ] (w5) {$1$};

    \draw[<-] (a1) --++(90:2cm) node[right,pos=.7] {$\frac{1}{\sqrt{2}}$};

    \draw[e0=20] (a1) edge  node[left,pos=.3] {1} (a2);
    \draw[e1=20] (a1) edge  node[right,pos=.3] {1} (a2);
    \draw[e0=20] (a2) edge  node[left,pos=.3] {1} (w5);
    \draw[e1=20] (a2) edge  node[right,pos=.3] {0} (w5);
    
    \node[below = .2 of w5] (startedge) {};
    \node[right = .8cm of startedge] (finaledge) {};
    \draw[->,dashed] (startedge) -- (finaledge) node[below,pos=.5]{$H_2$};

    \node[draw,circle, right=.6cm  of a1] (a1) {$x_1$};
    \node[draw,circle, below = .4cm of a1 ] (a2) {$x_2$};
    \node[leaf, below=.35cm of a2      ] (w5) {$1$};

    \draw[<-] (a1) --++(90:2cm) node[right,pos=.7] {$\frac{1}{2}$};

    \draw[e0=20] (a1) edge  node[left,pos=.3] {1} (a2);
    \draw[e1=20] (a1) edge  node[right,pos=.3] {1} (a2);
    \draw[e0=20] (a2) edge  node[left,pos=.3] {1} (w5);
    \draw[e1=20] (a2) edge  node[right,pos=.3] {1} (w5);

    \node[below = .2 of w5] (startedge) {};
    \node[right = 1.7cm of startedge] (finaledge) {};
    \draw[->,dashed] (startedge) -- (finaledge) node[below,pos=.5]{$CZ_{1,2}$};

    \node[draw,circle, right=1.5cm  of a1] (a1) {$x_1$};
    \node[ below = .4cm of a1 ] (a2) {};
    \node[draw,circle, left=.25 of a2] (a2left) {$x_2$};
    \node[draw,circle, right=.25 of a2] (a2right) {$x_2$};
    \node[leaf, below=.35cm of a2      ] (w5) {$1$};

    \draw[<-] (a1) --++(90:2cm) node[right] {$\frac{1}{2}$};
    \draw[e0=20] (a1) edge  node[left,pos=.3] {1} (a2left);
    \draw[e1=20] (a1) edge  node[above right,pos=.6] {1} (a2right);
    \draw[e0=20] (a2left) edge  node[left,pos=.3] {1} (w5);
    \draw[e1=20] (a2left) edge  node[right,pos=.3] {1} (w5);
    \draw[e0=20] (a2right) edge  node[left,pos=.3] {1} (w5);
    \draw[e1=20] (a2right) edge  node[right,pos=.3] {-1} (w5);

    \node[below = .2 of w5] (startedge) {};
    \node[right = 1.6cm of startedge] (finaledge) {};
    \draw[->,dashed] (startedge) -- (finaledge) node[below,pos=.5]{$T_1$};

    \node[draw,circle, right=1.5cm  of a1] (a1) {$x_1$};
    \node[ below = .4cm of a1 ] (a2) {};
    \node[draw,circle, left=.25 of a2] (a2left) {$x_2$};
    \node[draw,circle, right=.25 of a2] (a2right) {$x_2$};
    \node[leaf, below=.35cm of a2      ] (w5) {$1$};

    \draw[<-] (a1) --++(90:2cm) node[right] {$\frac{1}{2}$};
    \draw[e0=20] (a1) edge  node[left,pos=.3] {1} (a2left);
    \draw[e1=20] (a1) edge  node[above right,pos=.6] {$\frac{1}{2}\sqrt{2}+\frac{1}{2}i\sqrt{2}$} (a2right);
    \draw[e0=20] (a2left) edge  node[left,pos=.3] {1} (w5);
    \draw[e1=20] (a2left) edge  node[right,pos=.3] {1} (w5);
    \draw[e0=20] (a2right) edge  node[left,pos=.3] {1} (w5);
    \draw[e1=20] (a2right) edge  node[right,pos=.3] {-1} (w5);

    \node[below = .2 of w5] (startedge) {};
    \node[right = 3.1cm of startedge] (finaledge) {};
    \draw[->,dashed] (startedge) -- (finaledge) node[below,pos=.5]{$H_1$};

    \node[draw,circle, right=3.0cm  of a1] (a1) {$x_1$};
    \node[ below = .4cm of a1 ] (a2) {};
    \node[draw,circle, left=.75 of a2] (a2left) {$x_2$};
    \node[draw,circle, right=.75 of a2] (a2right) {$x_2$};
    \node[leaf, below=.35cm of a2      ] (w5) {$1$};

    \draw[<-] (a1) --++(90:2cm) node[right] {$\frac{1}{4}+\frac{1}{4}\sqrt{2}+\frac{1}{4}i$};
    \draw[e0=20] (a1) edge  node[left,pos=.3] {1} (a2left);
    \draw[e1=20] (a1) edge  node[above right,pos=.6] {$i-i\sqrt{2}$} (a2right);
    \draw[e0=20] (a2left) edge  node[left,pos=.3] {1} (w5);
    \draw[e1=20] (a2left) edge  node[right,pos=.3] {$i-i\sqrt{2}$} (w5);
    \draw[e0=20] (a2right) edge  node[left,pos=.3] {1} (w5);
    \draw[e1=20] (a2right) edge  node[right,pos=.3] {$i+i\sqrt{2}$} (w5);

    \node[below = .2 of w5] (startedge) {};
    \node[right = 3.1cm of startedge] (finaledge) {};
    \draw[->,dashed] (startedge) -- (finaledge) node[below,pos=.5]{$T_2$};

    \node[draw,circle, right=3.0cm  of a1] (a1) {$x_1$};
    \node[ below = .4cm of a1 ] (a2) {};
    \node[draw,circle, left=1.25 of a2] (a2left) {$x_2$};
    \node[draw,circle, right=1.25 of a2] (a2right) {$x_2$};
    \node[leaf, below=.35cm of a2      ] (w5) {$1$};

    \draw[<-] (a1) --++(90:2cm) node[right] {$\frac{1}{4}+\frac{1}{4}\sqrt{2}+\frac{1}{4}i$};
    \draw[e0=20] (a1) edge  node[left,pos=.5] {1} (a2left);
    \draw[e1=20] (a1) edge  node[above right,pos=.6] {$i-i\sqrt{2}$} (a2right);
    \draw[e0=20] (a2left) edge  node[left,pos=.3] {1} (w5);
    \draw[e1=20] (a2left) edge  node[text width = 1.5 cm, right,pos=.5] {$1-\frac{1}{2}\sqrt{2}$\\$-i+\frac{1}{2}i\sqrt{2}$} (w5);
    \draw[e0=20] (a2right) edge  node[left,pos=.5] {1} (w5);
    \draw[e1=20] (a2right) edge  node[text width = 1.5 cm,right,pos=.5] {$-1-\frac{1}{2}\sqrt{2}$\\$+i+\frac{1}{2}i\sqrt{2}$} (w5);

\end{tikzpicture}
    \caption{EVDD example}
    \label{fig:leadingExampleEVDD}
\end{subfigure}
\begin{subfigure}{\textwidth}
    \centering
    \tikzset{every picture/.style={->,thick}}

\begin{tikzpicture}[
    scale=0.3,
    every path/.style={>=latex},
    every node/.style={},
    inner sep=1pt,
    minimum size=0.3cm,
    line width=1pt,
    node distance=.8cm,
    thick,
    font=\scriptsize
    ]

    \node[draw,circle, right=1.6cm  of a1] (a1) {$x_1$};
    \node[draw,circle, below = .4cm of a1 ] (a2) {$x_2$};
    \node[leaf, below=.35cm of a2      ] (w5) {$1$};

    \node[below = .2 of w5] (startedge) {};
    \node[right = .8cm of startedge] (finaledge) {};
    \draw[->,dashed] (startedge) -- (finaledge) node[below,pos=.5]{$H_1$};
    
    \draw[<-] (a1) --++(90:2cm) node[right,pos=.7] {$I\otimes I$};

    \draw[e0=20] (a1) edge  node[left,pos=.3] {$1I$} (a2);
    \draw[e1=20] (a1) edge  node[right,pos=.3] {0} (a2);
    \draw[e0=20] (a2) edge  node[left,pos=.3] {1} (w5);
    \draw[e1=20] (a2) edge  node[right,pos=.3] {0} (w5);

    \node[draw,circle, right=.6cm  of a1] (a1) {$x_1$};
    \node[draw,circle, below = .4cm of a1 ] (a2) {$x_2$};
    \node[leaf, below=.35cm of a2      ] (w5) {$1$};

    \draw[<-] (a1) --++(90:2cm) node[right,pos=.7] {$\frac{1}{\sqrt{2}}I\otimes I$};

    \draw[e0=20] (a1) edge  node[left,pos=.3] {$1I$} (a2);
    \draw[e1=20] (a1) edge  node[right,pos=.3] {$1I$} (a2);
    \draw[e0=20] (a2) edge  node[left,pos=.3] {1} (w5);
    \draw[e1=20] (a2) edge  node[right,pos=.3] {0} (w5);
    
    \node[below = .2 of w5] (startedge) {};
    \node[right = .8cm of startedge] (finaledge) {};
    \draw[->,dashed] (startedge) -- (finaledge) node[below,pos=.5]{$H_2$};

    \node[draw,circle, right=.6cm  of a1] (a1) {$x_1$};
    \node[draw,circle, below = .4cm of a1 ] (a2) {$x_2$};
    \node[leaf, below=.35cm of a2      ] (w5) {$1$};

    \draw[<-] (a1) --++(90:2cm) node[right,pos=.7] {$\frac{1}{2}I\otimes I$};

    \draw[e0=20] (a1) edge  node[left,pos=.3] {$1I$} (a2);
    \draw[e1=20] (a1) edge  node[right,pos=.3] {$1I$} (a2);
    \draw[e0=20] (a2) edge  node[left,pos=.3] {1} (w5);
    \draw[e1=20] (a2) edge  node[right,pos=.3] {1} (w5);

    \node[below = .2 of w5] (startedge) {};
    \node[right = .8cm of startedge] (finaledge) {};
    \draw[->,dashed] (startedge) -- (finaledge) node[below,pos=.5]{$CZ_{1,2}$};

    \node[draw,circle, right=.6cm  of a1] (a1) {$x_1$};
    \node[draw,circle, below = .4cm of a1 ] (a2) {$x_2$};
    \node[leaf, below=.35cm of a2      ] (w5) {$1$};

    \draw[<-] (a1) --++(90:2cm) node[right,pos=.7] {$\frac{1}{2}I\otimes I$};

    \draw[e0=20] (a1) edge  node[left,pos=.3] {$1I$} (a2);
    \draw[e1=20] (a1) edge  node[right,pos=.3] {$1Z$} (a2);
    \draw[e0=20] (a2) edge  node[left,pos=.3] {1} (w5);
    \draw[e1=20] (a2) edge  node[right,pos=.3] {1} (w5);

    \node[below = .2 of w5] (startedge) {};
    \node[right = .8cm of startedge] (finaledge) {};
    \draw[->,dashed] (startedge) -- (finaledge) node[below,pos=.5]{$T_1$};

    \node[draw,circle, right=.6cm  of a1] (a1) {$x_1$};
    \node[draw,circle, below = .4cm of a1 ] (a2) {$x_2$};
    \node[leaf, below=.35cm of a2      ] (w5) {$1$};

    \draw[<-] (a1) --++(90:2cm) node[right,pos=.7] {$\frac{1}{2}I\otimes I$};

    \draw[e0=20] (a1) edge  node[left,pos=.3] {$1I$} (a2);
    \draw[e1=20] (a1) edge  node[right,pos=.3] {$(\frac{1}{2}\sqrt{2}+\frac{1}{2}i\sqrt{2})Z$} (a2);
    \draw[e0=20] (a2) edge  node[left,pos=.3] {1} (w5);
    \draw[e1=20] (a2) edge  node[right,pos=.3] {1} (w5);

    \node[below = .2 of w5] (startedge) {};
    \node[right = 2.1cm of startedge] (finaledge) {};
    \draw[->,dashed] (startedge) -- (finaledge) node[below,pos=.5]{$H_1$};

    \node[draw,circle, right=2.cm  of a1] (a1) {$x_1$};
    \node[draw,circle, below = .4cm of a1 ] (a2) {$x_2$};
    \node[leaf, below=.35cm of a2      ] (w5) {$1$};

    \draw[<-] (a1) --++(90:2cm) node[right,pos=.7] {$(\frac{1}{4}+\frac{1}{4}\sqrt{2}+\frac{1}{4}i)I\otimes I$};

    \draw[e0=20] (a1) edge  node[left,pos=.3] {$1I$} (a2);
    \draw[e1=20] (a1) edge  node[right,pos=.3] {$1X$} (a2);
    \draw[e0=20] (a2) edge  node[left,pos=.3] {1} (w5);
    \draw[e1=20] (a2) edge  node[right,pos=.3] {$i-i\sqrt{2}$} (w5);

    \node[below = .2 of w5] (startedge) {};
    \node[right = 3.1cm of startedge] (finaledge) {};
    \draw[->,dashed] (startedge) -- (finaledge) node[below,pos=.5]{$T_2$};

    \node[draw,circle, right=3.0cm  of a1] (a1) {$x_1$};
    \node[ below = .4cm of a1 ] (a2) {};
    \node[draw,circle, left=1.25 of a2] (a2left) {$x_2$};
    \node[draw,circle, right=1.25 of a2] (a2right) {$x_2$};
    \node[leaf, below=.35cm of a2      ] (w5) {$1$};

    \draw[<-] (a1) --++(90:2cm) node[right] {$(\frac{1}{4}+\frac{1}{4}\sqrt{2}+\frac{1}{4}i)I\otimes I$};
    \draw[e0=20] (a1) edge  node[left,pos=.5] {$1I$} (a2left);
    \draw[e1=20] (a1) edge  node[above right,pos=.6] {$(\frac{1}{2}\sqrt{2}+\frac{1}{2}i\sqrt{2})X$} (a2right);
    \draw[e0=20] (a2left) edge  node[left,pos=.3] {1} (w5);
    \draw[e1=20] (a2left) edge  node[text width = 1.5 cm, right,pos=.5] {$1-\frac{1}{2}\sqrt{2}$\\$-i+\frac{1}{2}i\sqrt{2}$} (w5);
    \draw[e0=20] (a2right) edge  node[left,pos=.5] {1} (w5);
    \draw[e1=20] (a2right) edge  node[text width = 1.5 cm,right,pos=.5] {$-1+\frac{1}{2}\sqrt{2}$\\$-i+\frac{1}{2}i\sqrt{2}$} (w5);
\end{tikzpicture}
    \caption{LIMDD example}
    \label{fig:leadingExampleLIMDD}
\end{subfigure}
\caption{Example of \autoref{thm:edgelabelbound}, \autoref{cor:LIMDD-width_mainText} and \autoref{cor:EVDD-bound}, showing the intermediate EVDD and LIMDDs of the 2-qubit Clifford$+T$ circuit from \Cref{fig:circuit_example}.\\
\autoref{thm:edgelabelbound}: All edge values in the bulk and the magnitude of the root edge value are contained in $R_{2n+2t+1}$ after $t$ T-gates have been applied. (Note that the magnitude of the root edge label for the two rightmost states is $|\frac{1}{4}+\frac{1}{4}\sqrt{2}+\frac{1}{4}i|^2=\frac{1}{4}+\frac{1}{8}\sqrt{2}$, which is in $R_3$.) \\
\autoref{cor:LIMDD-width_mainText}: The EVDD width is at most $2^{\min({\#H,2\cdot\#CZ + \#T})}$ where $\#G$ is the number of applied $G$-gates for $G\in\{H,CZ,T\}$.\\
\autoref{cor:EVDD-bound}: The LIMDD width is at most $2^t$ after $t$ T-gates have been applied.}
\label{fig:leadingExample}
\end{figure}

We subtly change the way a decision diagram is stored, as the theorem does not give a guarantee on the size of the root edge label. We choose to not store the root edge label $r$ itself, but $|r|^2$ which by the theorem can be described in linear space. Fortunately, this suffices for a full description of $\ket{\phi}$ because quantum states are defined up to a global phase, i.e., for any $\lambda \in \mathbb{C}$ with $|\lambda| = 1$, the states $\ket{\phi}$ and $\lambda \cdot \ket{\phi}$ are indistinguishable~\cite{nielsen2010quantum}.

\subsection{Simulation with decision diagrams needs only short numbers}
\label{subsec:measurement_probabilities_are_in_qisq}
We now prove \Cref{thm:edgelabelbound-dynamic}, an extension of \Cref{thm:edgelabelbound} that shows that not only the edge values of the output state of a Clifford+$t\times T$ circuit $U$ are short, but in fact any edge value encountered when simulating $U$ using the gate-simulation algorithms for EVDD~\cite{lai1994evbdd,tafertshofer1994factored,tafertshofer1997factored} or LIMDD~\cite{vinkhuijzen2023limdd}.
What's more, the complex numbers encountered when simulating computational-basis measurement also remain short.

Note that the proof requires a careful analysis of the simulation algorithms for both EVDD and LIMDD. We analyze the simulation algorithms that are provided by~\cite{vinkhuijzen2023limdd}. The details of the analysis are in \autoref{app:quantum_simulation_coefficients_analysis}.

\begin{theorem}[DD-simulation needs only short numbers.]%
\label{thm:edgelabelbound-dynamic}
Let $U$ be an $n$-qubit Clifford+$t\times T$ unitary.
Using an EVDD-low or LIMDD-low representation of the quantum state and only storing the magnitude $|r|^2$ of the root edge label $r$, the complex numbers encountered when simulating $U$ gate-by-gate on $\ket{0}^{\otimes n}$ are elements of $R_{O(n+t)}$ and can thus each be described in $O(n+t)$ bits.
The same holds for simulating a computational-basis measurement on the top-qubit of $U\ket{0}^{\otimes n}$.
\end{theorem}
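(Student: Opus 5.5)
The plan is to reduce every number that appears during simulation to a \emph{ratio of density-matrix entries} (or a sum of such entries) of some state in $\cktstates^{n'}_{t'}$ with $n'=O(n)$ and $t'=O(t)$. We then invoke \Cref{lemma:vector-entry-lemma} together with the quotient computation used in \Cref{thm:edgelabelbound} (\Cref{lemma:ratio_of_state-vector_entries_is_linearly_representable}), which places such ratios in $R_{2(n'+t')+1}=R_{O(n+t)}$.

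The first step is to fix the simulation algorithms of~\cite{vinkhuijzen2023limdd} for EVDD and LIMDD. For each gate type in $\{H,S,CZ,T\}$ (and the Pauli/CNOT variants), we enumerate the intermediate objects they create. These are recursive calls on sub-diagrams, which represent (up to a scalar) sub-vectors $\langle x_{1..k}|\phi\rangle$ of the current state $\ket{\phi_j}$ or of $G\ket{\phi_j}$. They also include linear combinations produced by $H$ (sums and differences of two child edges), and normalisation steps that divide out a low-edge value.

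The key observation is that every intermediate node is a \emph{canonical} node of some vector of the form $\langle z|\psi\rangle$, where $\psi$ is either $\ket{\phi_j}$, $\ket{\phi_{j+1}}$, or a partial-gate state such as $(I\otimes\cdots\otimes H_k\otimes\cdots)$ applied to a prefix. Each such $\psi$ is itself produced by a Clifford+$\le t\times T$ circuit, possibly after projecting some top qubits onto a basis state. To cover these projections, we plan to realise the projected sub-vector as the output of a Clifford+$T$ circuit on the same $n$ qubits, up to a normalisation factor $\sqrt{p}$. Since ratios are insensitive to that factor, the argument behind \Cref{thm:edgelabelbound} still applies. Alternatively, we can directly apply the Pauli-decomposition argument to $\Pi\phi\Pi$, which gives entries of $\phi$ restricted to a block and therefore still lies in $Q_{n,t}$.

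Intermediate edge values are then either quotients of entries of such states, which lie in $R_{O(n+t)}$ as in \Cref{thm:edgelabelbound}, or sums/differences of two values of the form $\langle x|\psi\rangle/\langle y|\psi\rangle$ arising inside the $H$ recursion. For the latter we rewrite the sum over a common denominator $\langle y|\psi\rangle$ and multiply numerator and denominator by $\langle\psi|y\rangle$. This expresses it as a sum of two elements of $Q_{n,t}$ divided by a real element of $Q_{n,t}$. Such a number lies in a slightly larger set, handled by rerunning the computation of \Cref{lemma:ratio_of_state-vector_entries_is_linearly_representable} with bounds $2^{n+t+1}$, costing only $O(1)$ extra bits. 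For the $T$ gate, the extra factor $\omega$ multiplies edges by an element of $R_1$; the product of an element of $R_a$ and $R_1$ stays in $R_{a+O(1)}$ by a direct calculation. For LIMDD we additionally note that the LIM phases $\pm1,\pm i$ and the Pauli factors preserve $R_m$, as already used in \Cref{thm:edgelabelbound}.

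For measurement, the probability of outcome $j$ on the top qubit is $\sum_l|\langle jl|\phi\rangle|^2=\operatorname{Tr}(\Pi_j\phi)$, which equals $\frac12(1\pm c_{Z\otimes I})$. By \Cref{lem:C_k-representation} this lies in $R_{O(t)}$. The DD algorithm computes it recursively as $|r|^2$ times weighted squared norms of sub-diagrams. Each intermediate squared norm is a sum of diagonal entries of $\phi$ over a subcube divided by $|r|^2$, i.e.\ $\operatorname{Tr}(\Pi\phi)/\langle y|\phi|y\rangle$ for a projector $\Pi$ onto a subcube. The Pauli expansion of $\operatorname{Tr}(\Pi\phi)$ involves only $Z$-type strings restricted to the projected qubits, so it is again of the form $\ell/\sqrt{2^t}+m/\sqrt{2^{t-1}}$ with bounded integers. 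The quotient is therefore in $R_{O(n+t)}$. Renormalising the post-measurement state divides the root by $\sqrt p$, but since we store only $|r|^2$, this becomes division by $p$, which stays inside $R_{O(n+t)}$.

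The main obstacle is the first step: making precise that \emph{every} intermediate quantity in the recursive algorithms of~\cite{vinkhuijzen2023limdd} (including cache entries and the LIMDD canonicalisation, which chooses among stabilizer-equivalent representatives and multiplies by LIMs) is of the "ratio of density-matrix entries of a short Clifford+$T$ state" form. We would attack this by induction on the recursion depth, with the invariant that each returned edge (scalar, node) represents $\lambda\cdot\langle z|\psi\rangle$ for an admissible $\psi$, where $\lambda$ is such a ratio.
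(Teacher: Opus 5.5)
Your plan is essentially the paper's proof. The paper likewise writes every intermediate value of \textsc{Follow}, \textsc{Add}, \textsc{MakeEdge}, \textsc{GetLabels}, \textsc{HGate} and \textsc{SquaredNorm} as a constant-size expression ($p$, $p+q$, $p/q$, $\frac{p+q}{r+s}$, $\frac{1}{\sqrt2}(p+q)$) in products of consecutive edge labels, i.e.\ in ratios of entries of the pre- or post-gate state, and concludes with \Cref{lemma:vector-entry-lemma}, \Cref{lemma:ratio_of_state-vector_entries_is_linearly_representable} and \Cref{lemma:basic_operations_in_qisq_are_tractable}. The recursion-depth induction you defer is what \Cref{lemma:AnalysisMakeEdge,lemma:AnalysisAdd,lemma:AnalysisHGate} carry out, and your measurement argument is that of \Cref{lemma:SquaredNorm_values_are_efficiently_representable}.

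Two side claims are wrong, though neither affects the result:
\begin{enumerate}
\item $R_k\cdot R_1\subseteq R_{k+O(1)}$ fails. For $x=a+b\sqrt2+ci+di\sqrt2$, the coefficients of $\omega x$ are $b\pm d$ and $(a\pm c)/2$, and their reduced denominators can be products of two original denominators. So you only get $R_{2k+O(1)}$, which still suffices for constantly many such steps.
\item Your first way of handling projections fails. Normalising a projected sub-vector by $\sqrt p$ can create squared amplitudes such as $1/3\notin\mathbb{Z}[\tfrac{1}{\sqrt2}]$, so the result is not a Clifford$+T$ output even up to phase. Your second option is all you need: sub-vector entries are simply entries of $\phi$.
\end{enumerate}
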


\begin{proof}[Proof sketch]%

\textit{Gates:} As the gates $H, CZ,S$ and $T$ span the entire Clifford$+T$ gate set, we limit our analysis to those gates.

We know by~\autoref{thm:edgelabelbound} that the bulk edge labels and magnitude of the root edge label are in $R_{O(n+t)}$ before and after the application of each gate. We only need to analyze which complex numbers are encountered during the execution of the algorithms for applying gates.

By \autoref{lemma:basic_operations_in_qisq_are_tractable}, we know that a constant number of arithmetic operations of numbers in $R_{O(n+t)}$ gives a number in $R_{O(n+t)}$ again. Therefore, it is enough to show that a constant number of arithmetic operations are performed on the edge values of the input DD, or that intermediate values of the calculation equal an edge label of the DD after the application of a gate. 

The only numerical calculations in the gate-application algorithms of the gates $S,CZ$, and $T$ are multiplication of the edge values of the initial DD with $\pm1,i$ or $e^{i\pi /4}$ (the nonzero entries of the $S,T$, and $CZ$ gates). Such multiplication of an initial edge value, which is in $R_{O(n+t)}$, is again in $R_{O(n+t)}$. For EVDD, this concludes the proof.
For LIMDD, normalization needs to ensure high determinism~\cite{vinkhuijzen2023limdd}, leading to new normalization of edge labels. This can involve one of the following: replace high edge label $a$ by $\frac{1}{a}$; multiplication of the root edge label by the high edge label; multiplication of high edge label and/or root edge label by $-1,i,-i$ (\Cref{lemma:AnalysisMakeEdge}). The updated high edge label is part of the decision diagram after application of the gate, so it is again in $R_{O(n+t)}$. Since a product of subsequent edge labels along a path in the decision diagram is in $R_{O(n+t)}$ (\Cref{lem:simple_sums_quotients_of_subseq_edgelabels_are_small}), this also holds for the updated root edge label. So we conclude that the gates $S,CZ$, and $T$ can be applied only using numbers in $R_{O(n+t)}$. The figure below shows an example of applying a $T$-gate to qubit $k$. In this example, the edges are swapped if the multiplicative inverse of the high edge label is smaller than the high edge label itself, ensuring high determinism (cf. \Cref{alg:make-edge,alg:get-labels}). %

\begin{figure}[h]
\centering
\tikzset{every picture/.style={->,thick}}

\begin{tikzpicture}[
    scale=0.3,
    every path/.style={>=latex},
    every node/.style={},
    inner sep=1pt,
    minimum size=0.3cm,
    line width=1pt,
    node distance=.8cm,
    thick,
    font=\scriptsize
    ]

    \node[draw,circle] (a1) {$x_k$};
    \node[draw,circle, below = .4cm of a1, xshift=-.65cm] (a2) {$x_{k+1}$};
    \node[draw,circle, below = .4cm of a1, xshift= .65cm] (a3) {$x'_{k+1}$};
    
    \draw[<-] (a1) --++(90:2cm) node[right,pos=.7] {$c$};
    \draw[e0 = 0] (a1) edge  node[above left] {$1$} (a2);
    \draw[e1 = 0] (a1) edge  node[above right] {$a$} (a3);

    \node[draw,circle,right= 4cm of a1] (a1a) {$x_k$};
    \node[draw,circle, below = .4cm of a1a, xshift=-.65cm] (a2) {$x_{k+1}$};
    \node[draw,circle, below = .4cm of a1a, xshift= .65cm] (a3) {$x'_{k+1}$};
    
    \draw[<-] (a1a) --++(90:2cm) node[right,pos=.7] {$c$};
    \draw[e0 = 0] (a1a) edge  node[above left] {$1$} (a2);
    \draw[e1 = 0] (a1a) edge  node[above right] {$a\cdot e^{i\pi/4}$} (a3);

    \draw[->,dashed, shorten >=1cm,  shorten <=1cm] (a1) edge node[above] {apply $T$ gate}  (a1a);

    \node[draw,circle,right= 5cm of a1a] (a1b) {$x_k$};
    \node[draw,circle, below = .4cm of a1b, xshift=-.65cm] (a2b) {$x'_{k+1}$};
    \node[draw,circle, below = .4cm of a1b, xshift= .65cm] (a3b) {$x_{k+1}$};
    
    \draw[<-] (a1b) --++(90:2cm) node[above right,pos=.7] {$c\cdot a\cdot e^{i\pi/4} X$};
    \draw[e0 = 0] (a1b) edge  node[above left] {$1$} (a2b);
    \draw[e1 = 0] (a1b) edge  node[above right] {$\frac{1}{a\cdot e^{i\pi/4}}$} (a3b);

    \draw[->,dashed, shorten >=1cm,  shorten <=1cm] (a1a) edge node[below] {if $a\cdot e^{i\pi/4}>\frac{1}{a\cdot e^{i\pi/4}}$} node[above] {normalize node $x_k$} (a1b);
    
\end{tikzpicture}
\end{figure}

DD updates by $H$-gates may need more numerical calculations, because edges need to be added and subtracted. In the appendix (\Cref{cor:coeff_bound_during_simulation--appendix}), we unravel the structure of all intermediate values in terms of the edge values in the input DD. As all intermediate values are always a product or sum of a constant number of edge values of the input DD, the intermediate values during the numerical calculations are also tractable in size: they remain all in $R_{O(n+t)}$.

\textit{Measurement:} Measuring the top-qubit requires a recursive calculation of the squared norm of the underlying nodes. In the appendix (\Cref{cor:measurements_with_efficient_numbers}), we show, by using pre- and postconditions on the squared norm algorithm, that all numerical values during this recursion are in $R_{O(n+t)}$.
\end{proof}

Finally, we note that although our proof shows that each \emph{intermediate} value during the measurement algorithm is in $R_{O(n+t)}$, we can make a stronger claim for the \emph{final} return value, i.e., the measurement outcome probability: this probability is in $Q_{2n,2n+t}$.
To see this, note that the probability of outcome 0 (similar for 1) can be written as $\sum_{x\in\{0,1\}^{n-1}}|\langle 0\ x\mid \phi\rangle|^2=\sum_{x\in\{0,1\}^{n-1}} \langle 0 \ x\mid\phi\rangle\langle\phi\mid 0 \ x\rangle$, and is thus is a sum of $2^{n-1}$ (diagonal) entry in the density matrix, which are all in $Q_{n,t}$ by \autoref{lemma:vector-entry-lemma}, up to a factor $2^n$. Since $\{2^{-n} \cdot x \mid x \in Q_{n, t} \} \subseteq Q_{n, 2n+t}$ and $\{\sum_{i=1}^{2^{n-1}}a_i\mid a_i\in Q_{n, 2n+t}\}\subseteq Q_{2n,2n+t}$, the probability of outcome 0 and 1 are in $Q_{2n,2n+t}$.

\section{Bounds on the size of decision diagrams}
\label{sec:numnodes_bound}

In the previous section, we showed that the edge coefficients in a decision diagram can be represented efficiently in the $T$-count of a circuit. 
This section shows that the number of nodes of a LIMDD is also tractable in the $T$-count of a circuit, and for EVDD fixed-parameter tractable in the minimum of $H$-, $CZ$-, and $T$-count (see \Cref{thm:ddsizebounds}, stated below). The theorem is independent of the canonicity rule that is used and are valid for both the `low' and `L2' canonicity rules.

\begin{theorem}[DD size bounds]%
\label{thm:ddsizebounds}
Let $U$ be an $n$-qubit unitary, consisting of only $X,Y,Z,H, S, T, SWAP$ and $CZ$ gates.
Then $U\ket{0}^{\otimes n}$ is represented by a LIMDD resp. EVDD with the number of nodes upper bounded by $n\cdot 2^{\#T}$ resp. $n\cdot 2^{\min(\#H,2\cdot \#CZ + \#T)}$ (for both the `low' and `L2' DD variants).
Here, $\#G$ means the number of $G$ gates in the circuit  $U$ for $G=H, T,CZ$ gates.
\end{theorem}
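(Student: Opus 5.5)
The plan is to bound the width of the diagram, that is, the number of distinct nodes at each level $k$, and then sum over the $n$ levels. At level $k$, the nodes of a canonical DD correspond one-to-one to the equivalence classes of the nonzero subfunctions $\phi_x := (\bra{x}\otimes I^{\otimes (n-k)})\ket{\phi}$ for $x\in\{0,1\}^{k}$. Here equivalence means scalar multiples for EVDD and $\gamma P$-multiples for LIMDD, and this holds for both the `low' and `L2' rules since both are canonical for the same equivalence. So it suffices to show that there are at most $2^{\#T}$ LIMDD-classes and at most $2^{\min(\#H,2\#CZ+\#T)}$ EVDD-classes of subfunctions at every level.

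\textbf{LIMDD.} I would use stabilizer nullity. Let $S=S(\ket{\phi})$ with $|S|=2^{n-\nu}$. Initially $\nu=0$. Clifford gates conjugate $S$ bijectively, so they preserve $\nu$. A $T$ gate on qubit $j$ keeps every element of $S$ acting as $I$ or $Z$ on $j$. Those elements form a subgroup of index at most $2$, so $\nu\le \#T$ after the circuit. Now split the qubits into the first $k$ (register $A$) and the remaining $n-k$ (register $B$). Consider the subgroup $G_A$ of $S$ consisting of elements $P_A\otimes P_B$ where $P_A$ ranges over the whole Pauli group on $A$. Such a stabilizer sends $\phi_x$ to a multiple of $P_B\phi_{x'}$, where $P_A$ permutes basis strings $x\mapsto x'$ up to phase. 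Hence all $x$ in one orbit of the $X$-parts give LIMDD-equivalent subfunctions. The key counting step is to show that the number of orbits on the support $\{x:\phi_x\neq 0\}$ is at most $2^{\nu}$. The support is an affine subspace $x_0+V$, because $Z$-type stabilizers on $A$ (coming from $Z_A\otimes I_B$ and $Z_A\otimes Z_B$-type elements) cut the support down. Meanwhile the $X$-parts of $S$ restricted to $A$ generate a subspace $W\subseteq V$, and one needs $\dim V-\dim W\le\nu$. I would prove this by counting: a stabilizer state has $|S|=2^n$ and the corresponding orbit count is $1$, and each missing generator can at most double $|V|/|W|$. This can be argued by writing $S$ in a generator form that is row-reduced on $A$'s $X$-part, in the way the paper counts generators, or via the local stabilizer nullity bound of~\cite{beverland2020lower}. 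This is the main obstacle, and I would first try a direct linear-algebra argument on the check matrix of $S$ restricted to $A$.

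\textbf{EVDD.} Two separate bounds are needed. The first is $2^{\#H}$. Without $H$ gates the state stays a scalar multiple of a single basis state up to diagonal phases. More generally, by induction over the gates, $\ket{\phi}=\sum_{y\in\{0,1\}^{h}} c_y\ket{f(y)}$, where $h=\#H$ and $f$ is affine, since $X$, $SWAP$ and diagonal gates do not increase the number of terms and each $H$ at most doubles it. So $\ket{\phi}$ has at most $2^{h}$ nonzero entries, which bounds the number of nonzero subfunctions per level by $2^{h}$. The second is $2^{2\#CZ+\#T}$. In this case I would instead cut the circuit into single-qubit pieces. Each $CZ$ can be written as $\sum_{a,b}$-style decomposition $CZ=\ket{0}\!\bra{0}\otimes I+\ket{1}\!\bra{1}\otimes Z$, which has operator Schmidt rank $2$, and each $T$ gate then has to be handled separately. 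My plan is to show instead that $\ket{\phi}$ is, up to relabeling via the linear structure, a sum of $2^{2\#CZ+\#T}$ product states whose single-qubit factors all lie in one EVDD class per qubit pattern. Concretely, the single-qubit Clifford gates acting on $\ket{0}$ produce only six states up to scalars, and $T$ adds at most a doubling of these patterns. This yields at most that many distinct subfunctions up to scalars at each level. This branch is less clean, and I would verify it against the per-gate width bounds of~\cite{vinkhuijzen2024a} if needed. Multiplying the per-level width bound by the $n$ levels gives the stated node bounds.
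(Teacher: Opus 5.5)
Your bound $\nu\le\#T$ matches the paper. Your $2^{\#H}$ EVDD bound is correct and more elementary than the paper's: every gate other than $H$ is a monomial matrix, so there are at most $2^{\#H}$ nonzero amplitudes, hence at most $2^{\#H}$ nonzero subfunctions per level. There are, however, two gaps.

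\textbf{LIMDD.} Your route (cut at level $k$, count $W$-orbits) differs from the paper's top-down induction (\Cref{lemma:splitCase}, \Cref{thm:m_stabs_LIMDD_width}). The reduction to counting cosets of $W$ that meet the support is sound. But the inequality you call the main obstacle is the entire content, and it is not proved; the sketch you give for it is also wrong. Stabilizers $Z^{z}\otimes P_B$ with $P_B\neq I$ do not cut down the support. For example, $\alpha\ket{00}+\beta\ket{11}$ with generic $\alpha,\beta$ has $S=\{II,ZZ\}$, yet at $k=1$ both $\phi_0$ and $\phi_1$ are nonzero. Moreover, the support is in general only contained in an affine subspace, not equal to one.

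A correct version runs as follows. View $S$ modulo phases as a subspace of $\mathbb{F}_2^{2n}$ of dimension $n-\nu$, and let $Z_0=\{z : \pm Z^{z}\otimes I_B\in S\}$. Then the support lies in a coset of $Z_0^{\perp}$, and $W\subseteq Z_0^{\perp}$ by commutativity. Let $S'$ be the subgroup of $S$ with zero $X$-part on $A$, and $\pi_B(S')$ its $B$-parts. Then $n-\nu=\dim W+\dim Z_0+\dim\pi_B(S')$. Since the $A$-parts of $S'$ are all $Z$-type, $\pi_B(S')$ consists of pairwise commuting Paulis on $n-k$ qubits, so $\dim\pi_B(S')\le n-k$. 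This gives $k-\dim Z_0-\dim W\le\nu$, i.e.\ at most $2^{\nu}$ cosets. This isotropy step is exactly what your ``each missing generator at most doubles'' heuristic leaves out.

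\textbf{EVDD, the $2^{2\#CZ+\#T}$ bound.} Here the approach fails as stated. Decomposing $\ket{\phi}$ into $r$ product states bounds the \emph{dimension of the span} of the level-$k$ subfunctions by $r$. It does not bound their number up to scalars, which is what EVDD width counts. Take $\bigotimes_{i=1}^{n}(\ket{0}+\ket{1})+\bigotimes_{i=1}^{n}(\ket{0}+q_i\ket{1})$ with distinct primes $q_i$: it is a sum of two product states, yet its $2^{k}$ level-$k$ subfunctions ($k<n$) are pairwise non-proportional. Your extra conditions (``one EVDD class per qubit pattern'', the six single-qubit stabilizer states, ``$T$ at most doubles the patterns'') are not made precise and do not address this. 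Also, splitting the $CZ$s yields $2^{\#CZ}$ terms, and the $T$ gates add none.

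The missing idea is the paper's. A qubit carrying a local stabilizer $\pm P_k$, equivalently one that splits off as a tensor factor, can never cause branching at its level. Only $T$ and $CZ$ gates can destroy such a stabilizer (at most one resp.\ two each). This gives $\nu_L\le 2\#CZ+\#T$ and width at most $2^{\nu_L}$ (\Cref{lemma:h-cz-t-count_EVDD}, \Cref{theorem:EVDD_width_from_number_of_stabilizers}).

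Your decomposition can be repaired in the same spirit. Push all SWAPs to the end. Then every qubit never acted on by a $CZ$ carries the identical single-qubit factor in all $2^{\#CZ}$ terms, so it factors out. The width can therefore double only at the at most $2\#CZ$ levels of $CZ$-touched qubits, which even gives $2^{2\#CZ}$.
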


Our proof, worked out in detail in \Cref{appendix:numnodes_bounds} and sketched for LIMDD in \Cref{subsec:bounds_on_LIMDD_size} and for EVDD in \Cref{subsec:bounds_on_EVDD_size}, depends on the size of the stabilizer group of the represented quantum state, quantified using the notion of the (local) stabilizer nullity.

\subsection{Bounds on LIMDD size}
\label{subsec:bounds_on_LIMDD_size}

The main ingredient for the LIMDD size bound proof is the stabilizer nullity. This quantity depends on the Pauli stabilizers of a state, as defined in~\cref{sec:qc}.

\begin{definition}[stabilizer nullity~\cite{beverland2020lower}]
    The stabilizer nullity $\nu(\ket{\phi})$ of an $n$-qubit quantum state $\ket{\phi}$ the difference between the number of qubits and the number of generating stabilizers of $\ket{\phi}$. Formally, $\nu(\ket{\phi}):= n - \log_2(|S(\ket{\phi})|)$.
\end{definition}

For an $n$ qubit state, the stabilizer nullity is an integer between 0 and $n$. Examples are $\nu(\ket{\phi}) = 0$ when $\ket{\phi}$ is a stabilizer state, and $\nu((T\cdot H \ket{0})^{\otimes n})) = n$.

We denote by the $T$-count of an $n$-qubit state $\ket{\phi}$, the smallest $T$-count over all Clifford+$k\times T$ circuits which output $\ket{\phi}$ on input $\ket{0}^{\otimes n}$.
In general, the stabilizer nullity is upper bounded by the $T$-count, as shown in the following lemma.

\begin{lemma}
\label{lemma:stabilizer-count_t-count-mainText}
    For every $\ket{\phi}\in\cktstates^{n}_t$ the stabilizer nullity is upper bounded by $\ket{\phi}$'s $T$-count. Stated differently, for every $\ket{\phi}\in\cktstates^{n}_t$ we have $\nu(\ket{\phi})\leq t$.
\end{lemma}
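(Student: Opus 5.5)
We argue by induction over the gates of a Clifford$+t\times T$ circuit $U = G_g \cdots G_1$ with $t$ $T$ gates, tracking the stabilizer group $S(\ket{\phi_j})$ of the intermediate states $\ket{\phi_j} = G_j\cdots G_1\ket{0}^{\otimes n}$. The claim to establish is that $\nu(\ket{\phi_j})$ is at most the number of $T$ gates among $G_1,\dots,G_j$. Since the $T$-count of $\ket{\phi}$ is a minimum over circuits, it suffices to prove $\nu(U\ket{0}^{\otimes n})\leq t$ for every such circuit; the ``stated differently'' form then gives the first form by choosing a $T$-optimal circuit.

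\emph{Base case.} The input $\ket{0}^{\otimes n}$ is stabilized by the group generated by $Z_1,\dots,Z_n$. This group has $2^n$ elements, so $\nu=0$.

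\emph{Clifford step.} For a Clifford gate $C$, conjugation $P\mapsto CPC^\dagger$ is a group automorphism of $\mathfrak{P}_n$. If $P\ket{\phi}=\ket{\phi}$, then $CPC^\dagger\, C\ket{\phi}=C\ket{\phi}$. Hence $S(C\ket{\phi}) = C S(\ket{\phi}) C^\dagger$, which has the same cardinality, and $\nu$ is unchanged.

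\emph{$T$ step.} Let $T$ act on qubit $k$ and consider the subgroup $S_0=\{P\in S(\ket{\phi}) : P \text{ acts on qubit } k \text{ as } I \text{ or } Z\}$. This is the kernel of the homomorphism from $S(\ket{\phi})$ to $\mathbb{Z}_2$ that records whether the $k$-th tensor factor anticommutes with $Z$. (The homomorphism property uses that the single-qubit Pauli group modulo phases is $\mathbb{Z}_2^2$.) Therefore $|S_0|\geq |S(\ket{\phi})|/2$. Every $P\in S_0$ commutes with $T_k$, since $T$ is diagonal and commutes with $I$ and $Z$. Hence $P\,T_k\ket{\phi} = T_k P\ket{\phi} = T_k\ket{\phi}$, so $S_0\subseteq S(T_k\ket{\phi})$. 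It follows that $\log_2|S(T_k\ket{\phi})|\geq \log_2|S(\ket{\phi})|-1$, i.e.\ $\nu$ increases by at most one.

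Combining the three steps, $\nu(U\ket{0}^{\otimes n})\leq t$. The main point requiring care is the $T$ step. There we must check that the index-two claim holds even with phases $\pm1,\pm i$ present in $\mathfrak{P}_n$; the homomorphism depends only on the Pauli letter at position $k$, so phases do not interfere. We must also note that $S(\ket{\phi})$ is a group with no element $-I$, so that $|S|$ is a power of two and $\log_2|S|$ is a meaningful integer; this follows from the generating-set fact recalled in \Cref{sec:qc}.
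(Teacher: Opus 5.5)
Your proof is correct and follows essentially the same route as the paper's proof (\Cref{lemma:CliffTstabilizers}): induct over the circuit, note that Clifford gates preserve the number of independent stabilizers, and observe that a $T$ gate on qubit $k$ keeps every stabilizer acting as $I$ or $Z$ there. Your index-two kernel argument is the group-theoretic form of the paper's reduced-row-echelon step (at most one tableau row has $X$ or $Y$ at qubit $k$), and you also spell out the Clifford step and the phase bookkeeping, which the paper leaves implicit.
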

\begin{proof}[Proof sketch (full proof in \Cref{lemma:CliffTstabilizers})]%
    The idea of the proof is that every $T$-gate can remove at most one generating Pauli string stabilizer. This is because only stabilizers with an $X$ or $Y$ are not preserved under conjugation: $TIT^\dagger = I$,$TZT^\dagger = Z$, but $TXT^\dagger = \frac{X+Y}{\sqrt{2}}$ and $TYT^\dagger = \frac{Y-X}{\sqrt{2}}$. By elimination of the stabilizers, there can be at most one such unpreserved generating stabilizer per applied $T$-gate.
\end{proof}

Now, we relate the stabilizer nullity to the LIMDD width.

\begin{theorem}
\label{thm:LIMDD-width-mainText}
    The LIMDD width of a quantum state is at most exponentially in the stabilizer nullity. In other words, for every quantum state $\ket{\phi}$ the LIMDD width is at most $2^{\nu(\ket{\phi})}$.
\end{theorem}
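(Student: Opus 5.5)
The plan is to bound, for each level $k$, the number of distinct LIMDD nodes at that level. Those nodes correspond to the LIM-equivalence classes of the subvectors $\langle x \mid \phi\rangle$, for prefixes $x\in\{0,1\}^k$, viewed as $(n-k)$-qubit vectors $\ket{\phi_x} := (\bra{x}\otimes I^{\otimes n-k})\ket{\phi}$. Two nonzero such vectors give the same node if $\ket{\phi_x} = \gamma P\ket{\phi_{x'}}$ for some $\gamma\in\complex$ and $P\in\mathcal{P}_{n-k}$. So it suffices to show that the vectors $\{\ket{\phi_x}\}_{x}$ fall into at most $2^{\nu(\ket{\phi})}$ classes under this equivalence. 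The zero subvector, which yields no node or only the single shared zero edge, is handled separately.

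\textbf{Step 1: act with stabilizers on the prefix register.} Let $S=S(\ket{\phi})$, of size $2^{n-\nu}$. Each $g\in S$ factors as $g = \alpha\, g_A\otimes g_B$, with $g_A\in\mathcal{P}_k$ acting on the first $k$ qubits and $g_B\in\mathcal{P}_{n-k}$ on the rest. Write $g_A = \beta X^{a}Z^{b}$ with $a,b\in\{0,1\}^k$. Then $g_A$ maps $\ket{x}$ to a phase times $\ket{x\oplus a}$. From $g\ket{\phi}=\ket{\phi}$ we obtain
\[
\ket{\phi_{x\oplus a}} = \lambda_{x,g}\, g_B \ket{\phi_x}
\]
for some scalar $\lambda_{x,g}\in\{\pm1,\pm i\}$. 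Hence all prefixes in the coset $x \oplus A$, where $A := \{a(g) : g\in S\}$ is the projection of $S$ onto its $X$-part on the first $k$ qubits, give LIM-equivalent subvectors. The number of nodes at level $k$ is therefore at most $2^k/|A|$.

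\textbf{Step 2: bound $|A|$ from below.} The map $g\mapsto a(g)$ is a group homomorphism from $S$ (modulo phases) to $\mathbb{F}_2^k$. Its kernel $K$ consists of stabilizers whose first-$k$ part is $Z$-type, $Z^{b}$. I would bound $|K|$ by restricting to a fixed nonzero-amplitude prefix. Pick $x$ with $\ket{\phi_x}\neq 0$. Each $g\in K$ acts as $\pm(\text{phase})\,g_B$ on $\ket{\phi_x}$, so $g_B$ fixes $\ket{\phi_x}$ up to a sign determined by $b\cdot x$. Consequently, the subgroup $K_0=\{g\in K : g_B \propto I\}$ consists of pure $Z$-strings $Z^{b}\otimes I$, of which at most $2^{k}$ exist. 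Combined with the rest of $K$, this should give $|K|\le 2^{k}\cdot 2^{n-k}/(\ldots)$.

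This counting is the main obstacle. Naively, $|K|\le 2^k\cdot |\{g_B\}|$, and the set of $g_B$ that stabilize $\ket{\phi_x}$ up to phase has size at most $2^{n-k}$, which gives only $|K|\le 2^{n}$, too weak. The refinement I would try is a symplectic/commutation argument. Restricted to the first $k$ qubits, the elements of $S$ form an abelian-up-to-sign structure because the full stabilizers commute. Hence the $Z$-parts $b$ of $K$ restricted to the prefix must commute with the $X$-parts in $A$. This forces $\dim\langle b\rangle + \dim A \le k$ plus correction terms coming from the suffix. The precise inequality to aim for is
\[
\log_2|S| \le \dim A + (n-k),
\]
which gives $|A| \ge 2^{k-\nu}$ and hence at most $2^{\nu}$ classes per level.

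The backup route proves the inequality directly: $\dim K \le n-k$. Elements of $K$ with the same $g_B$ differ by a $Z$-string on the prefix. Such a string fixing $\ket{\phi}$ forces $\ket{\phi}$ to be supported on a coset of $b^{\perp}$. Then I replace $k$ by the effective number of free prefix bits and argue by induction on $k$, treating prefix qubits whose value is determined by the state (they contribute to $K$ but not to $2^k$) separately.

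Step 1 and the final counting are routine; the per-level bound $2^\nu$ then gives the width statement of \Cref{thm:LIMDD-width-mainText}.
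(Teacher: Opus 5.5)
Your proposal has a genuine gap in Step 2: the inequality you aim for, $\log_2|S|\le \dim A+(n-k)$, is false. So is its equivalent form $\dim K\le n-k$, which your backup route tries to prove. Take $\ket{\phi}=\ket{0}^{\otimes n}$ and any $1\le k<n$. Then $S=\langle Z_1,\dots,Z_n\rangle$, so $A=\{0\}$ and $K=S$ has dimension $n>n-k$. Correspondingly, your Step 1 count $2^k/|A|=2^k$ is far from the true width $1$.

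The problem is that Step 1 counts every prefix, including those with $\ket{\phi_x}=0$. Setting the zero subvector aside does not stop you from counting all $2^k$ prefixes. Stabilizers that are pure $Z$-strings on the prefix are exactly what make most $\ket{\phi_x}$ vanish, and bounding that subgroup by $2^k$ discards this information. The commutation argument you sketch gives an \emph{upper} bound on $\dim A$ plus the $Z$-part dimension, which is the wrong direction for bounding $|A|$ from below.

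The missing idea is to count only nonzero prefixes.
\begin{itemize}
\item Let $K_0=\{g\in S: g=\pm Z^{b}\otimes I^{\otimes n-k}\}$, and let $B$ be its set of $Z$-parts $b$, so $|K_0|=|B|$.
\item Each element of $K_0$ forces $\ket{\phi_x}=0$ unless $b\cdot x$ equals a fixed bit. Hence all nonzero prefixes lie in a single coset of $B^{\perp}$.
\item Commutation of $K_0$ with the rest of $S$ gives $A\subseteq B^{\perp}$. So the nonzero prefixes split into at most $2^{k-\dim A-\dim B}$ orbits $x\oplus A$, each giving one node.
\item Your observation about a fixed nonzero $\ket{\phi_x}$ then becomes exact. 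The map $g=\alpha Z^{b}\otimes g_B\mapsto \alpha(-1)^{b\cdot x}g_B$ is a homomorphism from $K$ into the stabilizer group of $\ket{\phi_x}$, which has at most $2^{n-k}$ elements. Its kernel is exactly $K_0$.
\item This gives the correct replacement inequality $\dim K\le \dim B+(n-k)$. Together with $\dim A+\dim K=\dim S=n-\nu$, it yields $k-\dim A-\dim B\le \nu$.
\end{itemize}

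With this repair, your argument becomes a valid direct, level-by-level proof. It is different from the paper's, which inducts on $n$ using \Cref{lemma:splitCase} at the top qubit. There, a stabilizer with $X$ or $Y$ on the top qubit makes the two children Pauli-equivalent at the cost of one stabilizer; otherwise all stabilizers restrict to both children and the width can at most double.
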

\todo{Optional: add figure to clarify proof}
\begin{proof}[Proof sketch (full proof in \cref{thm:m_stabs_LIMDD_width})]%
    The proof goes in two parts, which are proven in detail in \cref{lemma:splitCase} resp. \cref{thm:m_stabs_LIMDD_width}.
    
    First, we note that every node $v$ in a LIMDD can be regarded as an (unnormalized) quantum state and therefore has its own Pauli string stabilizers, which we write as $S(v)$. The number of stabilizers of a LIMDD node can only increase with respect to its child if this node has only one child node. Moreover, in case of a single child, the number of generating stabilizers can only increase by one. In summary, we know that for every LIMDD node $v$ with (possibly equal) childs $w_0$ and $w_1$ we have either (1) $w_0=w_1$ and $\log_2(|S(v)|)=\log_2(|S(w_0)|)+1$, or (2) $|S(v)|=\min\{|S(w_0)|,|S(w_1)|\}$ (see \cref{lemma:splitCase}).

    Second, we note that the root node, which represents the state $\ket{\phi}$, has $n-\nu(\ket{\phi})$ generating stabilizers by definition of $\nu(\ket{\phi})$. Because the terminal nodes have no stabilizers, there are at least $n-\nu(\ket{\phi})$ pairs of adjacent layers of LIMDD nodes where the number of stabilizers must increase. By the first part of the proof, we know that an increase of stabilizers between two layers implies that all nodes in that layer can have at most one child. Therefore, there are at most $\nu(\ket{\phi})$ layers where nodes can have more than one child. Hence, as any LIMDD node has at most two childs, the LIMDD width is at most $2^{\nu(\ket{\phi})}$. (A formal proof using induction can be found in \cref{thm:m_stabs_LIMDD_width}.)
\end{proof}

As we related the stabilizer nullity to both the $T$-count and the LIMDD width, we can now prove the tractability of the LIMDD size in terms of the $T$-count. This is the main result of this section.

\begin{corollary}
\label{cor:LIMDD-width_mainText}
    Every $n$ qubit quantum circuit consisting of Clifford gates and t $T$-gates with initial state $\ket{0}^{\otimes n}$ can be simulated with a LIMDD of width at most $2^t$, i.e., at most $n\cdot 2^t$ nodes.
\end{corollary}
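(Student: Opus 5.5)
The plan is to chain \Cref{lemma:stabilizer-count_t-count-mainText} and \Cref{thm:LIMDD-width-mainText}, applying them not only to the output state but to every intermediate state of the simulation.

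Fix an $n$-qubit Clifford$+t\times T$ circuit $U = G_g \cdots G_1$ acting on $\ket{0}^{\otimes n}$. Write $\ket{\phi_j} = G_j\cdots G_1\ket{0}^{\otimes n}$ for the state after the first $j$ gates, and let $t_j\leq t$ be the number of $T$ gates among $G_1,\dots,G_j$. The prefix $G_j\cdots G_1$ is itself a Clifford$+t_j\times T$ circuit, so $\ket{\phi_j}\in\cktstates^n_{t_j}$.

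\begin{enumerate}
\item By \Cref{lemma:stabilizer-count_t-count-mainText}, $\nu(\ket{\phi_j})\leq t_j\leq t$ for every $j$. The lemma bounds the nullity by the minimal $T$-count, which is at most $t_j$, so this holds for every prefix.
\item By \Cref{thm:LIMDD-width-mainText}, the LIMDD representing $\ket{\phi_j}$ has width at most $2^{\nu(\ket{\phi_j})}\leq 2^t$.
\item LIMDDs are canonical, so the diagram held after gate $j$ is exactly the canonical LIMDD of $\ket{\phi_j}$, and its width is at most $2^t$.
\item A LIMDD on $n$ qubits has $n$ layers of internal nodes, one per variable, plus the terminal. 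Multiplying the width by the number of layers gives at most $n\cdot 2^t$ nodes, up to the single terminal node, which we absorb as in the statement.
\end{enumerate}

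The main subtlety is what ``simulated with a LIMDD of width at most $2^t$'' refers to. The argument above covers the diagram stored between gates. It does not cover temporary nodes created inside a gate-application algorithm. For those I would appeal to the remark in the introduction that DD algorithms are recursive and produce no nodes outside the final output. Under that reading, the bound on the stored diagrams is the intended statement.

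A second point to check is that \Cref{thm:LIMDD-width-mainText} holds for the unnormalized node states under either canonicity rule. Its proof only uses stabilizer groups of nodes, which are unaffected by scalar normalization, so the bound should be rule-independent. Everything else is bookkeeping.
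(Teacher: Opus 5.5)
Your proposal is correct and follows the paper's proof. The paper simply combines \Cref{lemma:stabilizer-count_t-count-mainText} with \Cref{thm:LIMDD-width-mainText} and multiplies the width by the $n$ layers. Your explicit handling of the intermediate states $\ket{\phi_j}$, each produced by a Clifford$+t_j\times T$ prefix with $t_j\le t$, makes precise what the paper leaves implicit in the word ``simulated''.
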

\begin{proof}
    This follows directly from \Cref{lemma:stabilizer-count_t-count-mainText} and \Cref{thm:LIMDD-width-mainText}, and the observation that a EVDD has at most $n$ layers.
\end{proof}

\autoref{fig:leadingExampleLIMDD} contains an example of a simulated quantum circuit that obeys the bound on the LIMDD width and size.
We note that the bound on the LIMDD size that we provide is not tight. For example the magic state $(T\cdot H\ket{0})^{\otimes n}$ has stabilizer nullity $n$, and hence needs $n$ T-gates to prepare (by \Cref{lemma:stabilizer-count_t-count-mainText}), but can be represented by a LIMDD of $n$ nodes.

Note that an stronger version of \Cref{lemma:stabilizer-count_t-count-mainText} exists for Toffoli gates. Every Toffoli gate reduces the stabilizer nullity by at most 3, while a Toffoli gate implemented in the Clifford+$T$ gate set requires at least 7 $T$ gates~\cite{gosset2013algorithmtcount}. The details of the proof are worked out in~\Cref{lemma:CliffToffolistabilizers}. With this lemma, \Cref{cor:LIMDD-width_mainText} can be generalized as follows.
\begin{corollary}
\label{cor:LIMDD-width-with-Toffoli_mainText}
    Every $n$ qubit quantum circuit consisting of Clifford gates, t $T$-gates and t' Toffoli gates with initial state $\ket{0}^{\otimes n}$ can be simulated with a LIMDD of width at most $2^{t+3t'}$, i.e., at most $n\cdot 2^{t+3t'}$ nodes.
\end{corollary}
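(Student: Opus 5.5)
The proof mirrors the one for \Cref{cor:LIMDD-width_mainText}: combine a nullity bound for Clifford+$T$+Toffoli states with \Cref{thm:LIMDD-width-mainText}. The only new ingredient is an analogue of \Cref{lemma:stabilizer-count_t-count-mainText}: every Toffoli gate increases the stabilizer nullity by at most $3$. Equivalently, it decreases the number of independent generating Pauli stabilizers by at most $3$. Granting this, we argue by induction over the gates. The input $\ket{0}^{\otimes n}$ has $\nu=0$; Clifford gates preserve $\nu$; each $T$ gate raises it by at most $1$; each Toffoli raises it by at most $3$. Hence the output state has $\nu\leq t+3t'$. \Cref{thm:LIMDD-width-mainText} then gives width at most $2^{t+3t'}$, and since a LIMDD has $n$ layers, at most $n\cdot 2^{t+3t'}$ nodes. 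The same bound holds for every intermediate state, so it also covers the simulation.

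For the key lemma, let $\ket{\psi}$ have stabilizer group $S$ with $|S|=2^{n-\nu}$, and let the Toffoli act on qubits $a,b,c$. Write $\mathrm{Tof}=\frac{1}{4}\bigl(3I+Z_a+Z_b-Z_aZ_b\bigr)\otimes I+\ldots$, or more usefully $\mathrm{Tof}=I-\tfrac12(I-Z_a)\tfrac12(I-Z_b)(I-X_c)$. Consider the subgroup $S'\subseteq S$ of stabilizers $P$ whose restriction to $\{a,b,c\}$ lies in $\langle Z_a,Z_b,X_c\rangle$ up to phase. Each such $P$ commutes with $Z_a$, $Z_b$ and $X_c$, hence with $\mathrm{Tof}$. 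Therefore $\mathrm{Tof}\,P\,\mathrm{Tof}^\dagger=P$, and $P$ stabilizes $\mathrm{Tof}\ket{\psi}$.

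It remains to bound $|S/S'|$. The map sending $P\in S$ to its restriction on qubits $a,b,c$, modulo phases and modulo $\langle Z_a,Z_b,X_c\rangle$, is a group homomorphism from $S$ to $\mathcal{P}_3/\langle Z_a,Z_b,X_c\rangle$. Its kernel is $S'$. This target is isomorphic to $\mathbb{F}_2^6/\mathbb{F}_2^3\cong\mathbb{F}_2^3$. So $S'$ has index at most $2^3$, i.e.\ $\log_2|S(\mathrm{Tof}\ket{\psi})|\geq \log_2|S'|\geq n-\nu-3$. This is the "elimination of stabilizers" argument used for $T$, where the analogous target was $\mathcal{P}_1/\langle Z\rangle\cong\mathbb{F}_2$, giving loss $1$.

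The main obstacle is to get the preserved subgroup right. The subgroup $\langle Z_a,Z_b,X_c\rangle$ on the three qubits is abelian of size $8$ and everything in it commutes with $\mathrm{Tof}$, which is what yields the bound $3$ rather than something larger. One must also confirm that the phases of the surviving $P$ are unaffected, so that $P$ stays an honest stabilizer with eigenvalue $+1$. This follows because $\mathrm{Tof}\,P\,\mathrm{Tof}^\dagger=P$ exactly, and I would check it directly from the projector form above.
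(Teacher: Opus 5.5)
Your proof is correct and follows the paper's route: a Toffoli gate costs at most three independent Pauli stabilizers (\Cref{lemma:CliffToffolistabilizers}); this is combined with the $T$-gate lemma and \Cref{thm:LIMDD-width-mainText}, and it rests on the same key fact that $Z_a$, $Z_b$, $X_c$ are invariant under Toffoli conjugation. The only difference is how the loss of three is counted: the paper row-reduces the tableau on the three qubits and argues informally that these invariant operators \emph{span half} of the at most six relevant rows, whereas your kernel-of-restriction argument (index at most $|\mathbb{F}_2^6/\mathbb{F}_2^3| = 8$) makes that step precise.
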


Finally, we note that all proofs in this section only rely on the property of the $T$ gate that a $T$ gate can remove at most one generating Pauli stabilizer, which is also satisfied by any rotation gate $R_Z(\theta):=\left(\begin{smallmatrix} 1&0\\
0&e^{i\theta}
\end{smallmatrix}\right)$ for some $\theta \in [0, 2\pi)$. Hence

\begin{corollary}
\label{cor:LIMDD_width_rotZ-gates}
    \Cref{cor:LIMDD-width_mainText} and \Cref{cor:LIMDD-width-with-Toffoli_mainText} still hold if each individual $T$ gate is replaced by some $Z$ rotation.
\end{corollary}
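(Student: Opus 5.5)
The plan is to reduce \Cref{cor:LIMDD_width_rotZ-gates} to the single place where the $T$ gate enters the argument for \Cref{cor:LIMDD-width_mainText}, namely \Cref{lemma:stabilizer-count_t-count-mainText}. \Cref{thm:LIMDD-width-mainText} is a statement about an arbitrary quantum state: its LIMDD width is at most $2^{\nu(\ket{\phi})}$, and no gate set appears in it. So it suffices to prove that a circuit of Clifford gates and $t$ gates $R_Z(\theta_1),\dots,R_Z(\theta_t)$, applied to $\ket{0}^{\otimes n}$, yields a state with $\nu\leq t$. For the Toffoli version, the analogous generalization of \Cref{lemma:CliffToffolistabilizers} must allow $Z$-rotations in place of $T$ gates, which gives $\nu\leq t+3t'$. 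The layer count $n$ then produces the node bounds exactly as in the original corollaries.

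For the key lemma I would track $k:=\log_2|S(\ket{\psi})|$, that is, $n-\nu$, gate by gate. The input has $k=n$. A Clifford $C$ conjugates the Pauli group to itself, so $S(C\ket{\psi})=CS(\ket{\psi})C^\dagger$, and $k$ is unchanged. For $R=R_Z(\theta)$ acting on qubit $j$, I would partition $S=S(\ket{\psi})$ into two parts. The first is the subgroup $S_0$ of elements whose $j$-th tensor factor is $I$ or $Z$. The second consists of the remaining elements, whose $j$-th factor is $X$ or $Y$.

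The map sending $P\in S$ to the $j$-th factor of $P$, taken modulo phases and modulo $\langle Z\rangle$, is a group homomorphism into $\mathbb{Z}_2$ whose kernel is $S_0$. Hence $|S_0|\geq |S|/2$. Each $P\in S_0$ commutes with $R$, since $R$ is diagonal and $RZR^\dagger=Z$. Therefore $P\,R\ket{\psi}=R P\ket{\psi}=R\ket{\psi}$, so $S_0\subseteq S(R\ket{\psi})$, and $k$ drops by at most one. By induction over the gates, $k\geq n-t$, i.e.\ $\nu\leq t$. Notably, this argument does not use the specific form of $TXT^\dagger$ at all, only that the gate is diagonal on one qubit. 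That is exactly the property the paper singles out in the sentence before the corollary.

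For the Toffoli case, I would run the same argument with $S_0$ taken to be the subgroup of stabilizers whose factors on the three Toffoli qubits commute with the Toffoli gate. I expect this to be the main obstacle: one must check that this subgroup really has index at most $2^3$, and that the reasoning in \Cref{lemma:CliffToffolistabilizers} does not rely on any property of $T$ beyond diagonality. If that lemma's proof treats the $T$ gates only through \Cref{lemma:CliffTstabilizers}, then substituting the $R_Z$ version of the step above closes the argument. Finally, I would note that the LIMDD width bound is a structural fact about the final state. It therefore holds for every intermediate state of the simulation as well, since each prefix of the circuit uses no more rotations.
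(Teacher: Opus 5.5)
Your proposal is correct and takes essentially the paper's route: \Cref{thm:m_stabs_LIMDD_width} does not depend on the gate set, and the only $T$-specific input is the per-gate bound of \Cref{lemma:CliffTstabilizers}, whose row-echelon proof (your index-two kernel argument is a cleaner version of it) uses only that the gate is diagonal on its qubit, so it covers any $R_Z(\theta)$. The Toffoli step you flag as the main obstacle is not one, because \Cref{lemma:CliffToffolistabilizers} is a per-gate statement that never involves $T$ and can be cited unchanged (your index argument would also work: the Paulis commuting with the Toffoli are exactly those having $\ket{1}\ket{1}\ket{-}$ as an eigenvector on its three qubits, i.e.\ $\langle Z_{c_1}, Z_{c_2}, X_{\mathrm{tar}}\rangle$ up to phase, a subgroup of index $8$).
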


\subsection{Bounds on EVDD size}
\label{subsec:bounds_on_EVDD_size}

In this section, we show that the size of a EVDD is tractable in the minimum of the $H$-, $CZ$, and $T$-count.
The proof is very similar to the proof in the previous section for bounds on the LIMDD size. The main difference is a different notion of stabilizer nullity, using a subset of Pauli stabilizers from~\cref{sec:qc}.

\begin{definition}[local stabilizer nullity]
    The local stabilizer nullity $\nu_{L}(\ket{\psi})$ of an $n$-qubit quantum state $\ket{\psi}$ is $n$ minus the number of generating Pauli stabilizers of that state that are of the form $\pm P_k = \pm I^{\otimes k-1}\otimes P \otimes I^{\otimes n-k}$ where $P\in \{X, Y, Z\}$. We call a Pauli stabilizer of this form a local stabilizer.
\end{definition}

The local stabilizer nullity of an $n$ qubit quantum state is an integer between $0$ and $n$. 
Examples are $\nu_L(\ket{\phi}) = 0$ when $\ket{\phi}$ is a computational basis state, and $\nu_L(\ket{GHZ_n})=n$ for $\ket{GHZ_n}=\frac{1}{\sqrt{2}}\left(\ket{0}^{\otimes n} + \ket{1}^{\otimes n}\right)$.

We now show a general upper bound to the local stabilizer nullity in terms of applied gates.

\begin{lemma}
\label{lemma:singlet_stabilizer_count-mainText}
    Fix a positive integer $n$ and let $U$ be a $n$-qubit Clifford+$k\times T$ unitary, so that it can be implemented by a circuit $C$ containing only single qubit Clifford gates, SWAP, CZ-, and T-gates.
    Then $\nu_L(U\ket{0}^{\otimes n}) \leq \min({\#H,2\cdot\#CZ + \#T})$, where $\#G$ denotes the number of $G$ gates in $C$.
\end{lemma}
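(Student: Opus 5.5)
We track the local stabilizer group gate by gate. Let $L(\ket{\psi})$ be the group generated by the local stabilizers $\pm P_k$ of $\ket{\psi}$, and $\ell(\ket{\psi})$ its number of independent generators, so $\nu_L = n-\ell$. For the input $\ket{0}^{\otimes n}$ we have the local stabilizers $Z_1,\dots,Z_n$, so $\nu_L=0$. We prove the two bounds $\nu_L\le \#H$ and $\nu_L\le 2\#CZ+\#T$ separately, each by showing that a single gate changes $\nu_L$ by at most the stated amount. Global phases and Pauli gates are harmless, and $\nu_L$ is unchanged by any gate whose conjugation maps local Paulis to local Paulis up to sign.

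\textbf{Local structure.} First we record that two independent local stabilizers on the same qubit $k$ (say $\pm X_k$ and $\pm Z_k$) would give $\pm iY_k$, hence anticommuting stabilizers. So at most one local stabilizer acts per qubit, and $\ell$ simply counts the qubits $k$ carrying a local stabilizer $\pm P_k$. Equivalently, $\ell$ is the number of qubits on which the state factorizes as a single-qubit Pauli eigenstate. This makes the gate-by-gate counting very concrete.

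\textbf{Bound $2\#CZ+\#T$.} Single-qubit Cliffords ($H$, $S$, Paulis) and SWAP conjugate local Paulis to local Paulis up to sign, so $\ell$ is invariant under them. A $T_k$ gate fixes $\pm Z_k$ and leaves all other qubits' local stabilizers untouched. It can only destroy a local stabilizer $\pm X_k$ or $\pm Y_k$ on qubit $k$, so $\ell$ drops by at most one. We also need that $T$ does not create anything that breaks the count; since we only need upper bounds on $\nu_L$, a decrease of $\ell$ by at most one per $T$ suffices. A $CZ_{a,b}$ conjugation fixes $Z_a$ and $Z_b$ and maps $X_a\mapsto X_aZ_b$, and every local Pauli on qubits other than $a,b$ is fixed. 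Hence at most the two local stabilizers on qubits $a$ and $b$ are lost, and $\ell$ decreases by at most $2$. Summing over the circuit gives $\nu_L\le 2\#CZ+\#T$.

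\textbf{Bound $\#H$.} The main obstacle is that here $CZ$ and $T$ must cost nothing. The plan is to keep the stronger invariant that the qubits never touched by $H$ still carry stabilizer $\pm Z_k$, i.e. remain computational-basis factors. The gates $S$, $T$, $CZ$ and $Z$ are diagonal, and $X$ and $Y$ map $Z_k\mapsto -Z_k$. A diagonal gate commutes with $Z_k$, and therefore preserves every $\pm Z_k$ stabilizer. SWAP merely relabels qubits, so we track the untouched set through permutations. Therefore the number of qubits still stabilized by some $\pm Z_k$ is at least $n$ minus the number of $H$ gates: each $H$ can destroy the $Z$-stabilizer of at most one qubit, since it acts on only one qubit and fixes the local Paulis elsewhere. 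Hence $\ell\ge n-\#H$. Taking the minimum of the two bounds yields the lemma.
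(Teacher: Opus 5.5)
Your proof is correct. The $\#H$ half is the paper's argument: every gate other than $H$ maps each $\pm Z_k$ stabilizer to some $\pm Z_{k'}$, and each $H$ costs at most one of them. For the $2\#CZ+\#T$ half, however, you take a cleaner route than the paper.

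The paper tells a ``lifecycle'' story. The $H$ gates first turn $Z_k$ into $X_k$ stabilizers; it asserts there are then $\#H$ stabilizers $\pm X_k$, tacitly treating all $H$ gates as applied first. A $T$ or $CZ$ can later destroy these, and a $CZ$ can only do damage on qubits previously hit by $H$.

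You instead run a per-gate count that never mentions $H$. Single-qubit Cliffords and SWAP map local Paulis bijectively onto local Paulis. $T_k$ fixes every local Pauli off qubit $k$, and $CZ_{a,b}$ fixes every local Pauli off $\{a,b\}$. You combine this with your observation that a nonzero state carries at most one local stabilizer per qubit. The paper never states this, but it is exactly what makes ``the number of generating local stabilizers'' well defined and turns ``drops by at most one/two'' into a rigorous statement.

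What your route buys: it handles arbitrarily interleaved circuits directly, and the $2\#CZ+\#T$ bound even allows arbitrary single-qubit Cliffords. The paper's extra refinement, that a $CZ$ only hurts $H$-touched qubits, adds nothing here, since the two bounds are combined by a minimum anyway.

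The $\#H$ half, in both versions, genuinely needs every single-qubit Clifford other than $H$ to map $Z$ to $\pm Z$. For example, a single $\sqrt{X}$ followed by $T$ on one qubit already gives $\nu_L=1$ with $\#H=0$. So the lemma holds under that reading of ``single qubit Clifford gates'', which both you and the paper tacitly adopt.
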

\begin{proof}[Proof sketch (full proof in \cref{lemma:h-cz-t-count_EVDD})]%
    The idea of the proof is that the $\ket{0}^{\otimes n}$ state has $n$ stabilizers $Z_j$. Gate updates for all Clifford+$T$ gates except $H$ are invariant to $Z_j$ up to sign: $CZ_jC = \pm Z_j$ for $C\in\{X,Y,Z,S,T,CZ,SWAP\}$. An $H$ gate can turn one $Z_j$ stabilizer into an $X_j$ stabilizer. (\Cref{lemma:h-count_EVDD})
    
    Gate updates for Clifford+$T$ gates are again invariant to $X_j$ up to sign, except for $H$, $CZ$ and $T$. If one resp. two $X_j$ stabilizers are present when an $T$ resp. $CZ$ gate is applied, the $X_j$ stabilizer can be turned into non-local stabilizer.

    Thus, to remove a local stabilizer, at least an $H$ gate, and a $CZ$ or $T$ gate is needed. The $CZ$ can destroy two local stabilizers at once, provided that two $H$ gates were applied. So the local stabilizer nullity is at most $m = \min({\#H,2\cdot\#CZ + \#T})$.
\end{proof}

Now, we relate the local stabilizer nullity to the EVDD width.

\begin{theorem}
\label{thm:EVDD_width-mainText}
    The EVDD width of a quantum state $\ket{\phi} \in \cktstates^n$ is at most $2^{\nu_L(\ket{\phi})}$.
\end{theorem}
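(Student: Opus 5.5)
The plan mirrors the proof of \Cref{thm:LIMDD-width-mainText}, with local stabilizers in place of arbitrary Pauli stabilizers and scalar edge labels in place of Pauli LIMs. The idea is that every local stabilizer of $\ket{\phi}$ pins down one qubit level, and at such a level the EVDD cannot branch into two distinct children. Only the remaining $\nu_L(\ket{\phi})$ levels can double the width, which gives the bound $2^{\nu_L(\ket{\phi})}$.

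First I would analyse a single local stabilizer $\pm P_k$ of $\ket{\phi}$, with $P\in\{X,Y,Z\}$. Write $\ket{\phi}=\sum_{x\in\{0,1\}^{k-1}}\ket{x}\otimes\ket{\phi_x}$, where the $\ket{\phi_x}$ are the (unnormalized) $(n-k+1)$-qubit subvectors. These are exactly the vectors represented, up to scalar, by the EVDD nodes at level $k$. Since $P_k$ acts only on qubit $k$, the stabilizer condition restricts to each subvector: $(\pm P\otimes I)\ket{\phi_x}=\ket{\phi_x}$ for every $x$.

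Next I decompose $\ket{\phi_x}=\ket{0}\ket{a_x}+\ket{1}\ket{b_x}$ and treat the Pauli types separately.
\begin{itemize}
\item For $\pm Z$, either $b_x=0$ for all $x$ or $a_x=0$ for all $x$. Every node at level $k$ then has only one nonzero child, namely the low child in the first case and the high child in the second.
\item For $\pm X$, we have $b_x=\pm a_x$.
\item For $\pm Y$, we have $b_x=\pm i\,a_x$.
\end{itemize}
In the last two cases the two children of each level-$k$ node represent proportional vectors, so by EVDD canonicity they are the same node. Hence in all three cases, for every node $v$ at level $k$, the set of distinct children of $v$ has size at most one. The set of nodes at level $k+1$ is the union of children of level-$k$ nodes (modulo zero edges), so $\mathrm{width}(k+1)\le \mathrm{width}(k)$. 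At every other level the trivial bound $\mathrm{width}(k+1)\le 2\,\mathrm{width}(k)$ holds.

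To finish, I would choose a generating set of the local stabilizer subgroup and argue that it can be taken to consist of at most one local Pauli per qubit. Two distinct non-commuting local Paulis on the same qubit cannot both stabilize a state, and two commuting local Paulis on the same qubit that are independent would give a phase contradiction. So the number of local generators, $n-\nu_L(\ket{\phi})$, equals the number of distinct qubit indices $k$ carrying a local stabilizer. Starting from width $1$ at the root and applying the two recursions level by level gives width at most $2^{\nu_L(\ket{\phi})}$ at every level. This argument uses only canonicity (merging proportional vectors), so it holds for both the `low' and `L2' rules.

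The main obstacle is the careful identification of the level-$k$ nodes with the subvectors $\ket{\phi_x}$ up to scalars, including zero subvectors and zero edges. I must check that the proportionality $b_x=\lambda a_x$ with one fixed $\lambda\in\{\pm1,\pm i\}$ really collapses both outgoing edges onto a single child node, and that zero-weight edges do not create extra nodes. A secondary issue is the generator-counting step, since the definition of $\nu_L$ counts generators. I would state explicitly that the local stabilizers form a group whose minimal generating set has one element per qubit, and then use that fact.
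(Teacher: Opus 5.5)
Your proposal is correct and follows essentially the same route as the paper's proof (Theorem~\ref{theorem:EVDD_width_from_number_of_stabilizers}). Both arguments use a case analysis on $\pm X,\pm Y,\pm Z$ to show that a local stabilizer on qubit $k$ leaves every level-$k$ node with at most one distinct child, and then count that doubling can occur at only $\nu_L(\ket{\phi})$ levels; your explicit restriction of $\pm P_k$ to the subvectors $\ket{\phi_x}$ and your argument that there is at most one local stabilizer per qubit fill in steps the paper leaves implicit.
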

\begin{proof}[Proof sketch (full proof in \Cref{theorem:EVDD_width_from_number_of_stabilizers})]%
    The idea of the proof is very similar to the proof of \Cref{thm:LIMDD-width-mainText}. Every local stabilizer $\pm P_k$ prevents the EVDD to increase the number of nodes between two adjacent layers. Therefore, the number of nodes can increase only at $m=\nu_L(\ket{\phi})$ layers. Thus the width of any EVDD layer is at most $2^{m}$.
\end{proof}

As we related the local stabilizer nullity to the EVDD width and the $H$-, $CZ$- and $T$-count, we can now prove the tractability of the EVDD size in terms of the $H$-, $CZ$- and $T$-count. This results in the following Corollary.%

\begin{corollary}
\label{cor:EVDD-bound}
    Every $n$ qubit quantum circuit consisting of single qubit Clifford gates, SWAP, CZ-,and T-gates with initial state $\ket{0}^{\otimes n}$ can be simulated with a EVDD of width at most $2^m$, i.e., at most $n\cdot 2^m$ nodes for $m=\min({\#H,2\cdot\#CZ + \#T})$.
\end{corollary}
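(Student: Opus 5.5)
The plan is to mirror the proof of \Cref{cor:LIMDD-width_mainText}: chain \Cref{lemma:singlet_stabilizer_count-mainText} with \Cref{thm:EVDD_width-mainText}, and then bound the node count by the number of layers. Let $C$ be the given circuit, built from single qubit Clifford gates, SWAP, $CZ$ and $T$ gates, and let $\ket{\phi}=U\ket{0}^{\otimes n}$ be its output state. Put $m=\min(\#H,2\cdot\#CZ+\#T)$.

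\textbf{Step 1.} Since $U$ is a Clifford+$k\times T$ unitary realised by exactly such a circuit, \Cref{lemma:singlet_stabilizer_count-mainText} gives $\nu_L(\ket{\phi})\leq m$.

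\textbf{Step 2.} By \Cref{thm:EVDD_width-mainText}, every layer of the EVDD of $\ket{\phi}$ contains at most $2^{\nu_L(\ket{\phi})}\leq 2^m$ nodes.

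\textbf{Step 3.} An EVDD on $n$ qubits has one layer of internal nodes per qubit, so $n$ layers, plus a single terminal. The total node count is therefore at most $n\cdot 2^m$, up to the additive terminal that the convention in \Cref{thm:ddsizebounds} absorbs.

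\textbf{Scope of the statement.} The statement says ``simulated with'' rather than ``represented by''. If this is read as a bound on every intermediate diagram during gate-by-gate simulation, I would argue as follows. Each prefix $C_j$ of $C$ is again a circuit of the allowed type. Its gate counts $\#H$, $\#CZ$, $\#T$ are at most those of $C$, so $m_j\leq m$. Applying Steps 1--3 to each intermediate state $\ket{\phi_j}$ bounds every intermediate EVDD by $n\cdot 2^{m}$ as well. Canonicity ensures that the diagram produced by the simulation algorithm is exactly the canonical EVDD of $\ket{\phi_j}$. The bound is independent of the canonicity rule (`low' or `L2'), as \Cref{thm:ddsizebounds} asserts.

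\textbf{Main obstacle.} The one point needing care is this monotonicity of $m$ under taking prefixes. It is immediate, because gate counts only grow along the circuit and $\min$ is monotone in its arguments. Beyond that, all substantive work sits in the two cited results, so I expect no real difficulty.
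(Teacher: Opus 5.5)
Your proposal is correct and matches the paper's proof, which likewise chains \Cref{lemma:singlet_stabilizer_count-mainText} (giving $\nu_L(\ket{\phi})\leq m$) with \Cref{thm:EVDD_width-mainText} (width at most $2^{\nu_L(\ket{\phi})}$) and then multiplies by the at most $n$ layers of an EVDD. Your prefix argument, which extends the bound to every intermediate state after each gate, is a sound elaboration of ``simulated with'' that the paper leaves implicit.
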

\begin{proof}
    This follows directly from \Cref{lemma:singlet_stabilizer_count-mainText} and \Cref{thm:EVDD_width-mainText}, and the observation that a EVDD has at most $n$ layers.
\end{proof}

\autoref{fig:leadingExampleEVDD} contains an example of a simulated quantum circuit that obeys the bound on the EVDD width and size.
One should note that the bound on the EVDD size that we provide is not tight. For example the $\ket{GHZ_n}$ state has local stabilizer nullity $n$, and hence needs $n$ $H$-gates, and at least $n/2$ $CZ$-gates or $T$-gates to prepare (by \Cref{lemma:singlet_stabilizer_count-mainText}), but can be represented by a EVDD of $2n$ nodes.

\section{Experiments\label{sec:experiments}}

\vspace{-.5em}
In this section, we present our experimental evaluations. We implemented the symbolic coefficients in the decision diagram tool Q-Sylvan~\cite{brand2025q}. Our implementation is open source.\footnote{\url{https://github.com/Quist-A/q-sylvan}}
We compare the performance of quantum circuit simulation with decision diagrams using either of two ways to store the complex numbers in the decision diagram: using the symbolic parameterization described in \Cref{sec:coeff-bounds} (denoted {\algebraic}) with `low' normalization or as floating point numbers with `L2' normalization (denoted \float).\footnote{We only compare \algebraic-low and \float-L2, but not \algebraic-L2 and \float-low. For \float, `L2' has been empirically observed to be more stable than 'low'~\cite{brand2025q}. For \algebraic, the `L2' normalization does not exist: `L2' normalization requires taking square roots, which is in general not possible in the symbolic parameterization from \Cref{sec:coeff-bounds}.}

We benchmark three types of circuits: Grover's algorithm, W-state preparation and random circuits. These benchmarks are chosen as they allow exact representation using Clifford+$T$ circuits, contrary to e.g. the Quantum Fourier Transform.
Moreover, the $n$-qubit W-state is a suitable test case for the symbolic tool, as it has a high $T$-count~\cite{arunachalam2022parameterized}, but $O(n)$ EVDD width in the perfect accurate case. We hypothesize that the intermediate EVDDs are large, and rounding errors lead to merging of non-equivalent nodes for the float EVDD, returning incorrect quantum states. 
For the random circuit benchmarks, we used circuits with gates randomly sampled from the Clifford+$T$ gates with 20 to 100 qubits and circuit depth between 700 and 1000. 
Each circuit is followed by a single-qubit measurement on the top qubit.
Our benchmarks were run on a single core AMD Ryzen 9 7900X Processor and 64 GB memory. Each benchmark was run for at most 30 minutes.

\subsubsection*{Results.}
We provide the benchmark results %
on {\float} vs. {\algebraic} with runtime (\Cref{fig:qisq-vs-float_plots_runtime}), final number of nodes (\Cref{fig:qisq-vs-float_plots_final_nodes}) and peak nodes (\Cref{fig:qisq-vs-float_plots_peak_nodes}).

Comparing {\algebraic} vs. {\float}, we surprisingly observe that {\algebraic} is often faster. Moreover, we sometimes observe that {\float} returns an incorrect (i.e., more than 5\% different from {\algebraic}) measurement outcome probability.

\begin{figure}[h]
\centering
\begin{subfigure}{0.32\textwidth}
    \includegraphics[width=\textwidth]{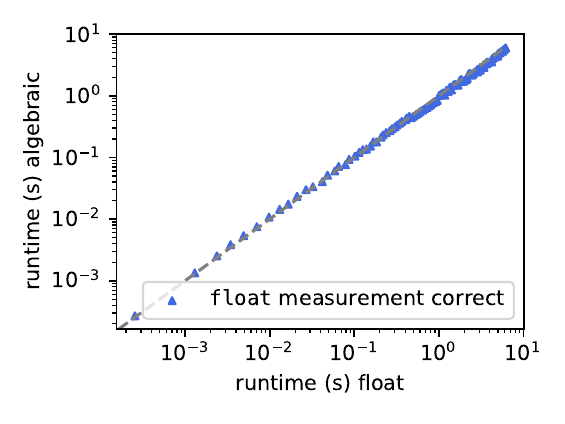}
    \caption{Grover}
\end{subfigure}
\begin{subfigure}{0.32\textwidth}
    \includegraphics[width=\textwidth]{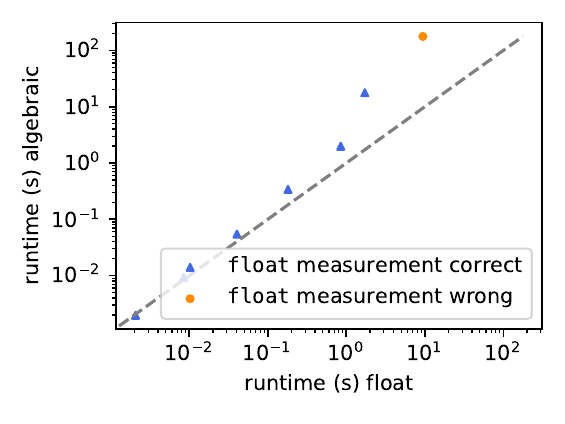}
    \caption{W-state}
\end{subfigure}
\begin{subfigure}{0.32\textwidth}
    \includegraphics[width=\textwidth]{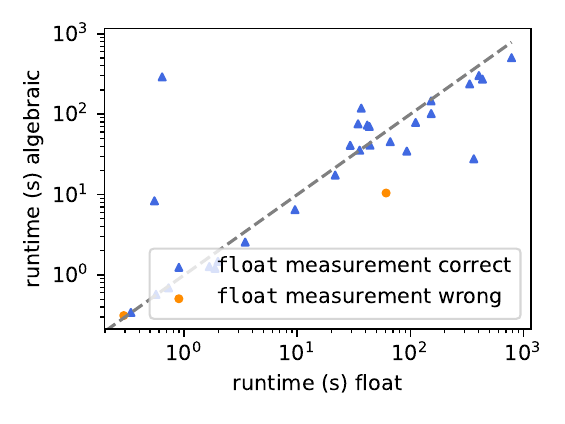}
    \caption{Random circuits}
\end{subfigure}

\caption{Runtime benchmark results on {\float} vs. {\algebraic}.}
\label{fig:qisq-vs-float_plots_runtime}
\end{figure}

\begin{figure}[h]
\centering
\begin{subfigure}{0.32\textwidth}
    \includegraphics[width=\textwidth]{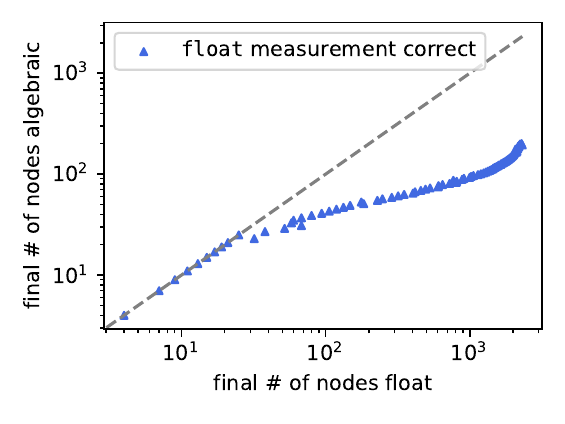}
    \caption{Grover}
\end{subfigure}
\begin{subfigure}{0.32\textwidth}
    \includegraphics[width=\textwidth]{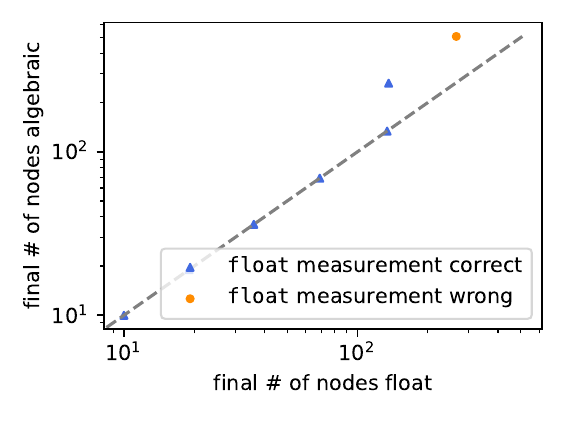}
    \caption{W-state}
\end{subfigure}
\begin{subfigure}{0.32\textwidth}
    \includegraphics[width=\textwidth]{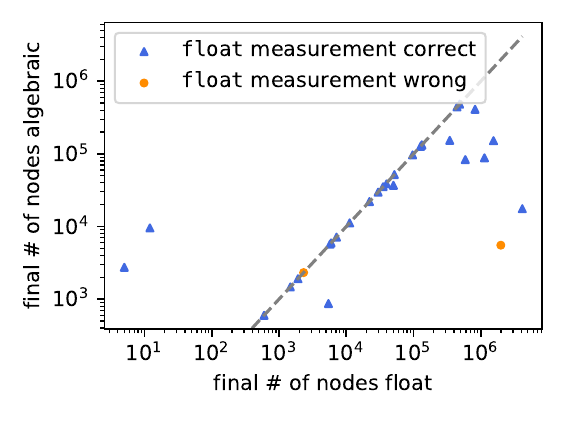}
    \caption{Random circuits}
\end{subfigure}

\caption{Final nodes benchmark results on {\float} vs. {\algebraic}.}
\label{fig:qisq-vs-float_plots_final_nodes}
\end{figure}

\begin{figure}[h]
\centering
\begin{subfigure}{0.32\textwidth}
    \includegraphics[width=\textwidth]{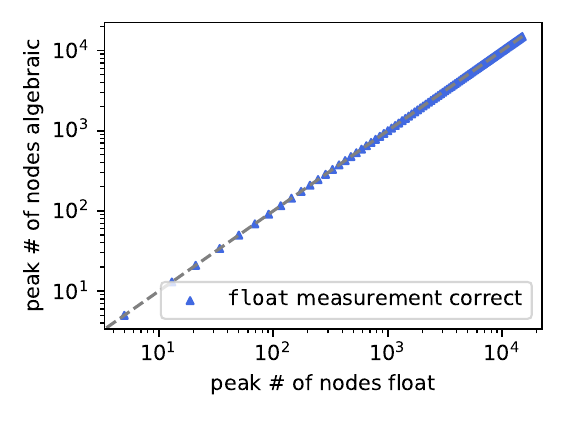}
    \caption{Grover}
\end{subfigure}
\begin{subfigure}{0.32\textwidth}
    \includegraphics[width=\textwidth]{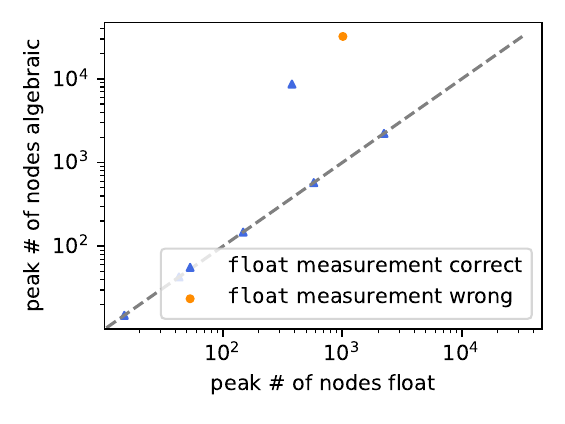}
    \caption{W-state}
\end{subfigure}
\begin{subfigure}{0.32\textwidth}
    \includegraphics[width=\textwidth]{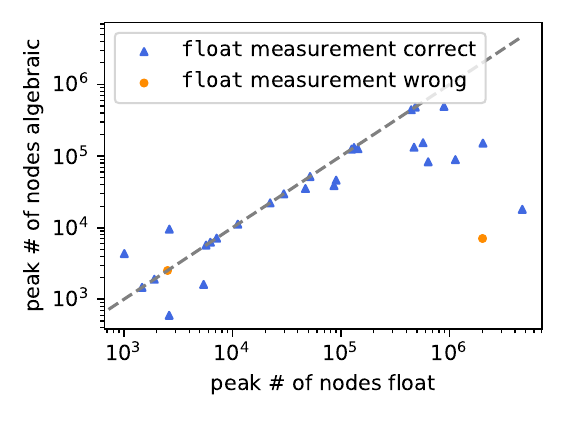}
    \caption{Random circuits}
\end{subfigure}

\caption{Peak nodes benchmark results on {\float} vs. {\algebraic}.}
\label{fig:qisq-vs-float_plots_peak_nodes}
\vspace{-1em}
\end{figure}

From the {\float} vs. {\algebraic} benchmarks in \Cref{fig:qisq-vs-float_plots_runtime}, \ref{fig:qisq-vs-float_plots_final_nodes} and \ref{fig:qisq-vs-float_plots_peak_nodes}, we see that for large Grover circuits {\algebraic} is faster and has fewer final nodes. As the peak of number of nodes is equal for {\algebraic} and {\float}, {\float} is not able to merge equivalent nodes after the peak due to numerical errors.\\
For the W-state, we observe the opposite. The large W-state circuits for {\float} are faster than {\algebraic} and use fewer nodes. As a downside, {\float} provides an incorrect decision diagram which give an wrong measurement probability. \\
For random circuits, the number of nodes for {\algebraic} is often less than for {\float}. Interestingly, these additional nodes do not always lead to stability for {\float}, as some large benchmarks are unstable for {\float}. The runtimes of {\float} and {\algebraic} differ a lot: neither is consistently faster.

\section{Conclusion}
\label{sec:conclusion}
We have shown that exact simulation of Clifford+$T$ quantum circuits using LIMDD decision diagrams can be made fixed-parameter tractable in the number of $T$ gates, while we prove a weaker statement for EVDD decision diagrams.
Our result consists of two aspects: first, we proved that the size of our algebraic parameterization of the complex numbers encountered in the decision diagram is linearly bounded in the $T$-count.
This is the first theoretical guarantee on the required edge label size for exact reasoning using decision diagrams.
Second, we proved that the number of nodes in the LIMDD is exponentially bounded in the $T$-count and constant in the number of Clifford gates, based on a novel connection between stabilizer nullity and number of decision diagram nodes.
The combined results imply that LIMDD simulation runtime is fixed-parameter tractable in the number of $T$ gates.
Our EVDD proofs follow a similar reasoning.
Our scaling statements also materialize in the proof-of-principle EVDD implementation, which outperforms the conventional floating-point-based EVDD in the scenarios studied, providing provable reliable measurement outcomes that do not suffer from floating point inaccuracies.

In the future, our proofs can be extended to different gate sets, such as Clifford+(fixed) rotations, and beyond simulation, e.g. to circuit equivalence checking~\cite{thanos2023fast}, verification~\cite{Abdulla2026verification,chen2025autoq,chen2025automata,mcmillan1992symbolic} and synthesis~\cite{clarke1981design,zulehner2017one,zulehner2017improving,abdollahi2006analysis,zak2025reducing}.
Our proof techniques ---tailoring the decision diagram's edge label parameterization as well as the stabilizer group size--- also open up new opportunities to numerical stability and decision diagram size bounds for different applications than quantum circuit analysis.

\paragraph{Acknowledgements}
This publication is part of the project Divide \& Quantum  (with project number 1389.20.241) of the research program NWA-ORC which is (partly) financed by the Dutch Research Council (NWO).
The second
author acknowledges the support received through the NWO Quantum Technology
program (project number NGF.1582.22.035).

\bibliographystyle{plain}
\bibliography{bibliography.bib,bib2.bib}

\appendix

\section{Bounds on size of edge label coefficients\label{appendix:coeffs-bound}}

\subsection{Characterizing all possible edge label coefficients in a DD}
\label{subsec:characterize_possible_edge_labels}

\begin{definition}\label{def:cp}
    Let $\rho=\frac{1}{2^n}\sum_{P\in \mathcal{P}_n}c_P P$ be a density matrix in the Pauli basis of an $n-$qubit state. The set $C^\rho=\{c_P\mid P\in\mathcal{P}_n\}$ is called the Pauli spectrum of $\rho$~\cite{beverland2020lower}. We define the set $C_t$ as the union over all Pauli spectra of density matrices $\rho=\dyad{\phi}$ with $\ket{\phi}\in\cktstates^{n}_t$ for any $n\in\mathbb{N}$.
\end{definition}

\begin{lemma}
\label{lem:C_k-representation}
     In \autoref{def:cp}, every $c_P\in C_t$ can be written as $c_P=0$ or $\pm1$ if $t=0$, and for $t>0$ $c_P$ is of the form $c_P=\frac{\ell}{\sqrt{2^t}}+\frac{m}{\sqrt{2^{t-1}}}$ with $\ell\in\{-2^t,-2^t+1,\dots,-1,0,1,\dots,2^t-1,2^t\}$ and $m\in\{-2^{t-1},-2^{t-1}+1,\dots,-1,0,1,\dots,2^{t-1}-1,2^{t-1}\}$.
\end{lemma}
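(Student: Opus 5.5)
The proof is by induction along the circuit that produces $\ket{\phi}=U\ket{0}^{\otimes n}$, tracking how each gate changes the Pauli coefficients $c_P=\mathrm{Tr}(P^{\dagger}\rho)$ under $\rho\mapsto G\rho G^{\dagger}$. Write $\mathcal{C}_t$ for the claimed set: $\{0,\pm1\}$ for $t=0$, and $\{\frac{\ell}{\sqrt{2^t}}+\frac{m}{\sqrt{2^{t-1}}} : \ell\in[-2^t,2^t],\ m\in[-2^{t-1},2^{t-1}]\}$ for $t>0$. The invariant is: after a prefix containing $s$ $T$ gates, every Pauli coefficient lies in $\mathcal{C}_s$.

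\emph{Base case.} The input $\ket{0}^{\otimes n}$ is a stabilizer state. Its density matrix is $\frac{1}{2^n}\prod_j(I+Z_j)$, so $c_P=1$ when $P$ is a tensor product of $I$'s and $Z$'s, and $c_P=0$ otherwise. Hence all coefficients lie in $\mathcal{C}_0$.

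\emph{Clifford steps.} A Clifford gate $G$ maps each Pauli string to a Pauli string with sign $\pm1$, i.e. $G^{\dagger}PG=\pm P'$, and this is a bijection on $\mathcal{P}_n$. Hence the multiset of coefficients only gets permuted and possibly negated. Since $\mathcal{C}_s=-\mathcal{C}_s$, the invariant is preserved.

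\emph{$T$ steps.} Let $T$ act on qubit $k$ and write $P=Q_k\otimes P'$ with $Q\in\{I,X,Y,Z\}$ on qubit $k$.
\begin{itemize}
\item Since $TIT^{\dagger}=I$ and $TZT^{\dagger}=Z$, the coefficients with $Q\in\{I,Z\}$ are unchanged.
\item Since $TXT^{\dagger}=\frac{X+Y}{\sqrt2}$ and $TYT^{\dagger}=\frac{Y-X}{\sqrt2}$, collecting terms gives new coefficients $c'_{X\otimes P'}=\frac{c_{X\otimes P'}-c_{Y\otimes P'}}{\sqrt2}$ and $c'_{Y\otimes P'}=\frac{c_{X\otimes P'}+c_{Y\otimes P'}}{\sqrt2}$.
\end{itemize}
It therefore suffices to prove two closure properties:
\begin{enumerate}
\item[(a)] $\mathcal{C}_s\subseteq\mathcal{C}_{s+1}$, which handles the unchanged coefficients.
\item[(b)] $\frac{a\pm b}{\sqrt2}\in\mathcal{C}_{s+1}$ for all $a,b\in\mathcal{C}_s$.
\end{enumerate}

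For (b) with $s>0$, take $a=\frac{\ell_1}{\sqrt{2^s}}+\frac{m_1}{\sqrt{2^{s-1}}}$ and $b=\frac{\ell_2}{\sqrt{2^s}}+\frac{m_2}{\sqrt{2^{s-1}}}$. Then
\[
\frac{a\pm b}{\sqrt2}=\frac{\ell_1\pm\ell_2}{\sqrt{2^{s+1}}}+\frac{m_1\pm m_2}{\sqrt{2^{s}}}.
\]
Here $|\ell_1\pm\ell_2|\le 2^{s+1}$ and $|m_1\pm m_2|\le 2^{s}$, exactly the ranges required for $\mathcal{C}_{s+1}$. For $s=0$ we get $\frac{a\pm b}{\sqrt2}=\frac{\ell}{\sqrt2}$ with $\ell\in[-2,2]$ and $m=0$.

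For (a), the naive inclusion fails: $\frac{\ell}{\sqrt{2^s}}=\frac{\sqrt2\,\ell}{\sqrt{2^{s+1}}}$ has an irrational numerator. The two parameters must instead swap roles:
\[
\frac{\ell}{\sqrt{2^s}}+\frac{m}{\sqrt{2^{s-1}}}=\frac{2m}{\sqrt{2^{s+1}}}+\frac{\ell}{\sqrt{2^{s}}}.
\]
The new parameters are $\ell'=2m\in[-2^{s},2^{s}]\subseteq[-2^{s+1},2^{s+1}]$ and $m'=\ell\in[-2^s,2^s]$, both within the allowed ranges. For $s=0$, $\pm1=\frac{0}{\sqrt2}+\frac{\pm1}{1}$, which lies in $\mathcal{C}_1$.

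Concluding, after all $t$ $T$ gates every $c_P$ lies in $\mathcal{C}_t$. Inclusion (a) also settles the definition of $C_t$ as a union over $\cktstates^{n}_t$, should that class include states produced with fewer than $t$ $T$ gates. $T^{\dagger}$ is not a separate primitive in the gate set $\{H,S,CZ,T\}$, so it needs no separate case. The only non-routine step is (a), namely noticing the role swap of $\ell$ and $m$ and checking that the ranges $[-2^t,2^t]$ and $[-2^{t-1},2^{t-1}]$ are exactly what make both (a) and (b) close.
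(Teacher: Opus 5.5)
Your proof is correct and takes the same approach as the paper's proof: induction over the circuit, with Clifford gates permuting Pauli coefficients up to sign, the explicit $T$-conjugation formulas, and the same two closure facts. These include the same role swap $\frac{\ell}{\sqrt{2^s}}+\frac{m}{\sqrt{2^{s-1}}}=\frac{2m}{\sqrt{2^{s+1}}}+\frac{\ell}{\sqrt{2^s}}$ for $\mathcal{C}_s\subseteq\mathcal{C}_{s+1}$; you are slightly more careful than the paper about the $\pm1$ signs under Clifford conjugation and about treating the $s=0$ step separately, but the argument is otherwise identical.
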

\begin{proof}
    For $t=0$ this statement follows directly from the fact that every stabilizer state can be written as a sum of its stabilizers~\cite{aaronson2004improved}. So $C_0=\{0,1,-1\}$.

    Now, we show the lemma with induction to $t$. 

    Assume we apply a Clifford gate $C$ to the state $\frac{1}{2^n}\sum_{P \in \mathcal{P}_n}c_P P$. As $\mathcal{P}_n$ is closed under conjugation with Clifford gates, the resulting state $\frac{1}{2^n}\sum_{P \in \mathcal{P}_n}c_P CPC^{\dagger}$ has the same coefficients $c_P$.

    Assume we apply a $T$-gate to the state $\frac{1}{2^n}\sum_{P = P_1 \otimes P_{2\dots n} \in \mathcal{P}_n}c_P P$. Without loss of generality, we apply the T-gate to the first qubit. Then, the resulting state is 
    \begin{align*}
    &\frac{1}{2^n}\sum_{P = P_1 \otimes P_{2\dots n} \in \mathcal{P}_n}c_P TP_1T^{\dagger}\otimes P_{2\dots n} \\
    &= \frac{1}{2^n}\sum_{P_{2\dots n} \in \mathcal{P}_{n-1}}c_{I\otimes P_{2\dots n}} TIT^{\dagger}\otimes P_{2\dots n} + c_{X\otimes P_{2\dots n}} TXT^{\dagger}\otimes P_{2\dots n} 
    + c_{Y\otimes P_{2\dots n}} TYT^{\dagger}\otimes P_{2\dots n} + c_{Z\otimes P_{2\dots n}} TZT^{\dagger}\otimes P_{2\dots n}\\
    &= \frac{1}{2^n}\sum_{P_{2\dots n} \in \mathcal{P}_{n-1}}c_{I\otimes P_{2\dots n}} I\otimes P_{2\dots n} + c_{X\otimes P_{2\dots n}} \frac{X+Y}{\sqrt{2}}\otimes P_{2\dots n} 
    + c_{Y\otimes P_{2\dots n}} \frac{Y-X}{\sqrt{2}}\otimes P_{2\dots n} + c_{Z\otimes P_{2\dots n}} Z \otimes P_{2\dots n}\\
    &= \frac{1}{2^n}\sum_{P_{2\dots n} \in \mathcal{P}_{n-1}}c_{I\otimes P_{2\dots n}} I\otimes P_{2\dots n} + \frac{c_{X\otimes P_{2\dots n}} - c_{Y\otimes P_{2\dots n}}}{\sqrt{2}} X\otimes P_{2\dots n} 
    + \frac{c_{X\otimes P_{2\dots n}} + c_{Y\otimes P_{2\dots n}}}{\sqrt{2}} Y\otimes P_{2\dots n} + c_{Z\otimes P_{2\dots n}} Z \otimes P_{2\dots n}\\
    &= \frac{1}{2^n}\sum_{P = P_1 \otimes P_{2\dots n} \in \mathcal{P}_n}c'_P P_1 \otimes P_{2\dots n}.
    \end{align*}
    
    Here, we used the equalities:
    \begin{align*}
        TIT^{\dagger} = I,\qquad
        TXT^{\dagger} = \frac{X+Y}{\sqrt{2}},\qquad
        TYT^{\dagger} = \frac{Y-X}{\sqrt{2}},\qquad
        TZT^{\dagger} = Z.
    \end{align*}
    We see that:
    \[c'_{P_1\otimes P_{2\dots n}} = \begin{cases}
        c_{P_1\otimes P_{2\dots n}} & \text{if } P_1=I,Z\\
        \frac{1}{\sqrt{2}}(c_{X\otimes P_{2\dots n}} - c_{Y\otimes P_{2\dots n}}) & \text{if } P_1=X\\
        \frac{1}{\sqrt{2}}(c_{X\otimes P_{2\dots n}} + c_{Y\otimes P_{2\dots n}}) & \text{if } P_1=Y
    \end{cases}\]

    Hence we see that, without loss of generality,\footnote{Note that $\frac{1}{\sqrt{2}}(c_P-d_P) = \frac{1}{\sqrt{2}}(c_P+d^-_P)$ for $d^-_P = -d_P\in C_t$, as by the induction hypothesis $d_P\in C_t$ implies $-d_P\in C_t$.} every element $c'_P\in C_{t+1}$ can be written as $c'_P = c_P$ or $c'_P = \frac{1}{\sqrt{2}}(c_P+d_P)$ with $c_P,d_P\in C_t$. By induction we know that $c_P = \frac{\ell}{\sqrt{2^t}}+\frac{m}{\sqrt{2^{t-1}}}$ and $d_P = \frac{\ell'}{\sqrt{2^t}}+\frac{m'}{\sqrt{2^{t-1}}}$ with $\ell,\ell'\in\{-2^t,\dots,2^t\}$ and $m,m'\in\{-2^{t-1},\dots,2^{t-1}\}$.\\
    If $c'_P = c_P$, it is enough to show $C_t\subseteq C_{t+1}$. This follows directly by $$c'_P = c_P = \frac{\ell}{\sqrt{2^t}}+\frac{m}{\sqrt{2^{t-1}}} = \frac{2m}{\sqrt{2^{t+1}}} + \frac{\ell}{\sqrt{2^t}},\quad
    \text{where }2m\in\{-2^{t+1},\dots,2^{t+1}\}.$$
    \\
    If $c'_P = \frac{1}{\sqrt{2}}(c_P+d_P)$, then we have $$c'_P = \frac{\ell}{\sqrt{2^{t+1}}}+\frac{m}{\sqrt{2^{t}}} + \frac{\ell'}{\sqrt{2^{t+1}}}+\frac{m'}{\sqrt{2^{t}}} = \frac{\ell+\ell'}{\sqrt{2^{t+1}}} + \frac{m+m'}{\sqrt{2^{t}}},$$
    with $\ell+\ell'\in\{-2^{t+1},\dots, 2^{t+1}\}$ and $m+m'\in\{-2^{t},\dots,2^{t}\}$.

    This shows the induction.
\end{proof}

Without loss of generality, we use the normalization rule in EVDD/LIMDD that the low edge always has edge coefficient 0 or 1 (see \autoref{sec:qdd}). If the low edge has coefficient 0, then the high edge has coefficient 1. Using this normalization of edges, we show the following lemma.

\begin{lemma}
\label{lem:DDlabel=fraction-of-C_k}
    Every edge coefficient in the bulk of EVDD-low and LIMDD-low  
    representing $f:\{0,1\}^n\to D$ with $D\subseteq \co$ equals a quotient of two elements in $D=\{f(x) \mid x \in \{0, 1\}^n \}$ (for LIMDD: up to a factor $-1$, $i$ or $-i$) where the denominator of the quotient is nonzero.
\end{lemma}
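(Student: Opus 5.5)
The plan is to formalize the path argument sketched in \autoref{fig:quotient-of-statevectorentries}. First I will establish an auxiliary claim by induction on the level, from the terminal upwards: every node $v$ of a low-canonical EVDD has a path from $v$ to the terminal along which every edge label equals $1$. The base case is the terminal itself, with the empty path. For the inductive step, the `low' rule says the low edge of $v$ carries $1$, or it carries $0$ and then the high edge carries $1$. In either case $v$ has an outgoing edge labelled $1$ to a child, and we append the child's all-$1$ path. For LIMDD-low the same holds with labels $I^{\otimes m}$, or $X$-swapped so that a label $I^{\otimes m}$ still exists, in place of $1$. I will call the resulting path the \emph{canonical path} of $v$. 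It spells out a suffix assignment $z_v$ with value $f_v(z_v)=1$, where $f_v$ is the function represented at $v$.

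Next, fix a bulk edge with label $a$, leaving node $v$ at level $k$ via branch $b$ to child $w$. Choose any root-to-$v$ path, spelling a prefix $p$ and carrying accumulated weight $\lambda$. This weight is the product of the root label and the edge labels on the path, and for LIMDD it also includes the accumulated LIM action. Two full assignments are then relevant. The first is $y=p\,b'\,z_{w'}$, which follows the label-$1$ edge of $v$ (branch $b'$, child $w'$) and then the canonical path of $w'$. The second is $x=p\,b\,z_w$, which takes the edge labelled $a$ and then the canonical path of $w$. Multiplying labels along the two paths gives $f(y)=\lambda\cdot 1\cdot 1$ and $f(x)=\lambda\cdot a\cdot 1$, so $a=f(x)/f(y)$ whenever $\lambda\neq 0$. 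If $a$ is itself the label-$1$ edge, the claim is trivial with $x=y$. If $a=0$, the claim holds with numerator $f(x)=0$.

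To guarantee $\lambda\neq0$, I will choose the root-to-$v$ path through nonzero edges only. Such a path exists because in a reduced DD every reachable node is reached via nonzero-weight edges: zero edges point to the zero node and are not used as parents. If the root label is $0$, the DD is the trivial zero vector and has no bulk edges. This also gives $f(y)=\lambda\neq0$, so the denominator is nonzero.

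The main obstacle is the LIMDD case, where the labels are $\gamma P$ with $P$ a Pauli string. Along a path, the prefix LIMs act on the suffix: a Pauli $X$ component permutes which child branch corresponds to which bit of the represented vector, and $Y$ and $Z$ components contribute factors $\pm1,\pm i$. I will handle this by pushing the accumulated Pauli string down the path. At $v$ it equals $\lambda Q$ for some Pauli string $Q$, and $Q$ maps the $v$-subvector to a subvector of $f$ by a bit-flip permutation together with a phase in $\{\pm1,\pm i\}$ on each entry. The entries of $f_v$ at the two canonical suffixes are therefore mapped to entries $f(x')$ and $f(y')$ at permuted indices, each up to such a phase. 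Taking the ratio yields $a=\mu\, f(x')/f(y')$ with $\mu\in\{\pm1,\pm i\}$, which is exactly the statement. Here $\mu$ is a ratio of two fourth roots of unity, hence again one.

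It remains to check that the LIM on the edge labelled $a$ itself contributes only its scalar $a$ times a Pauli string. That Pauli string acts on $z_w$ in the same permute-and-phase way, so it is absorbed into $\mu$ and the choice of $x'$.
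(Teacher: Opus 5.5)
Your proposal is correct and is essentially the paper's argument: both extend the label-$1$ edge and the label-$a$ edge out of $v$ by all-$1$ (all-$I^{\otimes m}$) paths behind a common root-to-$v$ prefix, so $a$ is the ratio of the two resulting entries of $f$, and the Pauli LIMs contribute only a factor in $\{\pm1,\pm i\}$. Your version is more careful than the paper's in three places: the explicit induction for the all-$1$ path, choosing a nonzero prefix (rather than the paper's claim that a zero on the prefix makes $v$ the zero node), and tracking the index flips caused by $X$/$Y$ components. One small correction: in the paper's convention a zero-labelled edge points to the same child as its nonzero sibling (the ``zero edges'' step in \textsc{MakeEdge}), not to a separate zero node, but that still gives a root-to-$v$ path through nonzero edges.
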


\begin{proof}
    
    Let $(v)$ be a node in the decision diagram with outgoing edges to $(v_0)$ with label 1 and $(v_1)$ with label $a$. \\
    Consider a path $p$ from the root node to $(v)$. We call the product of all edge weights on this path $\alpha$.
    From $(v)$ we can take a path $p_0$ via $(v_0)$ to the terminal node with only 1's on the edges, as every node has an outgoing edge with label 1 due to the normalization rule. Similarly, we can take the path $p_1$ from $(v)$ via $(v_1)$ to the terminal node with only 1's on the edges, except for the edge between $(v_0)$ and $(v_1)$ (which has label $a$).
    Now, $\alpha$ is the product of the edge labels in $p|p_0$ and we call $\beta$ be the product of the edge labels in $p|p_1$.\footnote{The notation $p|q$ for two paths means that the paths are glued together.}\\
    We see that $\alpha,\beta\in D$. This holds because every path in a decision diagram corresponds to an assignment $x\in\{0,1\}^n$. The products of the edge labels in this path correspond to the value $f(x)$, which is indeed in $D$.\\
    We also see that $\frac{\beta}{\alpha}=a$ as $p|p_0$ and $p|p_1$ have the same edge labels except for the one between $(v)$ and $(v_0)$ (label 1) and the one between $(v)$ and $(v_1)$ (label $a$).\\
    Note that the denominator $\alpha$ is nonzero, as the path $p|p_0$ contains no zero: if $p$ contains an edge label zero, then node $(v)$ would represent the zero node and could hence be removed from the diagram; $p_0$ only contains only edge labels 1.\\
    As the node $(v)$ was chosen arbitrary, we see that every edge label $a$ can be expressed as $a=\frac{\beta}{\alpha}$ for $\alpha,\beta\in D$ with $\alpha\not=0$.

    For LIMDD, a small extension of the proof is needed. When following a path $x$ along a LIMDD, the product of the edge labels along $x$ does not necessary equal $f(x)$. Due to LIMs, which have all entries in $0,1,-1,i,-i$, the value of the product of the edge labels equals $i^kf(x)$ for some $k\in\{0,1,2,3\}$. Thus, every edge label $a$ can be expressed as $a=i^k\frac{\beta}{\alpha}$ for $\alpha,\beta\in D$ with $\alpha\not=0$ and $k\in\{0,1,2,3\}$.
    
    \autoref{fig:quotient-of-statevectorentries} contains a picture to give intuition for the proof of this lemma.

\end{proof}

\begin{lemma}
\label{lemma:ratio_of_state-vector_entries_is_linearly_representable}
    Let $a,b\in Q_{n,t}$ with the guarantee that $b$ is a real number. Then the number $\frac{a}{b}$ is $R_{2(n+t)+1}$. Therefore, the quotient of two state vector entries (with nonzero denominator) is in $R_{2(n+t)+1}$.
    
\end{lemma}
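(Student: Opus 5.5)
The plan is to clear the common $\sqrt{2^t}$ denominators, rationalize the real denominator $b$ using its conjugate in $\mathbb{Z}[\sqrt{2}]$, and then check the integer bounds directly against the definition of $R_{2(n+t)+1}$ in \Cref{def:qisqnumbers}.

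\emph{Step 1: common denominator.} By the definition of $Q_{n,t}$ in \Cref{lemma:vector-entry-lemma}, and since $\frac{m}{\sqrt{2^{t-1}}}=\frac{m\sqrt{2}}{\sqrt{2^t}}$, I write
\[
a=\frac{(\ell+m\sqrt2)+i(\ell'+m'\sqrt2)}{\sqrt{2^t}},\qquad b=\frac{\ell''+m''\sqrt2}{\sqrt{2^t}} .
\]
Here $b$ has no imaginary part because it is real and $1,\sqrt{2}$ are linearly independent over $\mathbb{Q}$. The integers satisfy $|\ell|,|\ell'|,|\ell''|\le 2^{n+t}$ and $|m|,|m'|,|m''|\le 2^{n+t-1}$. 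For $t=0$ all $m$'s vanish, so that case is included. The factors $\sqrt{2^t}$ cancel, giving
\[
\frac{a}{b}=\frac{(\ell+m\sqrt2)+i(\ell'+m'\sqrt2)}{\ell''+m''\sqrt2}.
\]

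\emph{Step 2: rationalize.} I multiply numerator and denominator by $\ell''-m''\sqrt2$. The denominator becomes the integer $D=\ell''^2-2m''^2$. It is nonzero: otherwise $\sqrt2=|\ell''/m''|$ would be rational, or else $\ell''=m''=0$, which would give $b=0$. The real part of the numerator expands to
\[
(\ell\ell''-2mm'')+(m\ell''-\ell m'')\sqrt2,
\]
and the imaginary part is the same expression with primes on $\ell,m$. Hence
\[
\frac ab=\frac{p_1}{D}+\frac{p_2}{D}\sqrt2+i\frac{p_3}{D}+i\frac{p_4}{D}\sqrt2
\]
with integers $p_1,\dots,p_4$.

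\emph{Step 3: bounds.} It remains to show that $|D|$ and all $|p_j|$ are at most $2^{2(n+t)+1}$, so that every coefficient lies in $\qisq_{2(n+t)+1}$.
\begin{itemize}
\item $D$ is a difference of the nonnegative numbers $\ell''^2\le 2^{2(n+t)}$ and $2m''^2\le 2^{2(n+t)-1}$, so $|D|\le 2^{2(n+t)}$.
\item $|\ell\ell''-2mm''|\le 2^{2(n+t)}+2\cdot 2^{2(n+t)-2}\le 2^{2(n+t)+1}$.
\item $|m\ell''-\ell m''|\le 2\cdot 2^{n+t-1}2^{n+t}=2^{2(n+t)}$.
\end{itemize}
This is the only place care is needed, namely tracking the powers of two. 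The $+1$ in the index $2(n+t)+1$ exactly absorbs the sum of two products. So $\frac ab\in R_{2(n+t)+1}$.

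\emph{Step 4: the "therefore" sentence.} For state-vector entries I use the identity from the proof of \Cref{thm:edgelabelbound}:
\[
\frac{\langle x|\phi\rangle}{\langle y|\phi\rangle}=\frac{2^n\langle x|\rho|y\rangle}{2^n\langle y|\rho|y\rangle}.
\]
By \Cref{lemma:vector-entry-lemma}, both the numerator and the denominator lie in $Q_{n,t}$. The denominator is the real, nonzero number $2^n|\langle y|\phi\rangle|^2$. Applying Steps 1--3 then gives the claim.
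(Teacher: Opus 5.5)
Your proposal is correct and follows the paper's proof essentially step for step. You cancel the common $\sqrt{2^t}$, rationalize with $\ell''-m''\sqrt{2}$, read off four rational coefficients over $(\ell'')^2-2(m'')^2$, and finish with the density-matrix identity for a quotient of amplitudes. Your explicit checks that this denominator is nonzero and that every numerator and the denominator are bounded by $2^{2(n+t)+1}$ supply details the paper only asserts; its final line even reads $R_{2m+1}$ where $R_{2(n+t)+1}$ is meant.
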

\begin{proof}

    We can write $a = \frac{\ell}{\sqrt{2^t}}+\frac{m}{\sqrt{2^{t-1}}} + i \left(\frac{\ell'}{\sqrt{2^t}}+\frac{m'}{\sqrt{2^{t-1}}}\right)$ and $b = \frac{\ell''}{\sqrt{2^t}}+\frac{m''}{\sqrt{2^{t-1}}}$ with $\ell,\ell',\ell''\in\{-2^{n+t},-2^{n+t}+1,\dots,2^{n+t}-1,2^{n+t}\}$ and $m,m',m''\in\{-2^{n+t-1},-2^{n+t-1}+1,\dots,2^{n+t-1}-1,2^{n+t-1}\}$. \\
    Then by direct calculation we see:
    \begin{align*}
        \frac{a}{b}&=\frac{\frac{\ell}{\sqrt{2^t}}+\frac{m}{\sqrt{2^{t-1}}} + i \left(\frac{\ell'}{\sqrt{2^t}}+\frac{m'}{\sqrt{2^{t-1}}}\right)}{\frac{\ell''}{\sqrt{2^t}}+\frac{m''}{\sqrt{2^{t-1}}}}\\
        &=\frac{\ell+m\sqrt{2}+i(\ell'+m'\sqrt{2})}{\ell''+m''\sqrt{2}}\\
        &=\frac{\ell+m\sqrt{2}+i(\ell'+m'\sqrt{2})}{\ell''+m''\sqrt{2}}\cdot\frac{\ell''-m''\sqrt{2}}{\ell''-m''\sqrt{2}}\\
        &= \frac{\ell\ell''-2mm''}{\ell''^2-2m''^2}+\frac{m\ell''-\ell m''}{\ell''^2-2m''^2}\sqrt{2}+\frac{\ell'\ell''-2m'm''}{\ell''^2-2m''^2}i+\frac{m'\ell''-\ell'm''}{\ell''^2-2m''^2}i\sqrt{2}.
    \end{align*}

    Hence, we conclude $\frac{a}{b}\in R_{2m+1}$.

    The quotient of two state vector entries can be written as $\frac{\langle x\mid\phi\rangle}{\langle y\mid\phi\rangle}=\frac{\langle x\mid\phi\rangle\langle \phi\mid y\rangle}{\langle y\mid\phi\rangle\langle \phi\mid y\rangle} = \frac{2^n\cdot\langle x\mid \rho\mid y\rangle}{2^n\cdot\langle y\mid \rho\mid y\rangle}$. By \autoref{lemma:vector-entry-lemma} we know that the numerator and denominator are in $Q_{n,t}$ with a real denominator. Hence, the analysis above can be applied, directly proving that the quotient is in $R_{2(n+t)+1}$.
\end{proof}

\subsection{Quantum simulation with decision diagrams using efficient coefficient data structure}
\label{app:quantum_simulation_coefficients_analysis}

In this section, we prove the statement from \Cref{thm:edgelabelbound-dynamic} by tracking all numerical values in the gate update algorithms for decision diagrams. The gate update algorithms we discuss and analyze here are those for LIMDDs. We use the algorithms that are provided by~\cite{vinkhuijzen2023limdd}, reproducing them below as: \textsc{Add} (\Cref{alg:add-limdds}), and \textsc{Follow} (\Cref{alg:follow}), \textsc{MakeEdge} (\Cref{alg:make-edge}), and \textsc{GetLabels}  (\Cref{alg:get-labels}). Algorithms for EVDDs are similar, as all Pauli-LIMs on edge values of a LIMDD can be set to the all-identity Pauli string $I^{\otimes m}$ for an EVDD.

For convenience, we define the numerical parts of the edge labels in the decision diagram as $\alpha_{x_1x_2\dots x_m}$ for the edge pointing to the node that is reached by the path $x_1x_2\dots x_m$ starting from the root node.

In the following lemmas, we show preconditions and postconditions on functions, starting from the subroutines of gate update algorithms.

\begin{lemma}
\label{lemma:AnalysisGetLabels}
    The algorithm $\textsc{GetLabels}$ called on a LIM with numerical coefficient $\lambda$ only computes (and hence also returns) numerical values of the form $i^\ell\lambda^{(-1)^k}$ for $\ell\in\{0,1,2,3\}, k\in\{0,1\}$. Moreover, the numerical value of $\rootlim$ is always of the form $i^\ell(\lambda)^k$ for $\ell\in\{0,1,2,3\}, k\in\{0,1\}$.
\end{lemma}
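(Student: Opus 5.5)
The plan is a direct inspection of \textsc{GetLabels} (\Cref{alg:get-labels}) as reproduced from~\cite{vinkhuijzen2023limdd}, tracking every numerical quantity it creates. The input is a LIM $\lambda P$ with $P$ a Pauli string, so the only non-Pauli numerical datum available is $\lambda$. I will first list every arithmetic operation in the algorithm that touches numbers. I expect these to be of three kinds:
\begin{itemize}
\item the choice between keeping $\lambda$ and replacing it by $\lambda^{-1}$, which is the high-determinism rule comparing $\lambda$ with $1/\lambda$ and optionally swapping the children;
\item multiplication by the scalar factors that arise when Pauli matrices are multiplied or conjugated, namely $XZ=-iY$, $ZX=iY$, $XYX=-Y$, and so on, all of which lie in $\{\pm1,\pm i\}$;
\item the assembly of the output labels, i.e.\ the new high label and $\rootlim$, from these pieces.
\end{itemize}

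The core of the argument is to show that the set $G=\{i^\ell\lambda^{(-1)^k} : \ell\in\{0,1,2,3\},\ k\in\{0,1\}\}$ is closed under every operation the algorithm performs. This holds because $G$ is a group generated by $i$ and $\lambda$ restricted to exponent $\pm1$. Multiplying by $i^{\ell'}$ stays in $G$. Inverting sends $i^\ell\lambda^{\pm1}$ to $i^{-\ell}\lambda^{\mp1}\in G$. Multiplication by Pauli phase factors stays in $G$ as well. The one thing to rule out is a product $\lambda\cdot\lambda$ or $\lambda\cdot\lambda^{-1}$ mixing two non-trivial powers. I will argue this cannot happen: each branch of the algorithm uses $\lambda$ at most once, either directly or after a single inversion. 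Starting from $\lambda\in G$, an induction over the straight-line steps of each branch then shows that every intermediate value, and hence every returned value, lies in $G$.

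For the claim about $\rootlim$, I will check the branches separately. In the branch where the children are not swapped, the root LIM's scalar is a Pauli phase times $1$ (exponent $k=0$). In the swapped branch, the high edge becomes $\lambda^{-1}$ up to phase, and the root absorbs $\lambda$ itself, as in the $T$-gate figure where the root becomes $c\cdot a e^{i\pi/4}X$. So $\rootlim$ never contains $\lambda^{-1}$, and its scalar is $i^\ell\lambda^k$ with $k\in\{0,1\}$.

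The main obstacle is the bookkeeping inside \textsc{GetLabels}. The stabilizer-coset minimisation may multiply the high LIM by stabilizers of the child nodes, and these stabilizers carry phases. I need to confirm that these stabilizers are Pauli strings with phases only in $\{\pm1,\pm i\}$, so that they contribute only factors of $i^\ell$ and never a second factor of $\lambda$. I will argue this from the fact that stabilizer groups of nodes are subgroups of $\mathfrak{P}_n$, whose scalars lie in $\{\pm1,\pm i\}$.
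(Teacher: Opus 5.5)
Your proposal is correct and is essentially the paper's argument: inspect the numerical lines of \textsc{GetLabels}, note that $g_0Pg_1$ (with $g_0,g_1$ in Pauli stabilizer groups) carries only a phase $i^\ell$, so every candidate and $\highlim$ has the form $i^\ell\lambda^{\pm1}$ (the inverse arising only via $x=1$ when $v_0=v_1$), and read off from $\rootlim=(X\otimes\lambda P)^x(Z^s\otimes g_0^{-1})$ that its scalar is $i^\ell\lambda^{x}$. One correction: drop the claim that $\{i^\ell\lambda^{\pm1}\}$ is a group, since it is not closed under products and need not contain $1$; it is only closed under multiplication by powers of $i$ and under inversion, which is all you actually use, and the $x=0$ root scalar $i^\ell$ lies outside it, as your separate root analysis and the paper's ``Moreover'' clause already reflect.
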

\begin{proof}
    The $\textsc{GetLabels}$ algorithm contains only numerical calculations in lines 5, 8, 9, and 10. The products $g_0Pg_1$ always have a numerical factor $i^\ell$ as they are all Pauli strings. Therefore, the numerical values on these lines are of the form $i^\ell\lambda^{(-1)^k}$.\\
    The numerical value of $\rootlim$ is calculated in line 10. By direct inspection, we see that it is of the form $i^\ell(\lambda)^k$.
\end{proof}

\begin{lemma}
\label{lemma:AnalysisMakeEdge}
    The algorithm $\makeedge$ called on edges $\ledge{A}{v_0}$ and $\ledge{B}{v_1}$ where $A,B$ have numerical coefficients $a,b$ only computes numerical values of the form $i^\ell a$, $i^\ell b$ or $i^\ell\left(\frac{a}{b}\right)^{(-1)^k}$ for $\ell\in\{0,1,2,3\}, k\in\{0,1\}$. It returns an edge of the form $\ledge{A'}{v'}$ where the numerical label of $A'$ is of the form $i^\ell a$ or $i^\ell b$.
\end{lemma}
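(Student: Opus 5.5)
The plan is to trace \makeedge line by line, in the version of~\cite{vinkhuijzen2023limdd} reproduced in \Cref{alg:make-edge}, and record every arithmetic operation. The only numbers that enter are the coefficients $a,b$ of the LIMs $A,B$. Everything else the algorithm touches is a Pauli string, a product of Pauli strings, or a stabilizer generator of $v_0$ or $v_1$. All of these carry only phases from $\{1,i,-1,-i\}$. So we only have to show that $a$ and $b$ are never combined in a more complicated way than multiplication by $i^\ell$ or forming $a/b$ or $b/a$.

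I would split into cases following the branches of \makeedge.

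\emph{Case $a=0$ or $b=0$.} If $A$ is zero, the algorithm swaps the children and multiplies the root LIM by $X$ (coefficient $1$). The resulting root coefficient is $b$ up to a phase, and no other arithmetic occurs. The case $b=0$ is symmetric and returns coefficient $a$ as is.

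\emph{Generic case $a,b\neq 0$.} This has two steps.
\begin{enumerate}
\item The low canonicity step factors $A$ out of both edges. The low edge becomes $I^{\otimes n}$, and the high edge gets $A^{-1}B$, whose coefficient is $a^{-1}b$ times the phase of a Pauli product, i.e.\ of the form $i^\ell(b/a)$.
\item \makeedge calls \textsc{GetLabels} with $\lambda=i^\ell b/a$. By \Cref{lemma:AnalysisGetLabels}, every number computed there has the form $i^{\ell'}\lambda^{(-1)^k}$, i.e.\ $i^{\ell''}(a/b)^{\pm1}$. The returned $\rootlim$ has coefficient $i^{\ell'}\lambda^k$ with $k\in\{0,1\}$.
\end{enumerate}
The final root LIM is then $A$ composed with $\rootlim$ (and possibly an $X$ from a swap). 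Its coefficient is $a\cdot i^{\ell'}$ when $k=0$, or $a\cdot i^{\ell'}\, b/a=i^{\ell'}b$ when $k=1$. This gives the claimed output form $i^\ell a$ or $i^\ell b$.

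The main obstacle I expect is bookkeeping in the generic case. I need to confirm that the algorithm really multiplies $A$ with $\rootlim$ symbolically, with coefficients multiplying, rather than, say, recomputing $b/a$ and then $a\cdot(b/a)$ in a way that creates intermediate values outside the stated forms. I also need to check that when \textsc{GetLabels} chooses the inverse $\lambda^{-1}$ (the swap case, as in the $T$-gate figure), the root coefficient is still $i^\ell b$ and not something like $a^2/b$. I would check this by substituting the explicit line numbers of \Cref{alg:get-labels} into the composition. The intermediate products $a\cdot i^{\ell'}(b/a)$ should count as computed values of the allowed form $i^\ell b$, which the lemma permits.
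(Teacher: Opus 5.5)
Your proposal is correct and follows the paper's proof essentially step by step. You trace \makeedge through its branches, note that $A^{-1}B$ has coefficient $i^\ell(b/a)$, use \Cref{lemma:AnalysisGetLabels} both for the values computed inside \textsc{GetLabels} and for the form $i^\ell\lambda^k$ of $\rootlim$, and conclude that multiplying by $A$ yields $i^\ell a^{1-k}b^k$. The concerns you flag resolve as you anticipated, and the paper handles the order-induced swap in the same way, by noting that it merely exchanges $a/b$ and $b/a$, which the allowed forms already cover.
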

\begin{proof}
    The first base case (line 2) only swaps the input edges. As the values $i^\ell\left(\frac{a}{b}\right)^{(-1)^k}$ and $i^\ell\left(\frac{b}{a}\right)^{(-1)^{1-k}}$ are the same, this does not change the analysis.

    The second base case (line 4) only involves trivial numerical calculations.

    The final case is interesting. First, in line 9, $\hat{A}=A^{-1}B$ is calculated, which gives a numerical value of the form $i^\ell\left(\frac{b}{a}\right)$. The factor $i^\ell$ is due to the multiplication of the Pauli strings $A^{-1}$ and $B$.\\
    Then, in line 10, $\textsc{GetLabels}$ is called on this numerical value $\hat{A}$. By \Cref{lemma:AnalysisGetLabels} this takes only numerical values that are of the form $i^\ell\left(\frac{a}{b}\right)^{(-1)^k}$ for $\ell\in\{0,1,2,3\}, k\in\{0,1\}$. Also, the numerical label of $\rootlim$ is $i^\ell(\frac{b}{a})^k$.\\
    Finally, in line 12, $\rootlim$ is multiplied by $A$ resulting in the value $i^\ell a(\frac{b}{a})^k = i^\ell a^{1-k}b^k$. This is of the form $i^\ell a$ or $i^\ell b$. This new $\rootlim$ is also the edge label $A'$ of the returned edge $\ledge{A'}{v'}$.
\end{proof}

\begin{lemma}
\label{lemma:AnalysisFollow}
    The algorithm $\textsc{Follow}$ called on edge $\ledge{A}{v}$ and value $j$, where LIM $A$ has numerical value $\lambda$ and $v$ is reachable by path $x_1,x_2,\dots,x_m$ returns edge $\ledge{A'}{v'}$ where $A'$ has numerical label $i^\ell \lambda\cdot\alpha_{x_1,\dots,x_{m+1}}$ and $v'$ can be reached by the path $x_1,\dots,x_{m+1}$. Here, the Boolean value $x_{m+1}$ is in $\{0,1\}$.
\end{lemma}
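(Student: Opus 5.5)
The plan is to inspect the \textsc{Follow} algorithm of~\cite{vinkhuijzen2023limdd} directly and track the numerical factor of the returned LIM. \textsc{Follow} is given an edge $\ledge{A}{v}$ with $A=\lambda P$, where $P=P_{\textnormal{top}}\otimes P'$ is a Pauli string, and a bit $j$. It returns the edge describing the $j$-th half of the vector $A\ket{v}$. I will write $\ket{v}=\ket{0}\otimes B_0\ket{v_0}+\ket{1}\otimes B_1\ket{v_1}$. Here $B_b=\beta_b Q_b$ are the LIMs on the low and high edges of $v$. By the notational convention, $\beta_b=\alpha_{x_1,\dots,x_m,b}$, since $v$ is reached by the path $x_1,\dots,x_m$.

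The first step is to split on the top Pauli $P_{\textnormal{top}}\in\{I,X,Y,Z\}$. If $P_{\textnormal{top}}\in\{I,Z\}$, then $A\ket{v}$ has $j$-th half $\lambda\cdot(\pm1)\cdot P'B_j\ket{v_j}$. The algorithm selects child $x_{m+1}:=j$ and returns the LIM $\pm\lambda P'\beta_jQ_j$. If $P_{\textnormal{top}}\in\{X,Y\}$, the top Pauli swaps the two halves, up to a factor in $\{1,\pm i\}$. The algorithm then selects child $x_{m+1}:=1-j$ and returns $i^{k}\lambda P'\beta_{1-j}Q_{1-j}$. In both cases the chosen bit $x_{m+1}\in\{0,1\}$ is exactly the branch the algorithm follows, so $v'$ is the node reached by the path $x_1,\dots,x_{m+1}$.

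The second step is to compute the numerical label of the returned LIM. I will write $P'Q_{x_{m+1}}$ as $i^{k'}R$ for some Pauli string $R$, which is possible because the product of two Pauli strings is a Pauli string times a power of $i$. The numerical value is then $i^{\ell}\lambda\,\beta_{x_{m+1}}=i^\ell\lambda\,\alpha_{x_1,\dots,x_{m+1}}$, where $i^{\ell}$ collects the phase from the top Pauli and the phase $i^{k'}$.

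The main obstacle is matching the actual pseudocode of \Cref{alg:follow}. It may represent the LIM composition or the base case differently. In particular, a zero-labelled edge $\alpha_{x_1,\dots,x_{m+1}}=0$ should be checked separately: it yields the zero edge, which is consistent with the claimed form $i^\ell\lambda\cdot 0$. I would go through each line of the algorithm and confirm that no other arithmetic on numerical values occurs.
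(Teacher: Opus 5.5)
Your proposal is correct and follows the same route as the paper: whether the top Pauli is diagonal or anti-diagonal decides whether \textsc{Follow} descends to child $j$ or $1-j$, and the returned label is the phase-adjusted $\lambda$ (the paper's $\gamma = i^\ell\lambda$ from line 3) times the chosen child's edge label $\alpha_{x_1,\dots,x_{m+1}}$. Your explicit accounting for the extra power of $i$ that arises when the residual Pauli $P'$ is composed with the child's LIM is slightly more careful than the paper, which absorbs this factor into $i^\ell$ without comment.
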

\begin{proof}
    Line 3 of the $\textsc{Follow}$ algorithm shows that $\gamma$ equals $i^\ell \lambda$. As we have $A = P_n\otimes\dots\otimes P_1$, we see that $y_n=j$ if $P_n$ is diagonal and $y_n=1-j$ if $P_n$ is anti-diagonal, distinguishing the if-statement in line 4. \\
    The recursive call on $\textsc{Follow}$ in lines 5 resp. 7 directly returns the low edge resp. high edge of $v$, which have numerical edge label $\alpha_{x_1,\dots,x_{m+1}}$. So the returned edge $\ledge{A'}{v'}$ has numerical label of the form $i^\ell \lambda\cdot\alpha_{x_1,\dots,x_{m+1}}$.
\end{proof}

\begin{lemma}
\label{lemma:AnalysisAdd}
    The algorithm $\textsc{Add}$ called on edges $\ledge{A}{v}$ and $\ledge{B}{w}$ where $A,B$ have numerical coefficients $a,b$ and $v$ is reachable by path $x_1,x_2,\dots,x_m$ and $w$ is reachable by path $y_1,y_2,\dots,y_m$ returns an edge $\ledge{C}{v'}$ where $C$ has numerical coefficient $c$. The value $c$ equals $a \cdot i^\ell\prod_{d=m+1}^n \alpha_{x_1,x_2,\dots,x_d} + b \cdot i^{\ell'}\prod_{d=m+1}^n \alpha_{y_1,y_2,\dots,y_d}$, where $x_1,x_2,\dots,x_n$ and $y_1,y_2,\dots,y_n$ are extensions of the paths $x_1,x_2,\dots,x_m$ and $y_1,y_2,\dots,y_m$ from $v$ resp. $w$ to the terminal node.
    \sloppy All numerical values computed during the call of $\textsc{Add}$ are of the form $p,q,p+q,\frac{p}{q},\frac{p+q}{r+s}$, where $p,q,r,s$ are in $\left\{a \cdot i^\ell\prod_{d=m+1}^{m''} \alpha_{x_1,x_2,\dots,x_d}, i^\ell\prod_{d=m'}^{m''} \alpha_{x_1,x_2,\dots,x_d}, b \cdot i^{\ell}\prod_{d=m+1}^{m''} \alpha_{y_1,y_2,\dots,y_d},i^{\ell}\prod_{d=m'}^{m''} \alpha_{y_1,y_2,\dots,y_d}\right\}$ with $m',m''\geq m$.
\end{lemma}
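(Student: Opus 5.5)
We prove the statement by induction on the number of remaining levels $n-m$ below the nodes $v$ and $w$, following the recursive structure of \textsc{Add} (\Cref{alg:add-limdds}). The base case $m=n$ is the terminal case: both edges point to the leaf, the returned coefficient is simply $a+b$, and the empty products equal $1$. The only values computed are $a$, $b$ and $a+b$, which are of the admissible forms $p,q,p+q$.

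For the inductive step, \textsc{Add} first calls \textsc{Follow} on each input edge for $j=0$ and $j=1$. By \Cref{lemma:AnalysisFollow}, these calls return edges whose numerical labels are $i^{\ell}a\,\alpha_{x_1,\dots,x_{m+1}}$ and $i^{\ell'}b\,\alpha_{y_1,\dots,y_{m+1}}$, and whose target nodes are reached by the extended paths. The only numbers computed in these calls are of this shape, which is the form $p$ or $q$ with $m''=m+1$.

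We then apply the induction hypothesis to the two recursive \textsc{Add} calls, one on the pair of low children and one on the pair of high children, with $a$ replaced by $i^\ell a\,\alpha_{x_1,\dots,x_{m+1}}$ and similarly for $b$. Absorbing the new factor into the product gives sums of the claimed form $a\,i^\ell\prod_{d=m+1}^{n}\alpha_{x_1\dots x_d}+b\,i^{\ell'}\prod_{d=m+1}^{n}\alpha_{y_1\dots y_d}$. The extended paths may differ between the low and high branches, but each branch is still some extension of the original path. By the hypothesis, all intermediate values inside these calls already have the stated forms.

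The two results are then combined by \makeedge. By \Cref{lemma:AnalysisMakeEdge}, it only computes values of the form $i^\ell c_0$, $i^\ell c_1$ or $i^\ell(c_0/c_1)^{\pm1}$, where $c_0$ and $c_1$ are the numerical labels of the two recursive results. Each $c_j$ is a sum $p+q$. A naive quotient $c_0/c_1$ would have the form $\frac{p+q}{r+s}$, which is listed. Its inverse has the same form, so \makeedge stays within the allowed set.

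The returned root label is $i^\ell c_0$ or $i^\ell c_1$, again a sum of the required shape. It remains to match the inner $i^\ell$ factors and the leftover prefactors coming from the LIMs on the new node, which is why terms of the form $i^\ell\prod_{d=m'}^{m''}\alpha$ without $a$ or $b$ appear in the admissible set.

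The main obstacle is bookkeeping rather than arithmetic. Cached or normalized children may carry an extra LIM factor, and we must verify that the quotient of two \textsc{Add} results computed by \makeedge reduces to exactly one of the listed forms. This includes the case where $a$ or $b$ cancels in the quotient, which leaves only the path products $i^\ell\prod_{d=m'}^{m''}\alpha$. We would handle this by proving a slightly stronger invariant: every edge returned by a recursive call has label $p+q$ exactly of the product form, with the pure-path products absorbed into $i^\ell$-multiplied edge labels.
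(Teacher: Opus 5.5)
Your overall strategy is the paper's: induction over the levels below $v$ and $w$, with \Cref{lemma:AnalysisFollow} supplying the inputs of the recursive calls and \Cref{lemma:AnalysisMakeEdge} handling the recombination. However, your inductive step traces a procedure that is not \Cref{alg:add-limdds}. \textsc{Add} never follows the input edges labelled $A$ and $B$ directly; it first factors out $A$:
\begin{itemize}
\item Line~4 computes $C$ as the \rootlabel of the edge to $w$ labelled $A^{-1}B$, whose numerical value is $i^{\ell}b/a$.
\item Lines~6--7 recurse on $\textsc{Follow}_j$ applied to the \emph{identity}-labelled edge to $v$ and to the edge to $w$ labelled $C$. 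So the coefficients passed down are $i^{\ell}\alpha_{x_1\dots x_{m+1}}$ and $i^{\ell'}\frac{b}{a}\alpha_{y_1\dots y_{m+1}}$, not $i^{\ell}a\,\alpha_{x_1\dots x_{m+1}}$ and $i^{\ell'}b\,\alpha_{y_1\dots y_{m+1}}$.
\item Line~9 multiplies the cached \makeedge result by $A$.
\end{itemize}
The second half of the lemma is precisely an inventory of the numbers the call computes. Your argument omits the division $b/a$ and the final multiplication by $a$, and applies the induction hypothesis to the wrong arguments, so it does not describe the actual computation. This normalization is also the real source of the pure path products $i^{\ell}\prod_{d=m'}^{m''}\alpha$ in the admissible set, since the first recursive argument has coefficient $1$. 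Attributing them to leftover LIM factors or to cancellation inside a quotient is a guess, and the ``slightly stronger invariant'' you defer to is never actually stated.

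The repair is short and is exactly the paper's argument. By the induction hypothesis and \Cref{lemma:AnalysisFollow}, the two recursive calls return labels $i^{\ell}\prod_{d=m+1}^{n}\alpha_{x_1\dots x_d}+\frac{b}{a}\,i^{\ell'}\prod_{d=m+1}^{n}\alpha_{y_1\dots y_d}$, with possibly different path extensions for $j=0,1$. \makeedge changes the returned label only by a phase (\Cref{lemma:AnalysisMakeEdge}). In the quotient it forms of two such labels, the common factor $1/a$ cancels, giving exactly the $\frac{p+q}{r+s}$ form. Multiplying by $a$ in line~9 then produces the claimed $c$, and $C$ itself is of the form $p/q$ with empty products. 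With this reading caching is harmless: the cache stores the normalized result under the key $(v,C,w)$, and $A$ is applied only after retrieval. One caveat remains: labels returned by the inner calls, such as the sum above, have the shape $\frac{1}{a}(p+q)$ rather than literally one of the listed forms. The paper only asserts the inner inventory ``similarly'', so a rigorous write-up should carry this factor in the invariant.
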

\begin{proof}
    For the base case (line 2), the statement is immediate.

    As the algorithm is symmetric, the swap of $v$ and $w$ in line 3 does not influence the analysis.

    In the base case when $m=n$ the statement is immediate.\\
    The remainder of the proof is via induction on $m$. A call to $\textsc{Add}$ on edges $\ledge{A}{v}$ and $\ledge{B}{w}$ where $A,B$ have numerical coefficients $a,b$ and $v$ is reachable by path $x_1,x_2,\dots,x_{m-1}$ and $w$ is reachable by path $y_1,y_2,\dots,y_{m-1}$ goes as follows:\\
    In line 4, the value $C$ is $i^l \frac{b}{a}$.\\
    In lines 6 and 7, an edge is returned with numerical edge label $i^\ell\prod_{d=m}^n \alpha_{x_1,x_2,\dots,x_d} + \frac{b}{a}i^{\ell'}\prod_{d=m}^n \alpha_{y_1,y_2,\dots,y_d}$, as we know by induction and \Cref{lemma:AnalysisFollow}. \\
    Line 8 gives an edge with a similar edge label by \Cref{lemma:AnalysisMakeEdge}: only the values of $\ell$ and $\ell'$ may change. \\
    Line 9 multiplies the edge label by $a$, which gives the new edge label of the edge that is returned: $a\cdot i^\ell\prod_{d=m}^n \alpha_{x_1,x_2,\dots,x_d} + b\cdot i^{\ell'}\prod_{d=m}^n \alpha_{y_1,y_2,\dots,y_d}$.

    The specific form of all computed numerical values during the calculation is similarly proved with induction and the use of \Cref{lemma:AnalysisMakeEdge}.
\end{proof}

\begin{lemma}
\label{lemma:AnalysisHGate}
    The algorithm $\textsc{HGate}$ called on edge $\ledge{A}{v}$ reachable by path $x_1,x_2,\dots,x_m$, where LIM $A$ has numerical value $a=\alpha_{x_1,x_2,\dots,x_m}$ returns an edge with numerical LIM value $\frac{1}{\sqrt{2}}\left[i^\ell\prod_{d=m}^n \alpha_{x_1\dots x_d} + i^{\ell'}\prod_{d=m}^n \alpha_{y_1\dots y_d}\right]$ such that $x_i=y_i$ for all $i\in\{1,\dots,k\}$. 
    Moreover, it computes only numerical values of the form $p,q,p+q,\frac{p}{q},\frac{p+q}{r+s},\frac{1}{\sqrt{2}}(p+q)$ with $p,q,r,s$ in $\left\{a \cdot i^\ell\prod_{d=m+1}^{m''} \alpha_{x_1,x_2,\dots,x_d}, i^\ell\prod_{d=m'}^{m''} \alpha_{x_1,x_2,\dots,x_d}, a \cdot i^{\ell}\prod_{d=m+1}^{m''} \alpha_{y_1,y_2,\dots,y_d},i^{\ell}\prod_{d=m'}^{m''} \alpha_{y_1,y_2,\dots,y_d}\right\}$ with $m',m''\geq m$.
\end{lemma}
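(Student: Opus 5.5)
We will follow the structure of the $\textsc{HGate}$ algorithm of~\cite{vinkhuijzen2023limdd} and reduce it to the already analysed subroutines $\textsc{Follow}$ (\Cref{lemma:AnalysisFollow}), $\textsc{Add}$ (\Cref{lemma:AnalysisAdd}) and $\makeedge$ (\Cref{lemma:AnalysisMakeEdge}). Applying $H$ to the qubit at the level of $v$ splits the input edge $\ledge{A}{v}$ into its two children with $\textsc{Follow}$. By \Cref{lemma:AnalysisFollow}, these children carry numerical labels $i^{\ell}a\,\alpha_{x_1\dots x_m0}$ and $i^{\ell'}a\,\alpha_{x_1\dots x_m1}$, where $a=\alpha_{x_1\dots x_m}$. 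The algorithm then forms the new low child as the sum of the two children and the new high child as their difference; the sign $-1$ is absorbed into a power of $i$. Finally it calls $\makeedge$ and multiplies by $\frac{1}{\sqrt{2}}$.

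\textbf{Returned label.} We apply \Cref{lemma:AnalysisAdd} to the two $\textsc{Add}$ calls. The path to $v$ is the common prefix $x_1\dots x_m$, and the two extensions differ only from level $m+1$ on. Hence each sum has numerical coefficient of the form $i^\ell\prod_{d=m}^n\alpha_{x_1\dots x_d}+i^{\ell'}\prod_{d=m}^n\alpha_{y_1\dots y_d}$ with $x_i=y_i$ on the shared prefix; the factor $a$ is exactly the $d=m$ term. By \Cref{lemma:AnalysisMakeEdge}, the edge returned by $\makeedge$ has label $i^{\ell''}$ times one of its two input coefficients, namely the low or the high sum. Multiplying by $\frac{1}{\sqrt{2}}$ then gives the stated form after renaming the exponents of $i$ and the paths $x,y$. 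If the low sum is zero, the high sum is selected instead, and it has the same shape.

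\textbf{Intermediate values.} These come from three sources. The calls to $\textsc{Add}$ contribute only values of the form $p,q,p+q,\frac{p}{q},\frac{p+q}{r+s}$ with $p,q,r,s$ in the listed family, by \Cref{lemma:AnalysisAdd}; here both $\textsc{Add}$ inputs carry the factor $a$, which explains why $a$ multiplies both the $x$-products and the $y$-products. The call to $\makeedge$ on the two sums contributes $i^\ell$ times a sum, which is of the form $p+q$, or $i^\ell$ times a ratio of sums, which is of the form $\frac{p+q}{r+s}$, by \Cref{lemma:AnalysisMakeEdge}. The final scaling contributes $\frac{1}{\sqrt{2}}(p+q)$. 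Recursive calls of $\textsc{HGate}$ at deeper levels, if the implementation recurses, are handled by induction on $n-m$ with the same bookkeeping. Every label they touch is an $\alpha$ further down a path extending $x_1\dots x_m$, so the products simply extend to larger $m''$ with $m'\ge m$.

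\textbf{Main obstacle.} The hardest step is matching the exact pseudo-code line by line, in particular tracking the powers of $i$ introduced by the Pauli LIMs and the sign of the difference. We will handle this uniformly by noting that every such factor is a product of Pauli-string phases and is therefore an element of $\{\pm1,\pm i\}$. We also need care with the degenerate case where $\textsc{Add}$ returns the zero edge. In that case $\makeedge$ takes its trivial base case, and all values produced are trivially of the listed forms.
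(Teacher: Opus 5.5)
Your proposal is correct and is essentially the paper's proof: at the target level you invoke \Cref{lemma:AnalysisAdd} for the two \textsc{Add} calls and \Cref{lemma:AnalysisMakeEdge} for \makeedge, scale by $\frac{1}{\sqrt{2}}$, and handle the levels above by induction through the recursive calls. The one discrepancy is in the bookkeeping, and it does not affect your conclusions because the listed family contains both $a$-prefixed and unprefixed products:
\begin{itemize}
\item In \Cref{alg:H-gate} the \textsc{Add} calls act directly on $\low{v}$ and $\pm\high{v}$, with no \textsc{Follow} and no factor $a$.
\item The factor $a$ enters only at the end, through $H_kAH_k^\dagger$.
\item The recursion followed by \makeedge is the standard case $\textsf{idx}(v)\neq k$, not an optional one.
\end{itemize}
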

\begin{proof}
    If $m$ equals $k$ (the level where the $H$-gate will be applied), then two calls to the $\textsc{Add}$ algorithm are done in line 4. By \Cref{lemma:AnalysisAdd} the numerical values during the computation are contained in the expressions above in the statement of this lemma. Again by \Cref{lemma:AnalysisAdd}, the returned edges have numerical edge values of the form $i^\ell\prod_{d=k+1}^n \alpha_{x_1\dots x_d} + i^{\ell'}\prod_{d=k+1}^n \alpha_{y_1\dots y_d}$, with $x_i=y_i$ for all $i\in\{1,\dots,k\}$. \\
    The \makeedge call returns again an edge with such edge value.\\
    Finally, in line 4, this value is multiplied by $\frac{1}{\sqrt{2}}$, which gives the value $\frac{1}{\sqrt{2}} \left[i^\ell\prod_{d=k+1}^n \alpha_{x_1\dots x_d} + i^{\ell'}\prod_{d=k+1}^n \alpha_{y_1\dots y_d}\right]$.\\
    \sloppy The multiplication by $a=\alpha_{x_1,x_2,\dots,x_m}$ in line 7 then finally returns an edge with numerical value $\frac{1}{\sqrt{2}}\left[i^\ell\prod_{d=m}^n \alpha_{x_1\dots x_d} + i^{\ell'}\prod_{d=m}^n \alpha_{y_1\dots y_d}\right]$.

    For the case when $m$ is smaller than $k$, two recursive calls to the $H$-gate algorithm are made in line 6. By induction, the values during this computation are again of the form described above in the statement of this lemma. 
\end{proof}

\begin{lemma}
\label{lemma:basic_operations_in_qisq_are_tractable}
    The result of an addition, multiplication or division of two values of the form $a+b\sqrt{2}+ci+di\sqrt{2}$ with $a,b,c,d$ fractions of the form $\frac{p}{q}$ with $p,q\in\{-2^{k},\dots,0,\dots,2^{k}\}$ is of the form $a+b\sqrt{2}+ci+di\sqrt{2}$ with $a,b,c,d$ fractions of the form $\frac{p}{q}$ with $p,q\in\{-2^{ck},\dots,0,\dots,2^{ck}\}$ for some constant $c$. In other words, the size of the output of an addition, multiplication, or division is linear in the size of the inputs.
\end{lemma}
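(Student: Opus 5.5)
We reduce the statement to explicit formulas for the four rational coordinates of the result, and then bound numerators and denominators of those formulas. Write the inputs as $x = a_1 + b_1\sqrt{2} + c_1 i + d_1 i\sqrt{2}$ and $y = a_2 + b_2\sqrt{2} + c_2 i + d_2 i\sqrt{2}$, with every coordinate of the form $p/q$ and $|p|,|q|\le 2^k$, $q\neq 0$. The basic tool is the following observation about rational arithmetic. Let $\rho$ be a polynomial expression in such fractions with integer coefficients of bounded size and bounded degree. Bring it over a common denominator, namely the product of all denominators that occur. The resulting numerator and denominator are then integers of absolute value at most $C\cdot 2^{Dk}$, with $C,D$ depending only on the expression. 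Since $C\cdot 2^{Dk}\le 2^{(D+1)k}$ once $2^k\ge C$, the fraction lies in $\qisq_{c k}$ for a suitable constant $c$. For the finitely many small $k$ we absorb the constant into $c$; note $k\ge 1$, since $R_k$ is indexed by $\mathbb{N}_+$. The rational case is then immediate: $\frac{p}{q}\pm\frac{p'}{q'}=\frac{pq'\pm p'q}{qq'}$ has entries bounded by $2^{2k+1}$, products are similar, and so is $\frac{p/q}{p'/q'}=\frac{pq'}{qp'}$.

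\textbf{Addition and multiplication.} For addition the coordinates just add componentwise, so each lies in $\qisq_{2k+1}$. For multiplication, expand using $\sqrt{2}^2=2$ and $i^2=-1$. Each coordinate of $xy$ is then a signed sum of at most four products of two input coordinates, each product multiplied by $1$ or $2$. For example, the rational part is $a_1a_2+2b_1b_2-c_1c_2-2d_1d_2$. By the observation above, with degree $2$, at most four terms, and coefficients at most $2$, each coordinate lies in $\qisq_{O(k)}$.

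\textbf{Division.} This is the main step, and we handle it by rationalizing the denominator in two stages. Assume $y\neq 0$. First multiply numerator and denominator by the complex conjugate $\bar y$. This gives $y\bar y = |y|^2 = u + v\sqrt{2}$ with $u=a_2^2+2b_2^2+c_2^2+2d_2^2$ and $v=2(a_2b_2+c_2d_2)$. Next multiply by the Galois conjugate $u - v\sqrt{2}$. This leaves the rational denominator $N=u^2-2v^2$.

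We must check that $N\neq 0$. If $N=0$, then $u=\pm v\sqrt{2}$, and irrationality of $\sqrt{2}$ forces $u=v=0$. Since $u$ is a sum of squares, this gives $y=0$, a contradiction.

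The numerator is $x\,\bar y\,(u-v\sqrt{2})$, a product of three elements. Its coordinates are polynomials of total degree at most $1+1+2=4$ in the input coordinates, with bounded integer coefficients. The denominator $N$ is a polynomial of degree $4$ in the coordinates of $y$. Each output coordinate is therefore a ratio of two such fixed-degree polynomials. Clearing all inner denominators at once, by multiplying numerator and denominator by the product of all $q$'s involved (a constant number, say at most $16$), gives a single fraction whose integer numerator and denominator are bounded by $C\cdot 2^{Dk}$ with constant $D$. Hence each output coordinate lies in $\qisq_{ck}$.

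The only real subtlety is this bookkeeping of nested fractions in the division case. Bounding the degrees and term counts of the fixed polynomial formulas resolves it. The fact that $\qisq_k$ does not require reduced fractions means we never need to appeal to gcd arguments. Finally, taking $c$ to be the maximum of the constants from the three cases proves the lemma.
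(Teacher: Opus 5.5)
Your proof is correct and takes the same route as the paper. Addition is componentwise and multiplication uses the explicit four-term expansion. Division rationalizes first with the complex conjugate $\bar y$ and then with the Galois conjugate $u - v\sqrt{2}$, exactly as in the \textsc{divide} procedure of \Cref{alg:Qisq2-manipulations}. You also make explicit what the paper's one-line proof leaves implicit: the numerator and denominator bounds, the check that $u^2 - 2v^2 \neq 0$, and the need for $k \geq 1$.
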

\begin{proof}
    In \cref{alg:Qisq2-manipulations} we show the basic arithmetic operations in this data structure. We see that every operation can be applied efficiently, i.e., with a constant number of addition and multiplication operations on fractions. Therefore, the size of the output for every addition or multiplication of fractions is linear in the size of the inputs.
\end{proof}

\begin{algorithm}
\caption{Algorithms for arithmetic operations in the data structure $a+b\sqrt{2}+ci+di\sqrt{2}$, where $a,b,c,d$ are fractions.}
\label{alg:Qisq2-manipulations}
\begin{algorithmic}[1]
    \Procedure{add}{x,y}
    \State result.init()
    \State result.a = add\_frac(x.a,y.a)
    \State result.b = add\_frac(x.b,y.b)
    \State result.c = add\_frac(x.c,y.c)
    \State result.d = add\_frac(x.d,y.d)
    \State result.reduce() \Comment{turn result.a, result.b,... into unique fractions by Euclids algorithm}
    \State \textbf{return} result
    \EndProcedure
\end{algorithmic}

\begin{algorithmic}[1]
    \Procedure{subtract}{x,y}
    \State result.init()
    \State result.a = subtract\_frac(x.a,y.a)
    \State result.b = subtract\_frac(x.b,y.b)
    \State result.c = subtract\_frac(x.c,y.c)
    \State result.d = subtract\_frac(x.d,y.d)
    \State result.reduce() \Comment{turn result.a, result.b,... into unique fractions by Euclids algorithm}
    \State \textbf{return} result
    \EndProcedure
\end{algorithmic}

\begin{algorithmic}[1]
    \Procedure{multiply}{x,y}
    \State result.init()
    \State result.a = add\_frac(multiply\_frac(x.a,y.a),2$\cdot$ multiply\_frac(x.b,y.b),-1$\cdot $multiply\_frac(x.c,y.c),-2$\cdot $multiply\_frac(x.d,y.d))
    \State result.b = add\_frac(multiply\_frac(x.a,y.b),-1$\cdot$ multiply\_frac(x.c,y.d),multiply\_frac(x.b,y.a),-1$\cdot$ multiply\_frac(x.d,y.c))
    \State result.c = add\_frac(multiply\_frac(x.a,y.c),2$\cdot$ multiply\_frac(x.b,y.d),multiply\_frac(x.c,y.a),2$\cdot$ multiply\_frac(x.d,y.b))
    \State result.d = add\_frac(multiply\_frac(x.a,y.d), multiply\_frac(x.d,y.a), multiply\_frac(x.b,y.c), multiply\_frac(x.c,y.b))
    \State result.reduce() \Comment{turn result.a, result.b,... into unique fractions by Euclids algorithm}
    \State \textbf{return} result
    \EndProcedure
\end{algorithmic}

\begin{algorithmic}[1]
    \Procedure{divide}{x,y}
    \State result.init()
    \State temp.init()
    \State temp.a,temp.b,temp.c,temp.d := y.a,y.b,-y.c,-y.d
    \State x = multiply(x,temp)
    \State y = multiply(y,temp) \Comment{y is now of the form $a+b\sqrt{2}$}
    \State temp.a,temp.b,temp.c,temp.d := y.a,-y.b,0,0
    \State x = multiply(x,temp)
    \State y = multiply(y,temp) \Comment{y is now of the form $a=\frac{p}{q}$}
    \State temp = 1/y \Comment{1/y is fraction $\frac{q}{p}$}
    \State result = multiply(x,temp)
    \State temp.clear()
    \State result.reduce() \Comment{turn result.a, result.b,... into unique fractions by Euclids algorithm}
    \State \textbf{return} result
    \EndProcedure
\end{algorithmic}
\end{algorithm}

\begin{lemma}
\label{lem:simple_sums_quotients_of_subseq_edgelabels_are_small}
    The values $p,q,p+q,\frac{p}{q},\frac{p+q}{r+s},\frac{1}{\sqrt{2}}(p+q)$ with $p,q,r,s$ of the form $ i^\ell\prod_{d=m'}^{m''} \alpha_{x_1,x_2,\dots,x_d}$ are in $R_{c(n+k)}$ for some constant $c$.
\end{lemma}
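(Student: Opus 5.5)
The plan is to reduce every listed expression to a ratio of (sums of) state-vector entries and then apply \Cref{lemma:vector-entry-lemma} together with \Cref{lemma:ratio_of_state-vector_entries_is_linearly_representable} and \Cref{lemma:basic_operations_in_qisq_are_tractable}. The key observation is that a product of subsequent bulk edge labels $i^\ell\prod_{d=m'}^{m''}\alpha_{x_1,\dots,x_d}$ is itself a quotient of two state-vector entries, up to a power of $i$. This follows by the same path argument as in \Cref{lem:DDlabel=fraction-of-C_k}. Fix the path $x_1\dots x_{m'-1}$ from the root to the node at level $m'$, and extend it in two ways. The first extension follows $x_{m'}\dots x_{m''}$ and then any all-$1$ path down to the terminal. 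The second leaves the same level-$m'$ node along its low edge (label $1$) and then continues along an all-$1$ path. Both all-$1$ continuations exist by the `low' rule. The two root-to-terminal products then differ exactly by $\prod_{d=m'}^{m''}\alpha_{x_1\dots x_d}$, up to $i^k$ coming from LIMs, and the denominator is nonzero for the same reason as in \Cref{lem:DDlabel=fraction-of-C_k}. Hence $p=i^k\langle x\mid\phi\rangle/\langle y\mid\phi\rangle$, and by \Cref{lemma:ratio_of_state-vector_entries_is_linearly_representable} we get $p\in R_{2(n+t)+1}$.

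The compound expressions are then handled by closure. Write $p=\beta_1/\alpha_1$ and $q=\beta_2/\alpha_2$, and multiply numerator and denominator by $\overline{\alpha_1\alpha_2}$ or similar, so that every term becomes a density-matrix entry (times $2^n$) or a product of at most a constant number of such entries over a real denominator. Alternatively, and more simply, I would invoke \Cref{lemma:basic_operations_in_qisq_are_tractable} directly. Each of $p+q$, $p/q$, $(p+q)/(r+s)$ and $\frac{1}{\sqrt2}(p+q)$ is obtained from elements of $R_{2(n+t)+1}$ by a constant number (at most four) of additions, multiplications and divisions. The factor $\frac1{\sqrt2}=\frac12\sqrt2$ lies in $R_1$. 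Each operation inflates the index by at most a constant factor, so after a constant number of operations we land in $R_{c(n+t)}$. (The statement's ``$k$'' plays the role of $t$ here.)

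The main obstacle is the division cases $p/q$ and $(p+q)/(r+s)$. \Cref{lemma:basic_operations_in_qisq_are_tractable} requires a nonzero divisor, and I would argue that the algorithms only divide by nonzero quantities. Concretely, $r+s$ arises as a nonzero low-edge weight that is divided out during normalization, so it is nonzero whenever the division is actually performed. A second, minor point is making sure the constant blow-up factor is uniform, i.e. independent of the path lengths $m''-m'$. This holds because the path argument above collapses any product, however long, into a single quotient in $R_{2(n+t)+1}$ before any arithmetic is applied.
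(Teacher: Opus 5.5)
Your proposal is correct and matches the paper's proof. The paper also first uses the two-path argument of \Cref{lem:DDlabel=fraction-of-C_k} to show that each product $i^\ell\prod_{d=m'}^{m''}\alpha_{x_1,\dots,x_d}$ is, up to a power of $i$, a ratio of two state-vector entries, hence in $R_{2(n+k)+1}$ by \Cref{lemma:ratio_of_state-vector_entries_is_linearly_representable}; it then applies \Cref{lemma:basic_operations_in_qisq_are_tractable} to the constant number of additions, multiplications and divisions, and your added points ($\frac{1}{\sqrt2}=\frac12\sqrt2\in R_1$, nonzero divisors, uniformity of the constant) make explicit what the paper leaves implicit, with the one small slip that the label-$1$ edge you leave by is the high edge whenever an EVDD low edge carries label $0$.
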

\begin{proof}
    Every number of the form $i^\ell\prod_{d=m'}^{m''} \alpha_{x_1,x_2,\dots,x_d}$ is in $R_{2(n+k)+1}$. The argument for this goes similarly to the argument in \Cref{thm:edgelabelbound}: every product of subsequent edge labels distinguishes between two paths with only `1' as edge label (similar to \Cref{lem:DDlabel=fraction-of-C_k}), such that this value is a fraction of two state vector entries (for LIMDD: up to multiplication with $-1$, $i$ or $-i$). Such fraction is in $R_{2(n+k)+1}$ by \Cref{lemma:ratio_of_state-vector_entries_is_linearly_representable}.

    Moreover, by \Cref{lemma:basic_operations_in_qisq_are_tractable} any constant number of additions, multiplications or divisions applied to numbers in $R_{2(n+k)+1}$ is in $R_{c(2(n+k)+1))}$. This proves the lemma.
\end{proof}

\begin{corollary}
\label{cor:coeff_bound_during_simulation--appendix}
    Every numerical value during the execution of a gate algorithm on an EVDD or LIMDD is in $R_{O(n+k)}$, where $n$ is the number of qubits and $k$ is the $T$-count of the state the gate algorithm is applied to.
\end{corollary}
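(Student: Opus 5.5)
The plan is to reduce the corollary to the per-algorithm lemmas already established, \Cref{lemma:AnalysisGetLabels} through \Cref{lemma:AnalysisHGate}, and then to invoke \Cref{lem:simple_sums_quotients_of_subseq_edgelabels_are_small}. The gate set $\{H,S,CZ,T\}$ generates Clifford$+T$, so it suffices to treat these four gates; Pauli gates are handled the same way, since their only effect is multiplication by elements of $\{\pm1,\pm i\}$.

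\textbf{Diagonal gates $S$, $CZ$, $T$.} The gate routine walks down the diagram and multiplies certain edge labels by a nonzero entry of the gate, namely $\pm1$, $\pm i$ or $e^{i\pi/4}=\frac{\sqrt2}{2}+\frac{\sqrt2}{2}i$. It then calls \makeedge to renormalize.
\begin{enumerate}
\item Each input label lies in $R_{2(n+k)+1}$ by \Cref{thm:edgelabelbound}.
\item Multiplying by one of these constants keeps the label in $R_{c(n+k)}$ by \Cref{lemma:basic_operations_in_qisq_are_tractable}, because the constants lie in $R_1$.
\item By \Cref{lemma:AnalysisMakeEdge}, \makeedge only produces values of the form $i^\ell a$, $i^\ell b$ or $i^\ell(a/b)^{\pm1}$, where $a$ and $b$ are such labels. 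One division of two $R_{O(n+k)}$ elements again lands in $R_{O(n+k)}$.
\end{enumerate}

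\textbf{$H$ gate.} \Cref{lemma:AnalysisHGate}, which relies on \Cref{lemma:AnalysisAdd} and \Cref{lemma:AnalysisFollow}, shows that every intermediate value has one of the forms $p$, $q$, $p+q$, $p/q$, $(p+q)/(r+s)$ or $\frac1{\sqrt2}(p+q)$. Here $p,q,r,s$ are root-edge-free products $i^\ell\prod_{d=m'}^{m''}\alpha_{x_1\dots x_d}$ of consecutive edge labels along a path, possibly multiplied by the entry label $a$. \Cref{lem:simple_sums_quotients_of_subseq_edgelabels_are_small} states exactly that all such values lie in $R_{c(n+k)}$. An additional factor $a$ is itself a product of consecutive labels, so it can be absorbed into the product by extending the range of $d$.

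\textbf{Main obstacle.} The delicate point is the claim inside \Cref{lem:simple_sums_quotients_of_subseq_edgelabels_are_small}: that any product of consecutive bulk labels is a ratio of two state-vector entries, up to a factor $i^\ell$. I would prove it as in \Cref{lem:DDlabel=fraction-of-C_k}.
\begin{enumerate}
\item Take a path from the root to the top node of the segment.
\item Complete it to the terminal in two ways. The first follows the segment and then a path of all-$1$ (low) edges. The second goes directly from the top node along an all-$1$ path.
\item The two full paths differ exactly by the segment product, so that product equals $\langle x|\phi\rangle/\langle y|\phi\rangle$ up to $i^\ell$. By \Cref{lemma:ratio_of_state-vector_entries_is_linearly_representable}, this ratio lies in $R_{2(n+k)+1}$.
\end{enumerate}

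A subtlety arises for the recursive $H$ calls. Intermediate nodes created during \textsc{Add} belong to partially built diagrams, not to the input diagram, and in LIMDDs the segment may include swapped low and high edges. To avoid this issue, I apply the argument only to labels of the input diagram, which is a valid low-canonical diagram of a state with $T$-count $k$. The new labels created along the way are shown to be the explicit expressions from \Cref{lemma:AnalysisAdd}, not read off any intermediate diagram.

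Finally, every value is obtained by a constant number of field operations from elements of $R_{O(n+k)}$. Applying \Cref{lemma:basic_operations_in_qisq_are_tractable} a constant number of times therefore keeps everything in $R_{O(n+k)}$.
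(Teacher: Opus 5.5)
Your overall route matches the paper's. You reduce to the per-routine lemmas and express intermediate values through products of consecutive bulk labels. You bound those products as ratios of amplitudes via \Cref{lem:simple_sums_quotients_of_subseq_edgelabels_are_small}, and finish with \Cref{lemma:basic_operations_in_qisq_are_tractable}. Your $H$-gate paragraph is fine. The gap is in the diagonal-gate paragraph for LIMDDs: step 3 assumes \makeedge only ever receives an input label times a gate constant. That is false. The $S$, $CZ$, $T$ routine rebuilds every node between the root and the target level with a \makeedge call, and the root label returned by each call is multiplied into the edge label one level up. By \Cref{lemma:AnalysisMakeEdge} that returned label is $i^\ell a$ or $i^\ell b$. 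The second case occurs whenever \textsc{GetLabels} takes the swap $x=1$, which is exactly the high-determinism step pictured in the main text, where the root label becomes $c\cdot a\cdot e^{i\pi/4}X$. Hence the \makeedge inputs at a node above the target are a phase times a product of consecutive labels running down to the target level, and this product grows with recursion depth. If you apply \Cref{lemma:basic_operations_in_qisq_are_tractable} once per level, as step 3 and your closing paragraph suggest, the constant compounds and you only get $R_{c^{n}(n+k)}$.

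The fix is the step the paper makes explicitly: ``since a product of subsequent edge labels along a path in the decision diagram is in $R_{O(n+t)}$, this also holds for the updated root edge label.'' Concretely, the accumulated phases multiply to a power of $e^{i\pi/4}$. The remaining factor is a product of consecutive labels of the \emph{input} diagram, so your own ratio-of-amplitudes argument bounds it directly, and \makeedge then performs a single division. For EVDD the problem does not arise: with low canonicity the returned root factor is always one of the gate's phases, since either the low label is $1$, or it is $0$ and the high label is $1$.

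Two smaller points. First, at the outermost call the entry label $a$ that you absorb into a product of bulk labels is the root label, which \Cref{thm:edgelabelbound} does not cover. There you need the convention of \Cref{thm:edgelabelbound-dynamic}: track only $|r|^2$. Second, your aside that Pauli gates are just multiplication by $\pm1,\pm i$ holds for LIMDDs, where they update the root LIM. It fails for EVDDs, where $X$ swaps children and forces renormalization. This is harmless, since $X=HS^2H$.
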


\begin{algorithm}[h!]
	\begin{algorithmic}[1]
        \Procedure{HGate}{$\Edge~ \ledge{A}{v}$ \textbf{with} $A\in \pauli$-LIM,
        	$k \in \set{1,\dots, \index(v)}$}
		\If{$v \notin \textsc{HGate}\cache$}\label{sg:cache1} \Comment{Compute result once for $v$ and store in cache:}
		\If{$\index(v)=k$}
		\State $\textsc{HGate}\cache[v] := \frac{1}{\sqrt 2} \cdot\makeedge(\textsc{Add}(\low v,  \high v), \textsc{Add}(\low v,  -\high v))$\hspace{-1em}
		\label{hg:hk}
		\Else
		\State $\textsc{HGate}\cache[v] := \makeedge(\textsf{HGate}(\low v,k), \textsf{HGate}(\high v,k))$
		\label{hg:rebuild}
		\EndIf
		\EndIf
		\State \Return $H_k A H_k^\dagger  \cdot \textsc{HGate}\cache[v]$
							\Comment{Retrieve result from cache}
		\label{hg:rebuild2}
		\EndProcedure
	\end{algorithmic}
    \caption{(Reproduced from~\cite{vinkhuijzen2023limdd}) Apply gate $H$ to qubit $k$ for \pauli-\limdd \ledge Av. We let $n = \index(v)$.}
	\label{alg:H-gate}
\end{algorithm}

\begin{algorithm}[h!]
	\begin{algorithmic}[1]
		\Procedure{Add}{\Edge $\ledge[e=] Av$, \Edge $\ledge[f=] Bw$
		 \textbf{with} $\index(v) = \index(w)$}
		\If{$\index(v)=0$}
			 \Return $\leafedge{A+B}$ \Comment{$A,B \in \mathbb C$}
		\EndIf
			\If{$v\not\beforeq w$} \Return $\textsc{Add}(\ledge Bw, \ledge Av)$\Comment{Normalize for cache lookup}
			\EndIf
			\label{algline:add:add-swap-cache-lookup}
			\State $C:=\rootlabel(\ledge {A^{-1}B}w)$
			\label{algline:add:add-cache-factor-C}
			\If{$(v, C, w)\notin \textsc{Add-Cache} $}\Comment{Compute result for the first time:}
			\label{algline:add:lookup-cache-factor-C}
		\State \Edge $a_0:=\textsc{Add}(\follow 0{\ledge{}{v}}, \follow 0{\ledge{C}{w}})$
			\label{algline:add:add-0}
		\State \Edge $a_1:=\textsc{Add}(\follow 1{\ledge{}{v}}, \follow 1{\ledge{C}{w}})$
			\label{algline:add:add-1}
		\State $\textsc{Add-Cache}[(v, C, w)] :=  \makeedge(a_0, a_1)$\Comment{Store in cache}
			\label{algline:add:add-makeedge}
			\label{algline:add:add-store}
		\EndIf
		\State\Return $A\cdot \textsc{Add-Cache}[(v,C,w)]$\Comment{Retrieve from cache}
		\EndProcedure
	\end{algorithmic}
	\caption{(Reproduced from~\cite{vinkhuijzen2023limdd}) Given two $n$-LIMDD edges $e,f$, constructs a new LIMDD edge $a$ with $\ket{a}=\ket{e}+\ket{f}$.
	}
	\label{alg:add-limdds}
	\label{alg:add-limdds-cache}
\end{algorithm}

\begin{algorithm}
	\begin{algorithmic}[1]
		\Procedure{Follow}{\Edge $\ledge[e]{\lambda P_n \otimes \dots \otimes P_1}{v}$, $x_n, \dots, x_k \in \set{0,1}$ \textbf{with} $n = \index(v)$ and $k \geq 1$}
        \If{$k > n$} \Return $\ledge{\lambda P_n \otimes \dots \otimes P_1}{v} $
        \Comment{End of bit string}
        \EndIf
		\State $\gamma \bra{y_n \dots y_k} := \bra{x_n \dots x_k} \lambda P_n \otimes \dots \otimes P_k$ \Comment{$O(n)$-computable LIM operation}
		\label{algline:amplitude:lim}
		\If{$y_n = 0$}\Comment{$y_n = 0$}
		\State \Return $\gamma \cdot \textsc{Follow}(\textsf{low}_v, y_{n-1}, \dots, y_k)$
		\Else\Comment{$y_n = 1$}
		\State \Return $\gamma \cdot \textsc{Follow}(\textsf{high}_v, y_{n-1}, \dots, y_k)$
		\EndIf
		\EndProcedure
	\end{algorithmic}
	\caption{(Reproduced from~\cite{vinkhuijzen2023limdd}) \textsc{Follow}: a generalization of \textsc{ReadAmplitude}, returning edges.}
	\label{alg:follow}
\end{algorithm}

\begin{algorithm}
	\begin{algorithmic}[1]
        \Procedure{MakeEdge}{\Edge $\ledge[e_0]{A}{v_0}$, $\ledge[e_1]{B}{v_1}$, \textbf{with}  $v_0, v_1$ reduced, $A \neq 0$ \textbf{or} $B \neq 0$}
            \If{$v_0\not\beforeq v_1$ \textbf{or} $A=0$} \Comment{Enforce \textbf{low precedence} and enable \textbf{factoring}}
                \State \Return\label{l:swap}
                        $(X \otimes \id^{\otimes n})\cdot \text{MakeEdge}(e_1, e_0)$
			\EndIf
            \If{$B = 0$}%
            \State $v_1 := v_0$     \Comment{Enforce  \textbf{zero edges}}
            \State $v := \lnode{\id^{\otimes n}}{v_0}{0}{v_0}$\label{l:low1}
            \Comment{Enforce \textbf{low factoring}}
            \State $\rootlim := \id \otimes A$   \Comment{$\rootlim \ket v = \ket 0 \otimes 
                                        A \ket{v_0} + \ket 1 \otimes  B \ket{v_1}$}\label{l:low2}
            \Else
            \State $\hat A := A^{-1}B$ \Comment{Enforce \textbf{low factoring}}\label{l:low3}
            \State $\highlim, \rootlim:=\textsc{GetLabels}(\hat A,v_0,v_1)$
            \label{algline:makeedge-get-labels}
            \Comment{Enforce \textbf{high determinism}}
            \State $v := \lnode{\id^{\otimes n}}{v_0}{\highlim}{v_1}$ 
             \Comment{$\rootlim \ket v = \ket 0 \otimes 
                                         \ket{v_0} + \ket 1 \otimes A^{-1} B \ket{v_1}$}
                \label{l:low4}
            \State $\rootlim := (\id \otimes A) \rootlim $ 
             \Comment{$(\id \otimes A)\rootlim \ket v = \ket 0 \otimes  A \ket{v_0} + \ket 1 \otimes  B \ket{v_1}$}
            \EndIf
			\State $v_{\text{r}}:=$ Find or create unique table entry $\unique[v] = (v_0, \highlim, v_1)$       
			\label{algline:find-v-in-unique}
            \Comment{Enforce \textbf{merge}}
            \State \Return $\ledge{\rootlim}{v_{\text{r}}}$
		\EndProcedure
	\end{algorithmic}
	\caption{(Reproduced from~\cite{vinkhuijzen2023limdd}) 
        Algorithm \makeedge takes two root edges to (already reduced) nodes $v_0,v_1$, the children of a new node, and returns a reduced node with a root edge.
	It assumes that %
	$\index(v_0) = \index(v_1) = n$.
    The bold comments refer to reductions defined in~\cite{vinkhuijzen2023limdd}.
}
	\label{alg:make-edge}
\end{algorithm}

\begin{algorithm}
	\caption{(Reproduced from~\cite{vinkhuijzen2023limdd}) 
		Algorithm for finding LIMs $\highlim$ and $\rootlim$ required by \makeedge.
		Its parameters represent a semi-reduced node $\lnode[v]{\id}{v_0}{\lambda P}{v_1}$
		and it returns LIMs $\highlim, \rootlim$ such that $\ket v = \rootlim \ket{w}$
		with $\lnode[w]{\id}{v_0}{\highlim}{v_1}$.
		The LIM $\highlim$ is chosen canonically as the lexicographically smallest.
		It runs in $O(n^3)$-time (with $n$ the number of qubits),
		provided $\getautomorphisms$ has been computed for children $v_0, v_1$
		(an amortized cost).
	}
    \label{alg:find-canonical-edges}
	\begin{algorithmic}[1]
		\Procedure{GetLabels}{\textsc{PauliLim} $\lambda P$, \Node $v_0, v_1$ \textbf{with} $\lambda \neq 0$ and $v_0, v_1$ reduced}
		\Statex \textbf{Output}: canonical high label $\highlim$ and root label $\rootlim$
		\State $G_0, G_1 := \getautomorphisms(v_0), \getautomorphisms(v_1)$
		\State $(g_0, g_1) := \textsc{ArgLexMin}(G_0, G_1, \lambda P)$
		\label{line:getlabels-argmin}
		\If{$v_0=v_1$}
		\label{algline:getlabels-start-minimizing}
		\State $(x,s):=\displaystyle\argmin_{(x,s)\in\{0,1\}^2} \set{(-1)^s\lambda^{(-1)^x}g_0Pg_1}$
		\Else
		\State $x:=0$
		\State $s:=\displaystyle \argmin_{s\in \{0,1\}} \set{ (-1)^s\lambda g_0Pg_1 }$
		\label{line:minimized-lim} 
		\EndIf
		\State $\highlim := (-1)^s \cdot \lambda^{(-1)^x} \cdot g_0 \cdot P \cdot g_1$
		\State $\rootlim := (X \otimes \lambda P)^{x} \cdot (Z^s \otimes (g_0)^{-1})$
		\State \Return $(\highlim, \rootlim)$
		\EndProcedure
	\end{algorithmic}
    \label{alg:get-labels}
\end{algorithm}

\subsubsection{Measurements}
\label{subsec:appendix_measurements_efficiency}

Now we show that measurements on top-qubits can be applied efficiently, i.e., with linear-sized representable numbers. We give a characterization for all numbers encountered during the measurement algorithm in \autoref{lemma:squaredNorm_analysis}. Then, in \autoref{lemma:SquaredNorm_values_are_efficiently_representable}, we show that these numbers are all in $R_{O(n+t)}$. Finally, all ingredients for the proof are collected in \autoref{cor:measurements_with_efficient_numbers}.

\begin{lemma}
\label{lemma:squaredNorm_analysis}
    \sloppy The algorithm \textsc{SquaredNorm} called on node $v$ reachable by path $x_1\dots,x_k$ returns the value $\sum_{x_{k+1}\dots x_n\in\{0,1\}^{n-k-1}}\prod_{d=k}^n |\alpha_{x_1\dots x_d}|^2$.
\end{lemma}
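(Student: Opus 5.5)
I will prove the statement by induction on the height $n-k$ of the node $v$, that is, on the number of levels between $v$ and the terminal node. I assume \textsc{SquaredNorm} has the standard recursive form (as in~\cite{vinkhuijzen2023limdd}). On the terminal node it returns $1$. On an internal node $v$ with low edge $\ledge{A_0}{v_0}$ and high edge $\ledge{A_1}{v_1}$ it returns $|a_0|^2\,\textsc{SquaredNorm}(v_0)+|a_1|^2\,\textsc{SquaredNorm}(v_1)$, where $a_j$ is the numerical coefficient of $A_j$. Results are cached per node, which affects running time but not the returned value. Throughout, I read the edge into $v$ as carrying the label $\alpha_{x_1\dots x_k}$, so that the product $\prod_{d=k}^n|\alpha_{x_1\dots x_d}|^2$ starts from that incoming edge. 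I will fix the index conventions (the range of the sum and where the product starts) to match this reading once the pseudocode is written down. The point is that the claimed value is exactly the squared $L^2$-norm of the (sub)vector represented by the edge pointing to $v$.

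The first step is a correctness lemma for a single level: the squared norm of the vector $\ket{e}$ represented by an edge $e=\ledge{\lambda P}{w}$ equals $|\lambda|^2\,\|\ket{w}\|^2$. For EVDD, $P=I$ and this is immediate. For LIMDD, $P$ is a Pauli string and hence unitary, so $\|\lambda P\ket{w}\|^2=|\lambda|^2\|\ket{w}\|^2$. This is why the phases $i^\ell$, which appear in the earlier lemmas (\Cref{lemma:AnalysisFollow}), disappear in this statement: all that remains are the moduli $|\alpha_{x_1\dots x_d}|^2$. Next I decompose $\ket{v}=\ket{0}\otimes A_0\ket{v_0}+\ket{1}\otimes A_1\ket{v_1}$. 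The two summands have orthogonal supports, so
\[
\|\ket{v}\|^2=|a_0|^2\|\ket{v_0}\|^2+|a_1|^2\|\ket{v_1}\|^2 .
\]
This is the recursion the algorithm implements. It remains to identify $a_{x_{k+1}}$ with $\alpha_{x_1\dots x_{k+1}}$, which holds by the definition of the $\alpha$ notation. For LIMDD I must make sure that the path label refers to the edge actually followed. A LIM with an $X$ component on the incoming edge may swap which child is reached by the bit $x_{k+1}$. However, the sum ranges over both values of $x_{k+1}$, so the multiset of terms is unchanged.

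The induction is then straightforward. In the base case (the terminal node), the value $1$ is the empty product. In the inductive step, substituting the hypothesis for $v_0$ and $v_1$ into the recursion gives
\[
\sum_{x_{k+1}}|\alpha_{x_1\dots x_{k+1}}|^2\sum_{x_{k+2}\dots x_n}\ \prod_{d=k+2}^n|\alpha_{x_1\dots x_d}|^2,
\]
which regroups to the claimed sum of products. I then multiply by the factor for the incoming edge, depending on whether the algorithm applies it at the caller or inside the call.

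I expect the main obstacle to be bookkeeping rather than mathematics. Specifically, I must pin down exactly where \textsc{SquaredNorm} multiplies in the incoming label, which determines the off-by-one in the index ranges of the statement. I must also justify, for LIMDD, that paths through Pauli LIMs with $X$/$Y$ factors still enumerate every extension $x_{k+1}\dots x_n$ exactly once. I would handle the latter by noting that Pauli strings act on basis states as signed or phased permutations, which leave the sum of squared moduli invariant.
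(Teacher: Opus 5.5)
Your proposal is correct and is essentially the paper's argument: induction on the depth of $v$, with the terminal as base case, and an inductive step that sums the two recursive calls on the children and multiplies by $|\lambda|^2=|\alpha_{x_1\dots x_k}|^2$, which in \Cref{alg:measurement-top-qubit} happens inside the call (so the product does start at $d=k$, and the sum should run over the $n-k$ free bits $x_{k+1}\dots x_n$). The detour through $L^2$-norms, orthogonality and unitarity of Pauli strings is not needed, since the claim only concerns what the recursion computes, and the phases $i^\ell$ from the LIMs disappear simply because $|i^\ell|=1$.
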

\begin{proof}
    The proof goes by induction.

    In the case that $k=n$ (i.e., $\index(v)=0$), the function \textsc{SquaredNorm} just returns the square normed edge label $|\alpha_{x_1\dots x_n}|^2$.

    \sloppy For general $k$, we use induction for the summation in line 8. The \textsc{SquaredNorm} algorithm is called on nodes reachable by paths $x_1\dots,x_k0$ and $x_1\dots,x_k1$, returning values $\sum_{x_{k+2}\dots x_n\in\{0,1\}^{n-k-2}}\prod_{d=k+1}^n |\alpha_{x_1\dots x_k 0\dots x_n}|^2$ and $\sum_{x_{k+2}\dots x_d\in\{0,1\}^{n-k-2}}\prod_{d=k+1}^n |\alpha_{x_1\dots x_k 1\dots x_d}|^2$. Summing them in line 8 yields $\sum_{x_{k+1}\dots x_n\in\{0,1\}^{n-k-1}}\prod_{d=k+1}^n |\alpha_{x_1\dots x_d}|^2$. The multiplication with $\lambda = \alpha_{x_1\dots x_k}$ in line 9 yields the returned value $\sum_{x_{k+1}\dots x_n\in\{0,1\}^{n-k-1}}\prod_{d=k}^n |\alpha_{x_1\dots x_d}|^2$.
\end{proof}

\begin{lemma}
\label{lemma:seq_edge_label_prod_is_fraction}
    The value $\prod_{d=k+1}^n |\alpha_{x_1\dots x_d}|^2$ equals $\left|\frac{\langle x_1\dots x_n\mid\phi\rangle}{\langle x_1\dots x_k 0\dots 0\mid\phi\rangle }\right|^2=\frac{\langle x_1\dots x_n\dyad{\phi} x_1\dots x_k 0\dots 0 \rangle}{\langle x_1\dots x_k 0\dots 0\dyad{\phi} x_1\dots x_k 0\dots 0\rangle}$.
\end{lemma}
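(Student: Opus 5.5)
We will prove \Cref{lemma:seq_edge_label_prod_is_fraction} with the same path argument as in \Cref{lem:DDlabel=fraction-of-C_k}, working with absolute values so that the LIMDD phases $i^\ell$ play no role.

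\textbf{Step 1: the root-to-terminal path.} Let $\beta$ be the root edge label. Following the path $x_1\dots x_n$ from the root to the terminal node, the product of all edge labels equals the amplitude up to a phase:
\[
\beta\prod_{d=1}^n\alpha_{x_1\dots x_d}=i^{\ell}\langle x_1\dots x_n\mid\phi\rangle .
\]
For EVDD we have $\ell=0$. For LIMDD the phase $i^\ell$ comes from the Pauli LIMs, whose nonzero entries lie in $\{\pm1,\pm i\}$, exactly as in the proof of \Cref{lem:DDlabel=fraction-of-C_k}.

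\textbf{Step 2: the reference path.} Consider the path $x_1\dots x_k0\dots0$. It agrees with the first path up to the node $v$ reached by $x_1\dots x_k$. Under the `low' rule, every low edge below $v$ has label $1$, since $v$ is nonzero. Hence
\[
\beta\prod_{d=1}^k\alpha_{x_1\dots x_d}=i^{\ell'}\langle x_1\dots x_k0\dots0\mid\phi\rangle .
\]
For LIMDD, the LIMs can permute which physical child is reached. Even so, the path of low edges from $v$ corresponds to some basis string $x_1\dots x_k z$. We therefore take $z=0\dots0$ as notation for the index this path actually reaches. I expect this identification to be the main obstacle: reconciling the literal $0\dots0$ with the LIM-induced bit flips. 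I would handle it by noting that only the path structure is used below.

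\textbf{Step 3: divide and take moduli.} The quantity in Step 2 is nonzero, because $v$ is nonzero and the low-edge path contains only labels $1$. Dividing the identity of Step 1 by that of Step 2 gives
\[
\prod_{d=k+1}^n\alpha_{x_1\dots x_d}=i^{\ell-\ell'}\frac{\langle x_1\dots x_n\mid\phi\rangle}{\langle x_1\dots x_k0\dots0\mid\phi\rangle}.
\]
Taking squared moduli removes the phase, which yields the first equality of the lemma.

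\textbf{Step 4: density-matrix form.} Multiply numerator and denominator by $\langle\phi\mid x_1\dots x_k0\dots0\rangle$, as in \Cref{lemma:ratio_of_state-vector_entries_is_linearly_representable}. This writes the quotient as
\[
\frac{\langle x_1\dots x_n\dyad{\phi}x_1\dots x_k0\dots0\rangle}{\langle x_1\dots x_k0\dots0\dyad{\phi}x_1\dots x_k0\dots0\rangle}.
\]
The intended identity is then understood via the modulus. Equivalently, one writes the squared modulus as the product of this ratio with its complex conjugate. The real denominator is what the subsequent use in \Cref{lemma:SquaredNorm_values_are_efficiently_representable} needs.
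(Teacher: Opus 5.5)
Your Steps 1--3 follow the paper's own argument: its proof just points back to the path construction of \Cref{lem:DDlabel=fraction-of-C_k}. However, the justification in Step 2 is wrong for EVDD-low. The low rule allows a low edge labelled $0$, with the high edge then labelled $1$, so a nonzero node need not have a path of $1$-labelled low edges below it. As in the proof of \Cref{lem:DDlabel=fraction-of-C_k}, you must follow, at each node, the outgoing edge labelled $1$. The reference index is then $y=x_1\dots x_k z$ with $z\neq 0\dots0$ in general, already for EVDD. For example, take $\ket{\phi}=\frac{1}{\sqrt2}(\ket{00}+\ket{11})$ with $k=1$ and path $11$: you get $|\alpha_{11}|^2=1$, while $\langle 10\mid\phi\rangle=0$. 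So reading $0\dots0$ as ``the index actually reached'' is needed for EVDD too, not only because of LIMs. Similarly, $v$ being nonzero does not make the prefix product $\beta\prod_{d\le k}\alpha_{x_1\dots x_d}$ nonzero, because $0$-labelled edges can point to nonzero nodes. You must assume the prefix $x_1\dots x_k$ carries no zero label.

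The more serious problem is Step 4. The second equality of the statement is false, and you should say so rather than declare it ``understood via the modulus''. Write $x=x_1\dots x_n$ and $\rho_{xy}=\langle x\mid\phi\rangle\langle\phi\mid y\rangle$. Your own computation shows that $\rho_{xy}/\rho_{yy}$ equals $\langle x\mid\phi\rangle/\langle y\mid\phi\rangle$ itself, not its squared modulus. For example, take $\ket{\phi}=\frac{1}{\sqrt2}(\ket{0}-\ket{1})$, $k=0$ and $x_1=1$: the left-hand side is $|{-1}|^2=1$, but $\rho_{10}/\rho_{00}=-1$. The correct identity is
\[
\prod_{d=k+1}^n|\alpha_{x_1\dots x_d}|^2=\frac{\langle x\mid\phi\rangle\langle\phi\mid x\rangle}{\langle y\mid\phi\rangle\langle\phi\mid y\rangle}=\frac{\rho_{xx}}{\rho_{yy}},
\]
a ratio of two \emph{diagonal} entries. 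Your conjugate-product form $|\rho_{xy}|^2/\rho_{yy}^2$ equals this for a pure state. The corrected form is also what \Cref{lemma:SquaredNorm_values_are_efficiently_representable} actually needs. Summing over $x_{k+1}\dots x_n$ gives a real sum of at most $2^n$ diagonal entries, divided by the real entry $\rho_{yy}$. Summing the off-diagonal numerators as that lemma does instead yields $\sum\langle x\mid\phi\rangle/\langle y\mid\phi\rangle$, which is not a squared norm.
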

\begin{proof}
    The proof follows by the same reasoning as in \autoref{lem:DDlabel=fraction-of-C_k} and \autoref{fig:quotient-of-statevectorentries}, with the extension that a multiplication over multiple edge values is used instead of a single edge value $a$.
\end{proof}

\begin{lemma}
\label{lemma:SquaredNorm_values_are_efficiently_representable}
    For any path $x_1\dots x_k$ the value $\sum_{x_{k+1}\dots x_n\in\{0,1\}^{n-k-1}}\prod_{d=k}^n |\alpha_{x_1\dots x_d}|^2$ is in $R_{O(n+t)}$.
\end{lemma}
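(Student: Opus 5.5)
We will not try to bound the exponential-size sum term by term. Instead we rewrite it, via \Cref{lemma:seq_edge_label_prod_is_fraction}, as one quotient of density-matrix quantities. Fix the path $x_1\dots x_k$ and write $z := x_1\dots x_k0\dots0$ for the reference basis state reached by following only low edges below node $v$. First split off the factor for $d=k$: it is $|\alpha_{x_1\dots x_k}|^2$, a single edge label, and does not depend on the summation variables. The remaining product $\prod_{d=k+1}^n|\alpha_{x_1\dots x_d}|^2$ equals, by \Cref{lemma:seq_edge_label_prod_is_fraction}, $\frac{\langle x_1\dots x_n|\phi\rangle\langle\phi|x_1\dots x_n\rangle}{\langle z|\phi\rangle\langle\phi|z\rangle}$. (We read the right-hand side of that lemma as the diagonal entry over the diagonal entry, which is what the modulus squared gives.) In the LIMDD case the extra phases $i^\ell$ disappear under $|\cdot|^2$.

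The denominator is the same for every term, so the sum equals
\[
|\alpha_{x_1\dots x_k}|^2\cdot\frac{\sum_{x_{k+1}\dots x_n}2^n\langle x_1\dots x_n|\phi|x_1\dots x_n\rangle}{2^n\langle z|\phi|z\rangle}.
\]
We handle the two parts of the fraction separately. The denominator lies in $Q_{n,t}$ by \Cref{lemma:vector-entry-lemma}, is real, and is nonzero because the all-low path carries no zero labels (as argued in \Cref{lem:DDlabel=fraction-of-C_k}). The numerator is a sum of at most $2^{n-k}\le 2^n$ real elements of $Q_{n,t}$. Each such element has the form $\frac{\ell}{\sqrt{2^t}}+\frac{m}{\sqrt{2^{t-1}}}$, so the sum has the same form with $|\ell|\le 2^{2n+t}$ and $|m|\le 2^{2n+t-1}$; that is, it lies in $Q_{2n,t}$.

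We then rerun the rationalisation computation of \Cref{lemma:ratio_of_state-vector_entries_is_linearly_representable}, now with numerator coefficients bounded by $2^{2n+t}$ and denominator coefficients bounded by $2^{n+t}$. Multiplying numerator and denominator by the conjugate $\ell''-m''\sqrt2$ produces rational coefficients whose numerators and denominators are bounded by $2^{3n+2t+O(1)}$. Hence the quotient lies in $R_{O(n+t)}$. By \Cref{thm:edgelabelbound}, $|\alpha_{x_1\dots x_k}|^2$ is a product of an edge label in $R_{2n+2t+1}$ and its conjugate, so it lies in $R_{O(n+t)}$ by \Cref{lemma:basic_operations_in_qisq_are_tractable}. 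One more multiplication keeps the result in $R_{O(n+t)}$ by the same lemma. If $v$ is the root, its label is $r$ and we use the stored $|r|^2\in R_{2n+2t+1}$ directly.

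The main obstacle is the first step: justifying that the common denominator $\langle z|\phi|z\rangle$ really is shared by every term. This requires every path through $v$ to be compared against the same all-low continuation $z$. We would prove this by induction on depth below $v$. At each node, the product of the labels along the low-edge continuation is $1$, so the product along any continuation is the ratio between its amplitude and the amplitude of $z$ (up to LIM phases). This is exactly the argument behind \Cref{fig:quotient-of-statevectorentries}, iterated.

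A related subtlety: the low label could be $0$ at some node, in which case "low" must be replaced by the canonical label-$1$ edge. We would define $z$ by following the label-$1$ edge at each node rather than literally the $0$-branch. The rest of the argument then goes through unchanged.
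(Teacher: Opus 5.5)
Your proposal is correct and follows the same route as the paper. You rewrite the sum via \Cref{lemma:seq_edge_label_prod_is_fraction} as a sum of density-matrix entries over the common real denominator $\langle z\mid\phi\rangle\langle\phi\mid z\rangle$, place the numerator in $Q_{2n,t}$ and the denominator in $Q_{n,t}$, and rationalize as in \Cref{lemma:ratio_of_state-vector_entries_is_linearly_representable}. Your extra care also repairs two slips in the paper's write-up: it silently drops the $d=k$ factor $|\alpha_{x_1\dots x_k}|^2$, and it uses the off-diagonal numerator $\langle x_1\dots x_n\mid\phi\rangle\langle\phi\mid z\rangle$ inherited from the misstated \Cref{lemma:seq_edge_label_prod_is_fraction}, which gives $\langle x_1\dots x_n\mid\phi\rangle/\langle z\mid\phi\rangle$ rather than its squared modulus, whereas your diagonal entries are the right ones.
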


\begin{proof}
    \sloppy By \autoref{lemma:seq_edge_label_prod_is_fraction} we know that we can write $\sum_{x_{k+1}\dots x_n\in\{0,1\}^{n-k-1}}\prod_{d=k}^n |\alpha_{x_1\dots x_d}|^2 = \frac{1}{\langle x_1\dots x_k 0\dots 0\dyad{\phi} x_1\dots x_k 0\dots 0\rangle}\sum_{x_{k+1}\dots x_n\in\{0,1\}^{n-k-1}}\langle x_1\dots x_n\dyad{\phi} x_1\dots x_k 0\dots 0 \rangle$. By \autoref{lemma:vector-entry-lemma} we know that $\langle x_1\dots x_n\dyad{\phi} x_1\dots x_k 0\dots 0 \rangle$ is in $Q_{n,t}$. A sum of at most $2^n$ of numbers in $Q_{n,t}$ is in $Q_{2n,t}$. As we know again by \autoref{lemma:vector-entry-lemma} that $\langle x_1\dots x_k 0\dots 0\dyad{\phi} x_1\dots x_k 0\dots 0\rangle$ is also in $Q_{n,t}$, and that this number is real, we can use \autoref{lemma:ratio_of_state-vector_entries_is_linearly_representable}, showing that the value $\sum_{x_{k+1}\dots x_n\in\{0,1\}^{n-k-1}}\prod_{d=k}^n |\alpha_{x_1\dots x_d}|^2$ is in $R_{O(n+t)}$.
\end{proof}

\begin{corollary}
\label{cor:measurements_with_efficient_numbers}
    All values encountered during the execution of a measurement (using the algorithms from \autoref{alg:measurement-top-qubit}) are in $R_{O(n+t)}$.
\end{corollary}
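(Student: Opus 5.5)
The plan is to follow the structure of the measurement algorithm (\autoref{alg:measurement-top-qubit}) and classify every numerical value it computes. Then we invoke \autoref{lemma:squaredNorm_analysis}, \autoref{lemma:SquaredNorm_values_are_efficiently_representable} and \autoref{lemma:basic_operations_in_qisq_are_tractable}. The measurement consists of two phases: computing outcome probabilities via \textsc{SquaredNorm}, and then (optionally) renormalising the post-measurement diagram.

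\textbf{Phase 1: probabilities.} By \autoref{lemma:squaredNorm_analysis}, each call of \textsc{SquaredNorm} on a node reachable by path $x_1\dots x_k$ returns a sum over completions of products of squared edge labels. By \autoref{lemma:SquaredNorm_values_are_efficiently_representable}, that value lies in $R_{O(n+t)}$. The intermediate values inside a single call are of three kinds:
\begin{itemize}
\item the squared magnitude $|\lambda|^2$ of a bulk edge label, which lies in $R_{O(n+t)}$ by \Cref{thm:edgelabelbound} and one application of \autoref{lemma:basic_operations_in_qisq_are_tractable} (complex conjugation is free in the representation);
\item the sum of the two recursive return values, which is again a return value for the shorter path, hence covered by \autoref{lemma:SquaredNorm_values_are_efficiently_representable};
\item the product of that sum with $|\lambda|^2$, which is precisely the returned value.
\end{itemize}
At the top level, the stored root quantity $|r|^2$ is multiplied by the returned norms of the low and high children. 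This is a constant number of operations on elements of $R_{O(n+t)}$, so it stays in $R_{O(n+t)}$. Alternatively, one can note that the result is a measurement probability, which we showed lies in $Q_{2n,2n+t}\subseteq R_{O(n+t)}$.

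\textbf{Phase 2: collapse.} Projecting onto outcome $j$ amounts to replacing one child edge of the root by $0$. The remaining diagram's bulk labels are unchanged, and the new root is normalised through \makeedge. By \autoref{lemma:AnalysisMakeEdge}, this only produces values of the form $i^\ell a$, $i^\ell b$ or $i^\ell(a/b)^{\pm1}$ of existing labels, which lie in $R_{O(n+t)}$ by \autoref{lem:simple_sums_quotients_of_subseq_edgelabels_are_small}. Renormalisation divides $|r|^2$ by the probability $p_j$. Both quantities are in $R_{O(n+t)}$, so the quotient is too, by \autoref{lemma:basic_operations_in_qisq_are_tractable}. The post-measurement state is itself the output of a Clifford+$T$ circuit followed by a projection, and its rescaled density matrix entries are still covered by \autoref{lemma:vector-entry-lemma} up to the factor $p_j$, so subsequent use is consistent.

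\textbf{Main obstacle.} The delicate point is making sure the constant hidden in $O(n+t)$ does not accumulate along the recursion depth $n$. Naively, each level performs an operation and would blow the parameter up by a constant factor per level, giving $c^n$. To avoid this, I would argue non-inductively: every intermediate value is \emph{identified}, via \autoref{lemma:squaredNorm_analysis}, with a closed form (a ratio of sums of density matrix entries). It therefore lies in $R_{O(n+t)}$ directly by \autoref{lemma:SquaredNorm_values_are_efficiently_representable}. Only a bounded number of operations ever separate a computed value from such a closed form, and this is what keeps the bound linear.
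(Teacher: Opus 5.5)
Your Phase 1 is essentially the paper's proof. You identify every value computed by \textsc{SquaredNorm} with the closed form of \Cref{lemma:squaredNorm_analysis}, place it in $R_{O(n+t)}$ via \Cref{lemma:SquaredNorm_values_are_efficiently_representable}, and cover the constantly many top-level operations of \textsc{MeasurementProbability}, including $s_0/(s_0+s_1)$, with \Cref{lemma:basic_operations_in_qisq_are_tractable}. Your remark that constants must not compound along the recursion explains why this closed-form route is needed, which the paper leaves implicit.

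Phase 2 is unnecessary: the corollary only concerns the two procedures of \Cref{alg:measurement-top-qubit}, and no collapse step is analysed there. Its closing claim is also looser than stated, since \Cref{lemma:vector-entry-lemma} applies only to states $U\ket{0}^{\otimes n}$. Post-measurement entries are quotients by $p_j$, so they land in $R_{O(n+t)}$, not in $Q_{n,t}$.
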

\begin{proof}
    By combining  \autoref{lemma:squaredNorm_analysis} and \autoref{lemma:SquaredNorm_values_are_efficiently_representable} we know that all numbers encountered during the execution of the \textsc{SquaredNorm} algorithm are representable in $R_{O(n+k)}$. As the \textsc{MeasurementProbability} algorithm does only a finite number of arithmetic operations on the numbers returned by \textsc{SquaredNorm}, we know by \autoref{lemma:basic_operations_in_qisq_are_tractable} that all numbers encountered during the execution of the \textsc{MeasurementProbability} algorithm are in $R_{O(n+k)}$.
\end{proof}

\begin{algorithm}[t!]
	\begin{algorithmic}[1]
		\Procedure{MeasurementProbability}{\Edge $e$}
		\State $s_0 := \textsc{SquaredNorm}(\follow 0e)$
		\State $s_1 := \textsc{SquaredNorm}(\follow 1e)$
		\State \Return $s_0/(s_0+s_1)$
		\EndProcedure
        \Procedure{SquaredNorm}{$\Edge \ledge{\lambda P}{v}$ \textbf{with} $\lambda \in \mathbb C, P\in \Pauli^{\index(v)}$}
		\If{$\index(v)=0$}
		\Return $|\lambda|^2$
		\EndIf
		\If{$v \notin \textsc{SNorm}\cache$}\label{sn:cache1} \Comment{Compute result once for $v$ and store in cache:}
		\State $\textsc{SNorm}\cache[v] :=  \textsc{SquaredNorm}(\follow 0{\ledge {\mathbb I}v}) + \textsc{SquaredNorm}(\follow 1{\ledge {\mathbb I}v})$\label{sn:cache2}
		\EndIf
		\State \Return $|\lambda|^2 \cdot \textsc{SNorm}\cache[v]$
							\Comment{Retrieve result for $v$ from cache and multiply with $|\lambda|^2$}
		\EndProcedure
	\end{algorithmic}
    \caption{(Reproduced from~\cite{vinkhuijzen2023limdd}) Algorithms \textsc{MeasurementProbability} and for computing the probability of observing outcome $\ket 0$ when measuring the top qubit of a Pauli \limdd in the computational basis.
		The subroutine \textsc{SquaredNorm} takes as input a Pauli \limdd edge $e$, and returns $\braket{e|e}$.
		It uses a cache to store the value $s$ of a node $v$.}
	\label{alg:measurement-top-qubit}
\end{algorithm}

\section{Clifford+$T$ simulation with decision diagrams is tractable in $T$-count \label{appendix:numnodes_bounds}}

In this appendix section, we provide some proofs and extensions from lemmas and theorems in \Cref{sec:numnodes_bound}. %

\subsection{LIMDD width is tractable in $T$-count}

\begin{lemma}%
\label{lemma:CliffTstabilizers}
    Every quantum state that can be reached from the $\ket{0}^{\otimes n}$ state with Clifford gates and $t$ $T$-gates ($R_z$ rotation gates) has at least $n-t$ linearly independent Pauli string stabilizers.
\end{lemma}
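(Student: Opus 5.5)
We argue by induction on the number of gates, with the invariant that the current state $\ket{\psi}$ has a stabilizer group $S(\ket{\psi})$ of size at least $2^{n-t'}$, where $t'$ is the number of $T$ (or $R_Z$) gates applied so far. Equivalently, $S(\ket{\psi})$ has at least $n-t'$ independent generators. For the base case, $\ket{0}^{\otimes n}$ is stabilized by the $n$ independent generators $Z_1,\dots,Z_n$, so $|S|=2^n$.

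\emph{Clifford steps.} If $\ket{\psi'}=C\ket{\psi}$ with $C$ Clifford, then $P\mapsto CPC^\dagger$ is a group isomorphism $S(\ket{\psi})\to S(\ket{\psi'})$. It maps into $\mathfrak{P}_n$ because conjugation by a Clifford preserves the Pauli group. Indeed, $CPC^\dagger C\ket{\psi}=C\ket{\psi}$, and the inverse map is conjugation by $C^\dagger$. Hence $|S|$ is unchanged.

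\emph{$T$ / $R_Z(\theta)$ steps.} This is the main step. Let $G=R_Z(\theta)$ act on qubit $k$. Write $S=S(\ket{\psi})$ and define the subgroup
\[
S_0=\{P\in S \mid P \text{ has } I \text{ or } Z \text{ on qubit } k\}.
\]
\begin{itemize}
\item $S_0$ is the kernel of the homomorphism $S\to\mathbb{Z}_2$ that records whether the $k$-th tensor factor anticommutes with $Z$. It is a homomorphism because the $X$-part on qubit $k$ of a product of Paulis is the XOR of the individual $X$-parts. Therefore $[S:S_0]\le 2$, so $|S_0|\ge |S|/2$.
\item For $P\in S_0$, the gate $G$ is diagonal and commutes with $I$ and $Z$ on qubit $k$, so $GPG^\dagger=P$. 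Then $P\,G\ket{\psi}=GP\ket{\psi}=G\ket{\psi}$, and hence $S_0\subseteq S(G\ket{\psi})$.
\end{itemize}
This gives $|S(G\ket{\psi})|\ge |S(\ket{\psi})|/2$, so each $T$ gate decreases $\log_2|S|$ by at most one. This matches the mechanism sketched for \Cref{lemma:stabilizer-count_t-count-mainText}.

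Chaining the three facts, the output state satisfies $\log_2|S|\ge n-t$. Since a stabilizer group is abelian, does not contain $-I$, and has all elements squaring to the identity, it is an elementary abelian $2$-group. It therefore has exactly $\log_2|S|\ge n-t$ independent generators, and these are linearly independent as vectors in the symplectic representation. The only care needed is phases: an element of $S$ could be $\pm P$ or $\pm iP$. However, $\pm iP$ squares to $-I$ and so cannot stabilize a nonzero vector; hence elements of $S$ are $\pm P$ with $P\in\mathcal{P}_n$, consistent with the paper's counting of Pauli string stabilizers.
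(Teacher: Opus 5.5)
Your proof is correct and follows the same route as the paper's. In both, the stabilizers with $I$ or $Z$ on the target qubit form a subgroup of index at most two that commutes with the diagonal gate and therefore survives, so each $T$ (or $R_Z$) gate costs at most one generator. The paper gets this by putting the tableau in reduced row-echelon form so that at most one generator carries $X$ or $Y$ on qubit $k$; your kernel-of-a-homomorphism argument states the same fact directly, and you also spell out the Clifford step and the phase bookkeeping that the paper leaves implicit.
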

\begin{proof}
    We do this proof with induction to $t$.

    For $t=0$ the lemma is direct as every $n$-qubit stabilizer state has a stabilizer tableau with $n$ rows (see e.g.~\cite{aaronson2004improved}).
    
    Consider a quantum state with $m$ linearly independent Pauli string stabilizers. So the stabilizer tableau of this state has $m$ rows. Assume we want to apply a $T$-gate (or $R_z$ rotation gate) to the $k$'th qubit. We can bring the stabilizer tableau in reduced-row echelon form, as explained by~\cite{nonuni}. Then, without loss of generality, there is at most one tableau row with $X$ or $Y$ at qubit $k$, and the $m-1$ others have $Z$ or $I$ at qubit $k$. Conjugation of the $T$-gate (and the $R_z$ rotation gate) with $Z$ and $I$ does not change the $Z$ resp. $I$. Hence, all these $m-1$ stabilizer tableau rows are stabilizers of the resulting state. This shows the induction hypothesis.
\end{proof}

The following lemma is a variation of Lemma~\ref{lemma:CliffTstabilizers} for Toffoli gates instead of $T$ gates. This gives a stronger bound when Toffoli gates are considered instead of $T$ gates. %
\begin{lemma}
\label{lemma:CliffToffolistabilizers}
    Applying a Toffoli gate to quantum state $\ket{\psi}$ with $k$ linearly independent Pauli string stabilizers gives a resulting quantum state $\ket{\phi}$ with at least $k-3$ linearly independent Pauli string stabilizers.
\end{lemma}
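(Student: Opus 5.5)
We mirror the proof of \Cref{lemma:CliffTstabilizers}, but with the Toffoli gate in place of the $T$ gate. Let the Toffoli act on qubits $a,b$ (controls) and $c$ (target), and let $G$ be a generating set of $k$ independent Pauli string stabilizers of $\ket{\psi}$. The key observation is that conjugation by the Toffoli maps a Pauli string to a Pauli string (up to sign) whenever its restriction to qubits $\{a,b,c\}$ lies in a particular subgroup $\mathcal{K}$ of the three-qubit Pauli group. For $\mathrm{Tof} = I - 2\,\ket{11}\bra{11}\otimes\ket{-}\bra{-}$ this subgroup is generated by $Z_a$, $Z_b$ and $X_c$: these three commute with the gate and are fixed under conjugation. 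If $P$ stabilizes $\ket{\psi}$ and $\mathrm{Tof}\,P\,\mathrm{Tof}^\dagger = P$, then $P$ stabilizes $\mathrm{Tof}\ket{\psi}$. So every stabilizer whose local part on $\{a,b,c\}$ lies in $\langle Z_a, Z_b, X_c\rangle$ (times identity elsewhere, with arbitrary Paulis on the other qubits) survives.

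Next we do the counting via row reduction, as in \cite{nonuni}. The restriction map from the stabilizer group $S(\ket{\psi})$ to the three-qubit Pauli group modulo phases is a group homomorphism into $\mathbb{F}_2^{6}$. Composing it with the quotient by the image of $\mathcal{K}$, which is 3-dimensional, gives a homomorphism $\pi\colon S(\ket{\psi}) \to \mathbb{F}_2^{6}/\mathcal{K} \cong \mathbb{F}_2^{3}$.
\begin{itemize}
\item The kernel of $\pi$ has at least $k-3$ independent generators, since $|S|/|\ker\pi| \le 2^3$.
\item Every element of $\ker\pi$ has local part in $\mathcal{K}$ up to phase, so it commutes with Toffoli and is a stabilizer of $\ket{\phi}=\mathrm{Tof}\ket{\psi}$.
\item Linear independence of these $k-3$ kernel generators is preserved, because conjugation by Toffoli acts as the identity on them.
\end{itemize}
In tableau language, we put the tableau in reduced row echelon form with respect to the six columns encoding $X_a,X_b,Z_c$ (the "bad" directions). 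At most three rows then have nonzero entries in these columns, and all other rows lie in the kernel.

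The main obstacle is correctly identifying the commuting subgroup; in particular, we must check that it has dimension exactly 3, so that the quotient has dimension 3 rather than more. Phases need care: a product like $Z_aX_c$ is fine since it is Hermitian and commutes with Toffoli. A cleaner route is to note that Toffoli equals $H_c\,CCZ\,H_c$, and that $CCZ$ is diagonal and hence commutes with $Z_a,Z_b,Z_c$; conjugating by $H_c$ then reduces the argument to the diagonal case. We then verify that $\langle Z_a,Z_b,Z_c\rangle$ has dimension $3$ inside $\mathbb{F}_2^6$, which gives codimension 3, exactly the claimed loss of at most three generators.
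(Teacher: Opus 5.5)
Your proof is correct and is essentially the paper's argument. Both rest on the Toffoli-invariance of $Z_a$, $Z_b$, $X_c$ and a row-reduction count showing at most three generators are lost; your quotient map $S(\ket{\psi})\to\mathbb{F}_2^6/\mathcal{K}\cong\mathbb{F}_2^3$ (whose kernel has index at most $8$ and consists of Toffoli-invariant stabilizers) is a cleaner, fully rigorous version of the paper's ``these invariant Paulis span half of the six echelon rows'' step. One small slip: in your tableau remark, the row reduction should be with respect to the three columns $x_a$, $x_b$, $z_c$ (the coordinates of the quotient), not six columns.
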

\begin{proof}
    
    Consider all the $k$ stabilizers of $\ket{\psi}$. We can bring these stabilizers (i.e., the stabilizer tableau) in reduced-row echelon form, as explained in~\cite{nonuni}. Without loss of generality, we can assume that the control qubits ($C_1$,$C_2$) and the target qubit ($Tar$) of the Toffoli are the first three qubits of the stabilizer tableau. Hence, we have the following stabilizers that have no identities on the first three qubits:

    \begin{table}[!h]
        \centering
        \begin{tabular}{lcccccccc}
               && $C_1$ && $C_2$ && $Tar$ \\
            (1)&& X & $\otimes$ & . & $\otimes$ & . &$\otimes$ &\dots\\
            (2)&& Z & $\otimes$ & . & $\otimes$ & . &$\otimes$ &\dots\\
            (3)&& I & $\otimes$ & X & $\otimes$ & . &$\otimes$ &\dots\\
            (4)&& I & $\otimes$ & Z & $\otimes$ & . &$\otimes$ &\dots\\
            (5)&& I & $\otimes$ & I & $\otimes$ & X &$\otimes$ &\dots\\
            (6)&& I & $\otimes$ & I & $\otimes$ & Z &$\otimes$ &\dots
        \end{tabular}
        \caption{}
        \label{tab:stabilizers_Toffoli}
    \end{table}
    Here, the dots are undefined stabilizers $I,X,Z$. We will show that under conjugation with the Toffoli gate at most three of these Pauli string stabilizers can be converted into a non-Pauli string.

    Note that $Z\otimes I\otimes I$ and $I\otimes Z\otimes I$ and $I\otimes I\otimes X$ are invariant under conjugation with Toffoli. These Pauli products span half of the stabilizers in Table~\ref{tab:stabilizers_Toffoli}. Thus at most half of the stabilizers (1) to (6) are converted into non-Pauli strings under conjugation with Toffoli. 

    So we see that at most three Pauli string stabilizers can be converted into a non-Pauli string by conjugation with Toffoli.
    Hence, we conclude that the resulting quantum state $\ket{\phi}$ must have at least $k-3$ linearly independent Pauli string stabilizers.
\end{proof}

\begin{note}
\label{obs:stabilizer}
Let $\ket{v} = \ket{0}\ket{v_0} + \ket{1}\ket{v_1}$ be a quantum state. Let $P$ be a Pauli string. Assume that resp. $I\otimes P,X\otimes P,Y\otimes P$ or $Z\otimes P$ are a stabilizer of $\ket{v}$. Then we have:
\begin{itemize}
    \item[($IP$)] \begin{equation}
    I\otimes P\ket{v} = \ket{0}P\ket{v_0} + \ket{1}P\ket{v_1} = \ket{0}\ket{v_0} + \ket{1}\ket{v_1}
    \end{equation}
    Hence $P\ket{v_0} = \ket{v_0}$ and $P\ket{v_1} = \ket{v_1}$. So $P$ is a stabilizer of $\ket{v_0}$ and $\ket{v_1}$.

    \item[($XP$)] \begin{equation}
    X\otimes P\ket{v} = \ket{1}P\ket{v_0} + \ket{0}P\ket{v_1} = \ket{0}\ket{v_0} + \ket{1}\ket{v_1}
    \end{equation}
    Hence $P\ket{v_0} = \ket{v_1}$ and $P\ket{v_1} = \ket{v_0}$. So $\ket{v_0}$ and $\ket{v_1}$ are Pauli equivalent.

    \item[($YP$)] \begin{equation}
    Y\otimes P\ket{v} = i\ket{1}P\ket{v_0} -i \ket{0}P\ket{v_1} = \ket{0}\ket{v_0} + \ket{1}\ket{v_1}
    \end{equation}
    Hence $iP\ket{v_0} = \ket{v_1}$ and $-iP\ket{v_1} = \ket{v_0}$. So $\ket{v_0}$ and $\ket{v_1}$ are Pauli equivalent.

    \item[($ZP$)] \begin{equation}
    Z\otimes P\ket{v} = \ket{0}P\ket{v_0} - \ket{1}P\ket{v_1} = \ket{0}\ket{v_0} + \ket{1}\ket{v_1}
    \end{equation}
    Hence $P\ket{v_0} = \ket{v_0}$ and $-P\ket{v_1} = \ket{v_1}$. So $P$ is a stabilizer of $\ket{v_0}$ and $-P$ is a stabilizer of $\ket{v_1}$.
\end{itemize}
\end{note}

\begin{lemma}
\label{lemma:splitCase}
    Let $\ket{v} = \ket{0}\ket{v_0} + \ket{1}\ket{v_1}$ be a LIMDD node with $k$ linearly independent Pauli string stabilizers $P_1,\dots,P_k$. Then (at least) one of the following hold:\\
    (1) $\ket{v_0} \simeq_{Pauli} \ket{v_1}$ and $\ket{v_0}$ has at least $k-1$ linearly independent Pauli string stabilizers.\\
    (2) $\ket{v_0}$ and $\ket{v_1}$ have at least $k$ linearly independent Pauli string stabilizers.
\end{lemma}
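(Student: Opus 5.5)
Let $S = S(\ket{v})$ be the stabilizer group generated by $P_1,\dots,P_k$, and write every element as $Q\otimes P$ with $Q\in\{I,X,Y,Z\}$ acting on the top qubit. Consider the projection $\pi: S\to\{I,X,Y,Z\}/\text{phase}$ onto the first-qubit factor. Since $S$ is abelian, its image is a subgroup of the single-qubit Pauli group modulo phases. That image is one of the trivial group, $\{I,Z\}$, $\{I,X\}$ or $\{I,Y\}$; the whole group $\{I,X,Y,Z\}$ is impossible because $X$ and $Z$ anticommute. I will split into cases by whether the image contains an anti-diagonal element ($X$ or $Y$) or only diagonal elements ($I$, $Z$).

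\textbf{Case A: the image is contained in $\{I,Z\}$.} Every element of $S$ then has the form $I\otimes P$ or $Z\otimes P$. By \Cref{obs:stabilizer}, the map $I\otimes P\mapsto P$ and $Z\otimes P\mapsto P$ sends each element to a stabilizer of $\ket{v_0}$ (when $\ket{v_0}\neq 0$). This map $\sigma_0$ is a group homomorphism, since the first-qubit factors multiply without phase issues for $I,Z$.

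\begin{itemize}
\item $\sigma_0$ is injective. If $\sigma_0(Q\otimes P)=\sigma_0(Q'\otimes P')$, then $P=P'$ and the product is $QQ'\otimes I$ with $QQ'\in\{I,Z\}$. Now $Z\otimes I$ cannot lie in $S$ alongside $I\otimes I$ unless $\ket{v_1}=0$ or $\ket{v_0}=0$. I will treat such a degenerate node separately: there the child $\ket{v_1}$ (resp. $\ket{v_0}$) is the zero node. By the paper's convention zero edges point to the nonzero child, so one checks that $P\ket{v_0}=\ket{v_0}$ for all $k-1$ remaining generators; this falls under (1) or (2) after a short check.
\item Symmetrically, the map $I\otimes P\mapsto P$, $Z\otimes P\mapsto -P$ gives $k$ independent stabilizers of $\ket{v_1}$.
\end{itemize}

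Hence (2) holds.

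\textbf{Case B: the image contains $X$ or $Y$.} Pick an element $g=X\otimes P$ or $Y\otimes P$ in $S$. By \Cref{obs:stabilizer}, $\ket{v_0}$ and $\ket{v_1}$ are Pauli equivalent. The kernel-type subgroup $S'=\{s\in S: \pi(s)\in\{I,Z\}\}$ has index exactly $2$ in $S$, because the image has order $\le 2$ and contains a non-diagonal element, forcing it to be $\{I,X\}$ or $\{I,Y\}$. So $S'$ has $2^{k-1}$ elements. Applying the Case A argument to $S'$ gives $k-1$ independent stabilizers of $\ket{v_0}$, which is conclusion (1).

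The main obstacle is making the injectivity and the phase bookkeeping rigorous: showing that $Z\otimes I$ and $-I$ cannot sneak in, and handling zero children. For the latter I would use that a nonzero vector cannot be stabilized by $-I$, together with the LIMDD convention on zero edges. I expect the degenerate situations to need short ad hoc arguments, and the counting to follow from $\log_2|S'| = k-1$.
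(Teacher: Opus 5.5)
\paragraph{The step that fails.} Your case split rests on the claim that, because $S$ is abelian, the first-qubit image $\pi(S)$ has order at most $2$, the full group $\{I,X,Y,Z\}$ being excluded since $X$ and $Z$ anticommute. This is false. Whether two Pauli strings commute depends on the parity of anticommuting tensor factors across \emph{all} qubits, so commuting strings can have anticommuting first factors. For example, the Bell state $\frac{1}{\sqrt 2}(\ket{00}+\ket{11})$ has
$S=\{I\otimes I,\ X\otimes X,\ -Y\otimes Y,\ Z\otimes Z\}$, and its first-qubit image is all of $\{I,X,Y,Z\}$. The paper's normal form explicitly allows this situation, with $P_1^{(1)}\in\{X,Y\}$ and $P_2^{(1)}=Z$ at the same time. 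Your Case B must therefore cover it. There you justify $[S:S']=2$ by ``the image has order $\le 2$'', and that reason is invalid in this case, so as written the argument has a hole.

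\paragraph{The repair.} The fix is short, because the conclusion still holds. In Case B, $\pi(S)$ is a subgroup of $\{I,X,Y,Z\}/\text{phase}\cong\mathbb{Z}_2^2$ that is not contained in $\{I,Z\}$. So it is $\{I,X\}$, $\{I,Y\}$ or the whole group. In each case $\pi(S)\cap\{I,Z\}$ has index exactly $2$ in $\pi(S)$, hence $S'=\pi^{-1}(\{I,Z\})$ has index $2$ in $S$.

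Next, some $X\otimes P$ or $Y\otimes P$ stabilizes $\ket v\neq 0$, so both children are nonzero. Hence $\pm Z\otimes I\notin S'$, and your map $\sigma_0$ is injective on $S'$. This yields $2^{k-1}$ stabilizers of $\ket{v_0}$. In the Bell example, $S'=\{I\otimes I,\,Z\otimes Z\}\mapsto\{I,Z\}$.

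\paragraph{Comparison with the paper.} With this fix, your homomorphism/index argument is a group-theoretic version of the paper's tableau row-reduction on the first-qubit column. It is in fact more careful than the paper in one respect. The paper only lists the restricted strings $P_i^{(2\dots n)}$ as stabilizers of the children and never checks that they stay independent. They need not be: $Z\otimes I\in S$ restricts to $I$. Your injectivity argument settles this.

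\paragraph{The degenerate case.} For $\ket{v_1}=0$ you should commit to conclusion (1) rather than ``(1) or (2)''. There $S(\ket v)=\{I,Z\}\otimes S(\ket{v_0})$, so in general $\ket{v_0}$ gets only $k-1$ generators. For instance, with $\ket v=\ket{00}$ you have $k=2$ but $\ket{v_0}=\ket 0$ has one generator. Conclusion (1) then holds only through the LIMDD convention that the zero high edge points to $v_0$ itself.
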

\begin{proof}
    We start with the observation that all $P_i$ are of the form $P^{(1)}_i\otimes P^{(2\dots n)}_i$. Note that we can multiply the Pauli string stabilizers of $\ket{v}$ with each other, so we assume without loss of generality that $P^{(1)}_1\in\{I,X,Y,Z\}$ and $P^{(1)}_2\in\{I,Z\}$ and $P^{(1)}_i=I$ for all $i\in\{3,\dots,k\}$.\\

    If $P^{(1)}_1\in\{X,Y\}$, we see by \Cref{obs:stabilizer} that $\ket{v_0}$ and $\ket{v_1}$ are Pauli equivalent. Moreover, by \Cref{obs:stabilizer} we observe that $P^{(2\dots n)}_i$ for $i\in\{2,\dots,k\}$ are Pauli string stabilizers of $\ket{v_0}$ as $P^{(1)}_i\in\{I,Z\}$ for $i\geq2$. Thus we are in case (1).\\

    If $P^{(1)}_1\in\{I,Z\}$, we see by \Cref{obs:stabilizer} that $P^{(2\dots n)}_i$ for $i\in\{1,\dots,k\}$ are Pauli string stabilizers of $\ket{v_0}$. Similarly, $\pm P^{(2\dots n)}_i$ for $i\in\{1,\dots,k\}$ are Pauli string stabilizers of $\ket{v_1}$, where the sign of the stabilizer is $+$ if $P^{(1)}_i=I$ and $-$ if $P^{(1)}_i=Z$. Thus we are in case (2).
\end{proof}

\begin{theorem}%
\label{thm:m_stabs_LIMDD_width}
    Every $n$ qubit quantum state with $m\leq n$ linearly independent Pauli string stabilizers has a LIMDD representation with width at most $2^{n-m}$.
\end{theorem}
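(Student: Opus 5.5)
We prove a stronger, layer-wise statement by induction on the number of qubits $n$. \emph{Claim:} for every $n$-qubit state $\ket{\phi}$ with $m$ linearly independent Pauli string stabilizers, and for every level $j\in\{0,\dots,n\}$, the number of distinct LIMDD nodes at level $j$ (distance $j$ from the root) is at most $2^{\min(j,\,n-m)}$. Taking the maximum over $j$ gives width at most $2^{n-m}$. The base case $n=0$ is the terminal node, and the bound $2^0=1$ holds.

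For the induction step, let $v$ be the root node, so $\ket{v}=\ket{0}\ket{v_0}+\ket{1}\ket{v_1}$ up to the root LIM. Pauli LIMs on the root edge are invertible and conjugate Pauli stabilizers to Pauli stabilizers, so the node $v$ itself also has $m$ independent Pauli string stabilizers. Apply \Cref{lemma:splitCase} to $v$. There are two cases.

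\textbf{Case (1).} Here $\ket{v_0}\simeq_{Pauli}\ket{v_1}$. By canonicity of LIMDD, both edges of $v$ point to the same child node $w$. That node has at least $m-1$ independent stabilizers on $n-1$ qubits. By the induction hypothesis, the sub-LIMDD rooted at $w$ has at most $2^{\min(j-1,(n-1)-(m-1))}$ nodes at its level $j-1$. Since the LIMDD below $v$ is exactly the LIMDD of $w$, level $j$ of the whole diagram has at most $2^{\min(j-1,n-m)}\le 2^{\min(j,n-m)}$ nodes.

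\textbf{Case (2).} Here $v_0$ and $v_1$ each have at least $m$ independent stabilizers on $n-1$ qubits. Level $j$ of the whole diagram is the union of level $j-1$ of the two sub-diagrams, so it has at most $2\cdot 2^{\min(j-1,\,n-1-m)}=2^{\min(j,\,n-m)}$ nodes.

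The delicate step is the bookkeeping of $m$ in Case (2). If $m=n$, the children would need $n$ independent stabilizers on only $n-1$ qubits, which is impossible. We resolve this by first replacing $m$ with $\min(m,n-1)$ for the children; this weakening only enlarges the bound. We then check that when $m=n$ (a stabilizer state), \Cref{lemma:splitCase} must in fact land in Case (1), or else it gives children with $n-1$ stabilizers, yielding the bound $2\cdot 2^{\min(j-1,0)}$. That bound is too weak, so we must argue directly that a stabilizer state always falls into Case (1) whenever its first-qubit Pauli stabilizer component can be $X$ or $Y$, and that otherwise one child is zero (a $Z\otimes P$ stabilizer gives $\ket{v_0}$ or $\ket{v_1}$ equal to $0$), so the root has only one nonzero child.

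We therefore strengthen the case split. If some stabilizer has first factor $\pm Z$ together with $P^{(2\dots n)}_i=I$, then one child is zero and the width does not branch. If no such stabilizer exists, the restrictions of the $m$ stabilizers to qubits $2,\dots,n$ remain independent, so $m\le n-1$ and the counting in Case (2) goes through. We will also count a zero child as contributing no node. This handling of the edge case $m=n$ and of zero children is the main obstacle; the rest is the routine induction above.
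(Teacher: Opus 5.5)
Your proof is correct and follows the paper's argument: induction on $n$, applying \Cref{lemma:splitCase} at the root node, and either descending into a single merged child with $m-1$ independent stabilizers or splitting into two children that each keep $m$ of them. Your three-way split is more careful than the paper's proof: the paper settles $m=n$ by citing Tower-LIMDDs, but its \Cref{lemma:splitCase} case (2) overclaims whenever $\pm Z\otimes I^{\otimes(n-1)}$ stabilizes the node (e.g.\ for $\ket{0}\otimes\ket{\psi}$ the nonzero child has only $m-1$ independent stabilizers), and your zero-child case closes exactly that gap---though you should state explicitly that the surviving child retains $m-1$ independent stabilizers, which you get by choosing the remaining generators with first factor $I$.
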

\begin{proof}
    This proof goes with induction to $n$.\\
    For $n = 1$, this theorem is trivially true for every $m\leq n$.\\
    Assume this theorem is true for all $n<n_0$. We will show that the theorem holds for $n_0$.
    
    Consider any state on $n_0$ qubits with $m\leq n_0$ linearly independent Pauli stabilizers. Note that for $m=n_0$ this theorem is true, as every $n_0$ qubit states with $n_0$ linearly independent Pauli string stabilizers is a stabilizer state \cite[Theorem 1]{aaronson2004improved}, and every stabilizer state can be represented as a Tower-LIMDD, which has width 1~\cite{vinkhuijzen2023limdd}. Thus we assume $m<n_0$. 
    \\Pick a LIMDD representation of this state with minimal LIMDD-width. Consider the top node $\ket{v} = \ket{0}\ket{v_0} + \ket{1}\ket{v_1}$ of this LIMDD. By Lemma~\ref{lemma:splitCase}, we can have two cases for $\ket{v}$.
    
    If (1) holds, we have $\ket{v_0}\simeq_{Pauli}\ket{v_1}$, and hence the LIMDD-width of $\ket{v}$ equals the LIMDD-width of $\ket{v_0}$ as the LIMDD does not split at the top layer. Note that $\ket{v_0}$ is a $n_0-1$ qubit state and by Lemma~\ref{lemma:splitCase} it has at least $m-1$ linearly independent Pauli string stabilizers. So by the induction hypothesis, we know that $\ket{v_0}$ has LIMDD-width at most $2^{(n_0-1)-(m-1)} = 2^{n_0-m}$, which shows the induction hypothesis.

    If (2) holds, we know by Lemma~\ref{lemma:splitCase} that $\ket{v_0}$ and $\ket{v_1}$ have at least $m$ linearly independent Pauli string stabilizers. As $\ket{v_0}$ and $\ket{v_1}$ are $n_0-1$ qubit states, we know by the induction hypothesis that $\ket{v_0}$ and $\ket{v_1}$ have LIMDD-width at most $2^{(n_0-1)-m}$. %
    Hence, as $\ket{v}$ has only two child nodes $\ket{v_0}$ and $\ket{v_1}$, $\ket{v}$ has LIMDD-width at most $2^{(n_0-1)-m}+2^{(n_0-1)-m} = 2^{n_0-m}$, which shows the induction hypothesis.\\
    This concludes the induction step for the $n_0$ qubit state case.    
\end{proof}

\subsection{EVDD width is tractable in H,T and controlled-Pauli count}

\begin{lemma}
    Every $n$ qubit quantum state that can be reached from the $\ket{0}^{\otimes n}$ state with $X,Y,Z,S,T,H,CZ,SWAP$ gates has at least $n-m$ linearly independent Pauli string stabilizers of the form $\pm Z_k$, where $m=\#H$.
    \label{lemma:h-count_EVDD}
\end{lemma}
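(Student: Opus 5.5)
Since the initial state $\ket{0}^{\otimes n}$ is stabilized by all $n$ local operators $Z_1,\dots,Z_n$, we induct over the gates of the circuit. The invariant we maintain is that, after any prefix of the circuit containing $h$ Hadamard gates, the current state is stabilized by signed local operators $\pm Z_j$ for all $j$ in some set $J\subseteq\{1,\dots,n\}$ with $|J|\ge n-h$. Local $Z$ operators on distinct qubits are automatically linearly independent, so this gives the claimed $n-m$ independent stabilizers of the form $\pm Z_k$ once $h=\#H=m$.

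The inductive step handles each gate type in turn, using that if $P$ stabilizes $\ket{\psi}$ then $UPU^\dagger$ stabilizes $U\ket{\psi}$. Three gate groups preserve the invariant without any loss.

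\textbf{Diagonal gates.} For $Z,S,T$ and $CZ$, the gate commutes with every $Z_j$, so $UZ_jU^\dagger=Z_j$ and each $\pm Z_j$ with $j\in J$ remains a stabilizer.

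\textbf{Pauli gates.} For $X$ and $Y$ acting on qubit $k$, we have $XZ_kX^\dagger=-Z_k$ and $YZ_kY^\dagger=-Z_k$, while $Z_j$ is unchanged for $j\neq k$. So $\pm Z_k$ becomes $\mp Z_k$, which is still of the required form. This is exactly why the statement allows a sign.

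\textbf{SWAP.} For $SWAP_{a,b}$, conjugation maps $Z_a\mapsto Z_b$ and $Z_b\mapsto Z_a$. We therefore replace $J$ by its image under the transposition $(a\,b)$, which leaves $|J|$ unchanged.

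\textbf{Hadamard.} For $H$ on qubit $k$, every $Z_j$ with $j\neq k$ is preserved. The stabilizer $\pm Z_k$, if present, becomes $\pm X_k$, so we remove $k$ from $J$. This loses at most one element, consistent with $h$ increasing by one.

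The only point needing care, which I expect to be the main (mild) obstacle, is that the invariant must be stated purely in terms of $Z$-type local stabilizers. We must not try to track the $X_k$ stabilizers created by $H$, since later $T$ or $CZ$ gates would destroy them and complicate the count. Simply discarding $k$ from $J$ at each $H$ avoids this entirely. We should also note that a later $H$ on a qubit $k\notin J$ costs nothing, because the bound is only a lower bound.

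With the invariant in place, induction on the gate count gives $|J|\ge n-\#H$ at the end of the circuit. The resulting operators $\pm Z_j$, $j\in J$, are pairwise independent as Pauli strings, which completes the argument.
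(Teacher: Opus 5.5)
Your proposal is correct and follows the paper's proof. Both arguments track the conjugated images of the initial stabilizers $Z_1,\dots,Z_n$, note that $X,Y,Z,S,T,CZ,SWAP$ send each $\pm Z_j$ to some $\pm Z_{j'}$, and note that each $H$ removes at most one such stabilizer; your explicit index set $J$ just makes the paper's informal counting slightly more precise.
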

\begin{proof}
    Note that the $\ket{0}$ state has stabilizers $Z_1,\dots, Z_n$. So there are $n$ stabilizers of the form $Z_k$.
    
    Applying a gate $G$ will turn stabilizer $P$ into stabilizer $GPG^\dagger$. 
    
    For $G=X,Y,Z,S,T$ we know that $GZG^\dagger = \pm Z$. Moreover, $$CZ_{i,j}(Z\otimes I)CZ_{i,j}^\dagger = (Z\otimes I) \quad \text{and} \quad CZ_{i,j}(I\otimes Z)CZ_{i,j}^\dagger = (I\otimes Z).$$ Also, $$SWAP_{i,j}(Z\otimes I)SWAP_{i,j}^\dagger = (I\otimes Z) \quad \text{and} \quad SWAP_{i,j}(I\otimes Z)SWAP_{i,j}^\dagger = (Z\otimes I).$$
    Thus, any gate $X,Y,Z,S,T,CZ,SWAP$ does not reduce the number of stabilizers of the form $\pm Z_k$.

    For $G=H$ we have $HZH^\dagger = X$. Thus, an $H$ gate can reduce the number of $\pm Z_k$ stabilizers by 1. Hence, applying $m$ $H$-gates can reduce the number of $\pm Z_k$ by $m$. 
    
    So there are $n-m$ stabilizers of the form $\pm Z_k$.
\end{proof}

\begin{lemma}%
    Every $n$ qubit quantum state that can be reached from the $\ket{0}^{\otimes n}$ state with $X,Y,Z,S,T,H,CZ,SWAP$ gates has at least $n-m$ linearly independent Pauli string stabilizers of the form $\pm P_k$, where $m=\min{(\#H,2\cdot \#CZ + \#T)}$.
    \label{lemma:h-cz-t-count_EVDD}
\end{lemma}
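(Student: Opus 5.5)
The plan is to prove the two lower bounds $n-\#H$ and $n-(2\cdot\#CZ+\#T)$ separately and take the stronger one. The first is already available: by \Cref{lemma:h-count_EVDD} the state has at least $n-\#H$ linearly independent stabilizers of the form $\pm Z_k$, and these are local stabilizers. So it suffices to show that there are at least $n-(2\cdot\#CZ+\#T)$ linearly independent local stabilizers $\pm P_k$. Since $\min(\#H,2\cdot\#CZ+\#T)$ equals one of the two counts, whichever bound is larger gives at least $n-m$.

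For the second bound, I would first make the counting of local stabilizers combinatorial. For a state $\ket{\psi}$, let $L(\ket{\psi})$ be the set of qubits $k$ for which some $\pm P_k$ with $P\in\{X,Y,Z\}$ stabilizes $\ket{\psi}$. Two different non-identity Paulis on the same qubit anticommute, so they cannot both stabilize a nonzero state. Also $+P_k$ and $-P_k$ cannot both be stabilizers. Hence each $k\in L$ contributes exactly one local stabilizer up to the group structure. Local stabilizers on distinct qubits are linearly independent, because their supports are disjoint. Therefore the number of linearly independent local stabilizers equals $|L(\ket{\psi})|$. Initially $|L(\ket{0}^{\otimes n})|=n$, witnessed by the stabilizers $Z_1,\dots,Z_n$.

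Next I would do an induction over the gates, tracking only a lower bound on $|L|$. If $P$ stabilizes $\ket{\psi}$, then $GPG^\dagger$ stabilizes $G\ket{\psi}$. Any local stabilizer on a qubit that $G$ does not act on is unchanged, so only qubits touched by $G$ can leave $L$.
\begin{itemize}
\item A single-qubit Clifford gate ($X,Y,Z,H,S$) on qubit $k$ maps a local Pauli on $k$ to $\pm$ a local Pauli on $k$. So $|L|$ does not decrease.
\item A $SWAP_{i,j}$ gate merely relabels qubits $i$ and $j$, so $|L|$ is again unchanged.
\item A $T$ gate on qubit $k$ fixes $\pm Z_k$, so it can remove at most the single qubit $k$ from $L$. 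This happens exactly when $k$'s local stabilizer is $\pm X_k$ or $\pm Y_k$.
\item A $CZ_{i,j}$ gate fixes $Z_i$ and $Z_j$, but maps $X_i\mapsto X_iZ_j$ and similarly for $Y$. So it can remove at most the two qubits $i$ and $j$ from $L$.
\end{itemize}
Gates may also create new local stabilizers. That only helps, since we need only a lower bound. Summing the losses gives $|L|\ge n-\#T-2\cdot\#CZ$.

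The main obstacle I expect is the bookkeeping in the first step: justifying that ``the number of linearly independent local stabilizers'' is well defined and equals $|L|$. This is needed so that losses can be charged per qubit, rather than worrying about products of local stabilizers that form new generators. The anticommutation argument above should settle it. The rest is a routine conjugation table, analogous to \Cref{lemma:h-count_EVDD}.
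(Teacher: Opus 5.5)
Your proof is correct. It uses the same ingredients as the paper (\Cref{lemma:h-count_EVDD} and the conjugation table for $T$ and $CZ$), but organizes them differently.

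The paper gives one interleaved argument. It first invokes \Cref{lemma:h-count_EVDD}, then asserts that the lost $\pm Z_k$ stabilizers have become $\#H$ stabilizers $\pm X_k$. It then argues that $T$ and $CZ$ can destroy only such $X$/$Y$-type local stabilizers. A case split on whether $\pm Z_i$ is a stabilizer is meant to show that a $CZ$ destroys two of them only if both qubits were previously hit by $H$.

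That argument implicitly treats the circuit as ``all $H$ gates first.'' Its count of $\#H$ stabilizers $\pm X_k$ need not hold in general: under $HH$ on one qubit, or $HTH$, no $X$-type local stabilizer survives on that qubit.

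You instead decouple the two bounds. The bound $n-\#H$ comes from the $Z$-type stabilizers alone. The bound $n-(2\cdot\#CZ+\#T)$ comes from a per-gate charging argument on the set $L$, which needs no conditions involving $H$. The identity $\max(n-a,n-b)=n-\min(a,b)$ then finishes.

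Your anticommutation observation also matters. There is at most one local stabilizer per qubit, and local stabilizers on distinct qubits are independent. This makes ``the number of linearly independent local stabilizers'' well defined and lets losses be charged per qubit, which the paper leaves implicit.

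What the paper is reaching for is a finer joint statement: a loss requires both an $H$ and a $T$/$CZ$ on that qubit. That is not needed for the lemma as stated. Your version proves exactly the stated bound, with no assumptions about gate ordering.
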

\begin{proof}
    From \Cref{lemma:h-count_EVDD} we know that after applying $\#H$ $H$-gates there are $n-\#H$ stabilizers of the form $\pm Z_k$. There are also $\#H$ stabilizers of the form $\pm X_k$.%

    The gates $X,Y,Z,S,SWAP$ do not change the number of stabilizers of the form $\pm X_k$ and $\pm Y_k$ as they are invariant under conjugation.

    Conjugation of $\pm X_k$ or $\pm Y_k$ with the $T$-gate can turn such a stabilizer into a non-Pauli stabilizer. E.g. $TXT^\dagger = \frac{X+Y}{\sqrt{2}}$.

    For the $CZ_{i,j}$ gate we have $$CZ_{i,j}(I\otimes X)CZ_{i,j}^\dagger = Z\otimes X \quad \text{and}\quad CZ_{i,j}(I\otimes Y)CZ_{i,j}^\dagger = Z\otimes Y.$$
    In the case that $\pm Z_i$ is a stabilizer, we see that $\pm I\otimes X$ resp. $\pm I\otimes Y$ is again a local stabilizer (because stabilizers can be multiplied). \\
    If $\pm Z_i$ is not a stabilizer, then $Z\otimes X$ resp. $Z\otimes Y$ is not a local stabilizer. But then at least two $H$-gates needed to be applied to have this case. So an $\pm X_k$ or $\pm Y_k$stabilizer can only be turned into a stabilizer that is not of the form $\pm P_k$ if two $H$-gates were applied to the corresponding qubits (modulo $SWAP$ gates).\\
    Similar arguments holds for $CZ_{i,j}(X\otimes I)CZ_{i,j}^\dagger = X\otimes Z$. Therefore, a $CZ$ gate can `destroy' two $\pm P_k$ stabilizers, but only if two $H$-gates were applied to the corresponding qubits (modulo $SWAP$ gates).

    So there are indeed $n-m$ stabilizers of the form $\pm P_k$ where $m=\min{(\#H,2\cdot \#CZ + \#T)}$.
\end{proof}

\begin{theorem}%
    Every $n$ qubit quantum state with $m\leq n$ linearly independent Pauli string stabilizers of the form $\pm P_k$ has a EVDD representation with width at most $2^{n-m}$.
    \label{theorem:EVDD_width_from_number_of_stabilizers}
\end{theorem}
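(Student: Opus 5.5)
We mirror the proof of \Cref{thm:m_stabs_LIMDD_width}: induction on the number of qubits $n$, with a case split at the top node. The statement should be read as having at least $m$ linearly independent local stabilizers $\pm P_k$, giving width at most $2^{n-m}$. The base case $n=1$ is immediate, since a one-qubit EVDD has width $1 = 2^{n-m}$ for every $m\le 1$.

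For the induction step, take an $n_0$-qubit state $\ket{v}=\ket{0}\ket{v_0}+\ket{1}\ket{v_1}$ with $m$ independent local stabilizers. Since local stabilizers on different qubits commute and act on distinct tensor factors, I will first put the generating set in a normal form. There is at most one generator acting on qubit $1$, because two independent local stabilizers on the same qubit would have to anticommute unless they coincide up to sign. All other generators act on qubits $2,\dots,n_0$ and have the form $I\otimes P'_k$ with $P'_k$ local.

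The case split is on this top-qubit generator.

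\emph{No local stabilizer on qubit $1$.} Then all $m$ generators are $I\otimes P'_k$. By case $(IP)$ of \Cref{obs:stabilizer}, each $P'_k$ stabilizes both $\ket{v_0}$ and $\ket{v_1}$. These remain local and independent on $n_0-1$ qubits. By induction each child has width at most $2^{n_0-1-m}$, so the total width is at most $2^{n_0-m}$.

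\emph{A stabilizer $\pm Z\otimes I$ on qubit $1$.} Then $\ket{v}=\ket{0}\ket{v_0}$ or $\ket{1}\ket{v_1}$, so one child is the zero vector, which costs no width. The remaining $m-1$ generators stabilize the nonzero child. Induction gives width at most $2^{(n_0-1)-(m-1)}=2^{n_0-m}$.

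\emph{A stabilizer $\pm X\otimes I$ or $\pm Y\otimes I$ on qubit $1$.} By cases $(XP)$ and $(YP)$ of \Cref{obs:stabilizer} with $P=I$, we get $\ket{v_1}=\lambda\ket{v_0}$ for some $\lambda\in\{\pm1,\pm i\}$. The two children are therefore equal up to a scalar and merge into a single EVDD node; this is exactly where EVDD, without Pauli LIMs, still merges. The other $m-1$ generators stabilize $\ket{v_0}$, and induction again gives $2^{n_0-m}$.

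The remaining issue is to justify that the width count over layers is consistent. The induction bounds the number of distinct nodes at each level below the root, and the union of the nodes of two sub-EVDDs has per-level width at most the sum of their widths. This also covers shared nodes between children, since sharing only lowers the count. I also need the canonical EVDD, under both `low' and `L2' rules, to be no larger than this construction. That holds because canonical EVDDs merge all nodes equal up to a scalar, so the counted nodes are upper bounds on the canonical node count per layer.

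The main obstacle is the normal-form step: showing that a generating set of local stabilizers can be chosen so that at most one generator touches qubit $1$, while the rest restrict to independent local stabilizers of the children. I would argue this directly from commutation. The independence of the restricted generators follows because the $P'_k$ sit on distinct qubits, since two on the same qubit would be dependent or anticommuting.
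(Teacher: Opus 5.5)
Your proof is correct, but it is organized differently from the paper's. You reuse the induction-on-qubits template of \Cref{thm:m_stabs_LIMDD_width}: peel off the top qubit, split on which local stabilizer (if any) acts there, and recurse, using case $(IP)$ to push the remaining $I\otimes P'_k$ down to both children.

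The paper instead counts layer by layer without induction. A stabilizer $\pm P_k$ acts trivially on qubits $1,\dots,k-1$, so it stabilizes every node at level $k$, since each such node is a cofactor of the state up to a scalar. Hence each node at level $k$ has its two children equal up to a scalar, or one child zero. The width therefore cannot grow from level $k$ to $k+1$, and can at most double across each of the other $n-m$ levels.

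The paper's route has three advantages:
\begin{itemize}
\item it is shorter;
\item it works directly on the canonical EVDD, whose level-$k$ nodes are exactly the distinct-up-to-scalar nonzero cofactors, so your bookkeeping about merging sub-diagrams and canonicity becomes unnecessary;
\item it immediately gives the level-resolved bound: the width at level $j$ is at most $2$ raised to the number of stabilizer-free levels above $j$.
\end{itemize}
Your route, in exchange, makes explicit why a stabilizer of the whole state stabilizes the lower-level sub-vectors. The paper uses this step silently when it writes $\ket{v^{(k)}}=\pm P_k\ket{v^{(k)}}$. Your route also keeps the EVDD and LIMDD arguments parallel.

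Finally, the normal-form step you single out as the main obstacle is immediate. Two distinct local stabilizers on the same qubit either anticommute or multiply to $-I$, so a nonzero state has at most one per qubit. The $m$ independent ones therefore sit on $m$ distinct qubits, and no reduction of the generating set as in \Cref{lemma:splitCase} is needed.
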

\begin{proof}
    The main idea of the proof is that for every stabilizer $\pm P_k$ the EVDD nodes at level $k$ have at most one output node. Therefore, the number of nodes cannot increase between level $k$ and $k+1$. As the number of child nodes is at most two for every node, the number of nodes can at most double between layers $l$ and $l+1$ without a stabilizer $\pm P_l$. Thus, the bound on the EVDD width is $2^{n-m}$.
    
    Now, we will formalize this as follows.
    
    Every node in the EVDD at level $k$ represents a vector $$\ket{v^{(k)}} = \alpha_0\ket{0}\ket{v^{(k+1)}_0} + \alpha_1\ket{1}\ket{v^{(k+1)}_1}.$$ Here $\ket{v^{(k+1)}_0}$ and $\ket{v^{(k+1)}_1}$ represent the child nodes of $\ket{v^{(k)}}$ and $\alpha_0$ and $\alpha_1$ the edge labels.

    If $\pm P_k$ is a stabilizer, then we have $\ket{v^{(k)}} = \pm P_k \ket{v^{(k)}}$. Hence, $$\alpha_0\ket{0}\ket{v^{(k+1)}_0} + \alpha_1\ket{1}\ket{v^{(k+1)}_1} = \pm \alpha_0 P\ket{0}\ket{v^{(k+1)}_0} \pm \alpha_1 P\ket{1}\ket{v^{(k+1)}_1}.$$

    For $P = X$ we thus have $\ket{v^{(k+1)}_0} = \pm \frac{\alpha_1}{\alpha_0}\ket{v^{(k+1)}_1}$.\\
    For $P = Y$ we thus have $\ket{v^{(k+1)}_0} = \pm i \frac{\alpha_1}{\alpha_0}\ket{v^{(k+1)}_1}$.\\
    For $P = Z$ we have either $\ket{v^{(k+1)}_0} = -\ket{v^{(k+1)}_0}$ or $\ket{v^{(k+1)}_1} = - \ket{v^{(k+1)}_1}$, implying respectively $\ket{v^{(k+1)}_0} = 0$ and $\ket{v^{(k+1)}_1} = 0$.

    Thus in all cases the child nodes $\ket{v^{(k+1)}_0}$ and $\ket{v^{(k+1)}_1}$ are the same up to multiplication with a complex scalar. Hence, these child nodes can be merged in the EVDD. Thus in all cases every node at level $k$ has at most one child node if $\pm P_k$ is a stabilizer.
\end{proof}

\end{document}